\documentclass[11pt,letterpaper]{article}

\usepackage{amsmath,amsthm,amssymb,latexsym,amsfonts,mathabx,amscd,bbm,a4wide}
\usepackage[top=1.3in,bottom=1.3in,left=1in,right=1in]{geometry}
\usepackage{dsfont}
\usepackage{mathrsfs}
\usepackage{setspace}
\usepackage{hyperref}
\usepackage{mathtools} 
\usepackage{enumerate}
\usepackage{latexsym,amsfonts,bbm, mathabx}
\newcommand{\ud}{\, \mathrm{d}}
\newcommand{\E}{\mathds{E}}
\newcommand{\llargest}{l_{N,>}^{(1),\omega}}
\newcommand{\Ilargest}{I_{N,>}^{(1),\omega}}

\newcommand{\llargestzwei}{l_{N,>}^{(2),\omega}}
\newcommand{\Ilargestzwei}{I_{N,>}^{(2),\omega}}
\newcommand{\llargestN}[1]{l_{N,>}^{(#1),\omega}}

\newcommand{\IlargestN}[1]{I_{N,>}^{(#1),\omega}}
\newcommand{\tildeIlargestN}[1]{\widetilde{I}_{N,>}^{(#1),\omega}}
\newcommand{\llargestk}[1]{\tilde l_{k,>}^{(#1),\omega}}

\newcommand{\e}{\mathrm{e}}
\newcommand{\laplace}{\mathop{}\!\mathbin\bigtriangleup}

\newcommand{\cutofffunctioneins}[2]{\chi_{#1, #2}}

\newcommand{\WWS}{\mathcal S}

\newcommand{\mugeneral}{\hat{\mu}}
\newcommand{\PoissonrandommeasureRomega}{\mathcal M_{\nu}^{\omega}}
\newcommand{\PoissonrandommeasureR}{\mathcal M_{\nu}}

\newcommand{\RP}{V}
\newcommand{\RPN}{\RP_N}
\newcommand{\CSU}{\mathcal C_{u}}
\newcommand{\CSUright}{\CSU^{\mathrm{right}}}
\newcommand{\CSUleft}{\CSU^{\mathrm{left}}}

\newcommand{\constantENNleNinfty}{C_1}
\numberwithin{equation}{section}
\newtheorem{theorem}{Theorem}[section]
\newtheorem{lemma}[theorem]{Lemma}
\newtheorem{prop}[theorem]{Proposition}
\newtheorem{cor}[theorem]{Corollary}
\newtheorem{remark}[theorem]{Remark}

\theoremstyle{definition}
\newtheorem{definition}[theorem]{Definition}

\allowdisplaybreaks

\begin{document}

    \frenchspacing

    \allowdisplaybreaks[2]

    \thispagestyle{empty}

    \vspace*{1cm}

    \begin{center}
	
        {\Large \bf On Bose--Einstein condensation in one-dimensional noninteracting Bose gases in the presence of soft Poisson obstacles} \\

        \vspace*{2cm}
	
        {\large Maximilian~Pechmann \footnote{E-mail address: {\tt mpechmann@utk.edu}}}
	
        \vspace*{5mm}
		
        Department of Mathematics\\
        University of Tennessee\\
        Knoxville, TN 37996\\
        USA
	
        \vspace*{5mm}
	
    \end{center}

    \begin{abstract}
        \noindent We study Bose--Einstein condensation (BEC) in one-dimensional noninteracting Bose gases in Poisson random potentials on $\mathds R$ with single-site potentials that are nonnegative, compactly supported, and bounded measurable functions in the grand-canonical ensemble at positive temperatures in the thermodynamic limit. For particle densities that are larger than a critical one, we prove the following: With arbitrarily high probability when choosing the fixed strength of the random potential sufficiently large, BEC where only the ground state is macroscopically occupied occurs. If the strength of the Poisson random potential converges to infinity in a certain sense but arbitrarily slowly, then this kind of BEC occurs in probability and in the $r$th mean, $r \ge 1$. Furthermore, in Poisson random potentials of any fixed strength an arbitrarily high probability for type-I g-BEC is also obtained by allowing sufficiently many one-particle states to be macroscopically occupied.
    \end{abstract}

    \section{Introduction}
        (Conventional) \emph{Bose--Einstein condensation} (BEC) is a macroscopic occupation of a one-particle state and occurs, under certain circumstances, in bosonic particle systems. In the case of noninteracting Bose gases (bosonic particle systems without interparticle interaction), a necessary but not sufficient requirement for the occurrence of BEC is the presence of \emph{generalized Bose--Einstein condensation} (g-BEC). This broader definition only requires a macroscopic occupation of an arbitrarily small energy band of one-particle states \cite{LanWil79,van1982generalized, van1983condensation, van1986general2, van1986general, lenoble2004bose}. Depending on the quantity of macroscopically occupied one-particle states in the condensate one then distinguishes three types: Type-I g-BEC is said to occur if the number of macroscopically occupied one-particle states is finite but at least one. If there are infinitely many macroscopically occupied one-particle states, the condensation is said to be of type II. Lastly, a generalized condensate in which none of the one-particle states are macroscopically occupied is called a type-III g-BEC. Showing the occurrence of g-BEC is easier than the occurrence of BEC and involves verifying that a certain critical density is finite as a main step. Proving BEC or, similarly, determining the type of g-BEC, however, seems to require fairly accurate knowledge about the gaps between the eigenvalues of the corresponding one-particle (random) Schrödinger operator at the bottom of the spectrum \cite{van1982generalized,KPSConditionTypeOneBEC2020}, which is often difficult to obtain. Note that the definition of type-I g-BEC is more restrictive than our definition of BEC.
 
        Random potentials are known to be able to trigger and enhance the occurrence of g-BEC in noninteracting Bose gases, see, e.g., \cite{lenoble2004bose} and \cite[Appendix A]{KPS182}. Thus, the study of Bose gases in such potentials is of great interest. This holds especially true for Poisson random potentials as they are commonly used to model systems with structural disorder.
        Although it is believed that repulsive interactions between the particles eventually need to be taken into account, exploring noninteracting Bose gases with respect to BEC is nevertheless an important first step and of independent interest \cite{lenoble2004bose} (see also \cite{stolz1995localization}, \cite{germinet2005localization}, \cite{klein2007localization}, \cite{seiringer2016decay}).

        The \emph{Kac--Luttinger conjecture} presumes that g-BEC in noninteracting Bose gases in Poisson random potentials that have compactly supported, nonnegative measurable functions as their single-site potentials is generally of type I or II, that is, BEC occurs \cite{kac1973bose, kac1974bose, LenobleZagrebnovLuttingerSy}. To the best of our knowledge, however, the type of g-BEC in random potentials at positive temperatures has been so far rigorously determined only for one-dimensional Poisson random potentials whose single-site potentials consists of the Dirac delta function $\delta$. The \emph{Luttinger--Sy model} \cite{luttinger1973bose, luttinger1973low} has a Poisson random potential on $\mathds R$ with, informally, a single-site potential of the form $\gamma \delta$ where $\gamma = \infty$, that is, one has Dirichlet boundary conditions at all atoms of each realization of the Poisson random measure. This model is easier to explore, because the singularity of this random potential eliminates quantum tunneling effects \cite[p. 3]{jaeck2010nature}. It has been proved that in this Luttinger--Sy model a type-I g-BEC, where only the ground state of the corresponding one-particle random Schrödinger operator is macroscopically occupied, occurs in probability and in the $r$th mean, $r \ge 1$, in \cite{KPSConditionTypeOneBEC2020}, and in a slightly different setting $\mathds P$-almost surely \cite{LenobleZagrebnovLuttingerSy}, if and only if the particle density is larger than a critical density. In addition, it has been shown that in the Luttinger--Sy model with finite interaction strength, that is, in the case of a Poisson random potential on $\mathds R$ with, informally, a single-site potential of the form $\gamma \delta$ with $\gamma > 0$, a type-I g-BEC occurs with probability arbitrarily close to one for particle densities larger than a critical density as long as one allows sufficiently many one-particle states to be macroscopically occupied \cite{KPS182}. Despite their singularities, Poisson random potentials on $\mathds R$ with such single-site potentials, and in particular the infinite potential strength of the Luttinger--Sy model, are believed to be good approximations with respect to the occurrence of BEC for noninteracting Bose gases in more realistic Poisson random potentials on $\mathds R$, such as ones that have nonnegative, bounded functions as their single-site potentials \cite[p. 8]{LenobleZagrebnovLuttingerSy}, \cite[p. 14]{VeniaminovKlopp14}.

        In this work, we study one-dimensional noninteracting Bose gases in Poisson random potentials on $\mathds R$ with \emph{soft obstacles}, that is, with single-site potentials that are nonnegative, compactly supported, and bounded measurable functions with respect to the occurrence of BEC in the thermodynamic limit and in the grand-canonical ensemble at positive temperatures. For this model, we confirm the Kac--Luttinger conjecture in the following sense. Under the assumption that the particle density is larger than a finite critical density, we prove: A type-I g-BEC in which only the ground state is macroscopically occupied occurs with arbitrarily high probability if the random potential has a, in a certain sense, sufficiently large strength. The probability for this kind of condensation converges to one and, consequently, such a type-I g-BEC occurs in probability and in the $r$th mean, $r \ge 1$, if the strength of the Poisson random potential converges in a certain sense but arbitrarily slowly to infinity in the thermodynamic limit. One also obtains an arbitrarily high probability for type-I g-BEC in the case of a Poisson random potential of any fixed strength when allowing sufficiently many one-particle states to be macroscopically occupied. As a side note we mention that the same results hold true for the Luttinger--Sy model with finite interaction strength, and we thus confirm and extend the results in \cite{KPS182} while using a different, more direct method.

        The outline of this paper is as follows. In Section~\ref{SecPrelim}, we introduce our model and collect some of its well-known properties that we use in the rest of this work. We also define the different kind of condensations that we have mentioned so far. Next, we prove important, lesser known facts regarding the Poisson point process on $\mathds R$ in Section~\ref{Porperties of the Poisson Point process}. In Section~\ref{section energy bounds} we derive an upper bound for the first and lower bounds for higher eigenvalues of the corresponding one-particle random Schrödinger operator. Using facts from Section~\ref{Porperties of the Poisson Point process} and~\ref{section energy bounds}, we formulate and prove our main results, namely Corollaries~\ref{Corollary 1},~\ref{Corollary 2}, and~\ref{Corollary 3}, in Section~\ref{secMainResults}. Lastly, in the Appendix~\ref{secAppendix} we provide details regarding the almost sure occurrence of g-BEC if one considers a Poisson random potential with a strength that converges to infinity.

    \section{Preliminaries and model}\label{SecPrelim}

        Let $(\Omega, \mathscr A, \mathds P)$ be a probability space and $\nu > 0$ a constant. We firstly state the definition of a Poisson random potential $V$ on $\mathds R$ \cite{pastur1992spectra, leschke2003survey}.

        A random measure $\PoissonrandommeasureR$ on $\mathds R$ is called a Poisson random measure on $\mathds R$ if the following two points hold: For any $n \in \mathds N$ and any pairwise disjoint Borel sets $B_1, \ldots, B_n \subset \mathds R$ the random variables $\PoissonrandommeasureR(B_1), \ldots, \PoissonrandommeasureR(B_n)$ are independent. For any bounded Borel set $B \subset \mathds R$ and any $m \in \mathds N_0$, the random variable $\PoissonrandommeasureR(B)$ has the distribution
        \begin{equation}
            \mathds P \big( \PoissonrandommeasureRomega(B) = m \big) = \e^{- \nu \lambda(B)} \dfrac{ (\nu \lambda(B))^m}{m!} \ ,
        \end{equation}
        $\lambda$ being the Lebesgue-measure on $\mathds R$.
	
        Let $\PoissonrandommeasureR$ be a Poisson random measure on $\mathds R$. Suppose that $u : \mathds R \to \mathds R$ is a nonrandom function that is nonzero on a nonempty and open subset of $\mathds R$ and meets the Birman--Solomyak requirement $\sum_{m \in \mathds Z} ( \, \int_{[m - 1/2, m + 1/2]} |u(x)|^2 \, \mathrm{d} x )^{1/2} < \infty$. Then we call
        \begin{equation} \label{def RP}
            V : \Omega \times \mathds R \to \mathds R, \ (\omega, x) \mapsto V(\omega, x) := \int\limits_{\mathds R} u(x-y) \, \, \PoissonrandommeasureRomega( \mathrm{d} y)
        \end{equation}
        a \emph{Poisson random potential} on $\mathds R$ and the function $u$ its \emph{single-site potential}.
        A Poisson random potential on $\mathds R$ is metrically transitive. The constant $\nu$ is called the rate of the Poisson random potential. We call a Poisson random potential \emph{positive} if $u \ge 0$.
 
        In this work, we consider Poisson random potentials $V$ on $\mathds R$ with single-site potentials $u$ that are nonnegative and bounded measurable functions with compact support $[-\CSUleft, \CSUright]$ where $\CSUleft, \CSUright > 0$. Furthermore, we define
        \begin{align}
            \CSU & := \CSUright + \CSU^{\text{left}} \label{Def CSU} \ .
        \end{align}
        We also define the \emph{strength} of $V$ as
        \begin{align}
            \WWS := \min\Big\{ \int_{0}^{\CSUright} u(x) \ud x, \int_{-\CSUleft}^{0} u(x) \ud x \Big\} \ .
        \end{align}
 
        In addition, we consider sequences of Poisson random potentials $(V_N)_{N \in \mathds N}$ on $\mathds R$,
        \begin{equation} \label{def RP converges to infty}
            V_N : \Omega \times \mathds R \to \mathds R, \ (\omega, x) \mapsto V_N(\omega, x) := \int\limits_{\mathds R} u_N(x-y) \, \, \PoissonrandommeasureRomega( \mathrm{d} y)
        \end{equation}
        for all $N \in \mathds N$, where
        \begin{align} \label{def RP converges to infty zwei}
            u_N(x) := \WWS_N u(x)
        \end{align}
        for all $N \in \mathds N$ and all $x \in \mathds R$. We assume that $u$ is a nonnegative, bounded measurable function with compact support $[-\CSUleft, \CSUright]$, $\CSUleft, \CSUright > 0$ and that $(\WWS_N)_{N \in \mathds N} \subset (0,\infty) $ is monotonically increasing and converges to infinity.
        We call $u$ the single-site potential and $(\WWS_N)_{N \in \mathds N}$ the strength of $(V_N)_{N \in \mathds N}$. 
  
        \begin{remark} \label{remark RP}
            We refer to the case \eqref{def RP} as a \emph{Poisson random potential of fixed strength} and the case \eqref{def RP converges to infty} as a \emph{Poisson random potential with a strength that converges to infinity}. To simplify the notation, we also write $(V_N)_{N \in \mathds N}$ for a Poisson random potential $V$ with a fixed strength $\WWS$ and single-site potential $u$, and mean $V_N := V$ and $u_N := u$ for all $\omega \in \Omega$ and $N \in \mathds N$ in this case. In either case, we assume that the single-site potential $u$ is a nonnegative, compactly supported, and bounded measurable function throughout this work.
        \end{remark}

        We study one-dimensional noninteracting Bose gases in (sequences of) positive Poisson random potentials $(V_N)_{N \in \mathds N}$ on $\mathds R$ in the grand-canonical ensemble at an arbitrary, fixed, positive temperature $T>0$ in the thermodynamic limit. Regardless of whether we assume a Poisson random potential of fixed strength or a Poisson random potential with a strength that converges to infinity, we consequently consider a sequence of systems in which $N \in \mathds N$ bosons are confined in the intervals
        \begin{equation}
            \Lambda_N = (-L_N/2,L_N/2)
        \end{equation}
        with $N/L_N = \rho$, where $\rho > 0$ is a constant and called the \emph{particle density}. We impose \emph{Dirichlet boundary conditions} on $\Lambda_N$. Note that when exploring BEC in noninteracting Bose gases, it is sufficient to obtain properties of the corresponding sequence of one-particle random Schrödinger operators
        \begin{align}\label{Definition RSO}
            H_{N,\RPN}: \Omega \to \{\text{linear operators on } L^2(\Lambda_N)\} , \ \omega \mapsto H_{N,\RPN}^{\omega} := (-\laplace + V_N(\omega))_{\Lambda_N}^{\text{D}} \ ,
        \end{align}
        $N \in \mathds N$, see, e.g., Proposition~\ref{Proposition 5.3}, \cite{KPSConditionTypeOneBEC2020}, or \cite{lenoble2004bose}.
        
        Here, $H_{N,\RPN}^{\omega}$ is the linear operator on $\mathrm{L}^2(\Lambda_N$) that is uniquely defined as the self-adjoint extension of the operator
        \begin{align}
            \mathrm{L}^2(\Lambda_N) \to \mathrm{L}^2(\Lambda_N), \quad \psi(x) \mapsto - \dfrac{\mathrm{d}^2}{\mathrm{d} x^2} \psi(x) + V_N(\omega,x) \psi(x)
        \end{align}
        with domain $C_c^{\infty}(\Lambda_N)$, for $\mathds P$-almost all $\omega \in \Omega$ and all $N \in \mathds N$ \cite[Theorem 5.1]{pastur1992spectra}. $\mathds P$-almost surely and for all $N \in \mathds N$, properties of the operator $H_{N,\RPN}^{\omega}$ include having a purely discrete spectrum, that is, a spectrum that consists only of isolated eigenvalues of finite multiplicities. We denote the sequence of these eigenvalues, written in ascending order and each eigenvalue repeated according to its multiplicity, by $(E_{N,\RPN}^{j,\omega})_{j \in \mathds N} \subset (0,\infty)$, and the associated sequence of eigenfunctions by $(\varphi_{N,\RPN}^{j,\omega})_{j \in \mathds N}$.
        Due to the purely discrete spectrum, we can define the function
        \begin{align}
            \mathcal N_{N,\RPN}^{\mathrm{I}, \omega} : \mathds R \to [0,\infty), \quad E \mapsto \mathcal N_{N,\RPN}^{\mathrm{I},\omega}(E) := | \{ j \in \mathds N : E_{N,\RPN}^{j,\omega} < E \}|\ ,
        \end{align}
        which is called the integrated density of states and is a measure-defining function. By $\mathcal N_{N,\RPN}^{\omega}$ we denote the measure that is uniquely defined by the condition $\mathcal N_{N,\RPN}^{\omega}([a,b))= \mathcal N_{N,\RPN}^{\mathrm{I}, \omega}(b) - \mathcal N_{N,\RPN}^{\mathrm{I}, \omega}(a)$ for all $a < b$. We call $\mathcal N_{N,\RPN}^{\omega}$ the \emph{density of states}.

        In the case of a Poisson random potential $V$ of fixed strength, the sequence $(\mathcal N_{N,V}^{\omega})_{N \in \mathds N}$ $\mathds P$-almost surely converges in the vague sense to a nonrandom measure $\mathcal N_{\infty,V}$, which we call the limiting density of states. The nonrandom function
        \begin{align}
            \mathcal N_{\infty,V}^{\mathrm{I}} : \mathds R \to [0,\infty), \quad E \mapsto \mathcal N_{\infty,V}^{\mathrm{I}}(E) := \mathcal N_{\infty,V} ((-\infty,E))
        \end{align}
        is called the limiting integrated density of states. The limiting integrated density of states has a \emph{Lifshitz tails property}, and, in particular, obeys \cite[Theorem 10.2]{pastur1992spectra}, \cite[Theorem 4.1]{leschke2003survey}
        \begin{align} \label{Lifshitz tail}
            \mathcal N_{\infty,V}^{\mathrm{I}}(E) = \exp \left( -\nu \pi E^{-1/2} \big[1 + o(1) \big] \right) \qquad \text{ for } E \searrow 0 \ .
        \end{align}
        Moreover, there is a constant $\constantENNleNinfty > 0$ and a constant $\widetilde E > 0$ such that for all $0 < E < \widetilde E$ and all $N \in \mathds N$
        \begin{align} \label{ENN le c Ninfty}
            \E[\mathcal N_{N,\RP}^{\mathrm{I},\omega}(E) ] \le \constantENNleNinfty \mathcal N_{\infty,\RP}^{\mathrm{I}}(E) \ ,
        \end{align}
        and if $(\mathcal N_{N,\RP}^{\mathrm{I},\omega}(E))_{N \in \mathds N}$ $\mathds P$-almost surely converges to $\mathcal N_{\infty,\RP}^{\mathrm{I}}(E)$ for all $E \in \mathds R$ then one can choose $\constantENNleNinfty = 1$, see \cite[Theorem 5.25]{pastur1992spectra} or, for more details, \cite[Theorem 4.4.1]{pechmanndiss}.

        Given a system of $N \in \mathds N$ noninteracting bosons described by a one-particle random Schrödinger operator $H_{N,\RPN}$, the (mean) \emph{occupation number} $n_{N,\RPN}^{j,\omega}$ of a one-particle eigenstate $\varphi^{j,\omega}_{N,\RPN}$ of $H_{N,\RPN}^{\omega}$ with corresponding eigenvalue $E_{N,\RPN}^{j,\omega}$, that is, the number of particles occupying $\varphi^{j,\omega}_{N,\RPN}$ is, in the grand-canonical ensemble at inverse temperature $\beta := (k_B T)^{-1} \in (0,\infty)$, $k_B>0$ being the Boltzmann constant, given by
        \begin{align} \label{occupation number}
            n_{N,\RPN}^{j,\omega} := \Big( \mathrm{e}^{\beta (E_{N,\RPN}^{j,\omega} - \mu_{N,\RPN}^{\omega})} - 1 \Big)^{-1}
        \end{align}
        for $\mathds{P}$-almost all $\omega \in \Omega$. The chemical potential $\mu_{N,\RPN}^{\omega} \in (-\infty,E_{N,\RPN}^{1,\omega})$ is, for $\mathds{P}$-almost all $\omega \in \Omega$ and for all $N \in \mathds{N}$, uniquely determined by the equation
        \begin{align} \label{Gleichung Bedingung mu aequivalent}
            \int\limits_{\mathds{R}} \Big( \mathrm{e}^{\beta (E - \mu_{N,\RPN}^{\omega})} - 1 \Big)^{-1} \, \mathcal N_{N,\RPN}^{\omega}(\mathrm{d} E) =\rho \ .
        \end{align}
        Note that $\int_{\mathds{R}} ( \mathrm{e}^{\beta (E - \mu_{N,\RPN}^{\omega})} - 1 )^{-1} \, \mathcal N_{N,\RPN}^{\omega}(\mathrm{d} E) = L_N^{-1} \sum_{j \in \mathds N} n_{N,\RPN}^{j,\omega}$. Also, $n_N^{i,\omega} \ge n_N^{j,\omega}$ whenever $i \le j$.

        The \emph{critical density} of a noninteracting Bose gas in a Poisson random potential $V$ of fixed strength is defined as
        \begin{align} \label{Definition critical density}
            \rho_{c,\RP} := \int\limits_{\mathds R} \mathcal{B}(E) \, \mathcal N_{\infty,V}(\mathrm{d} E) \ ,
        \end{align}
        which is a finite number. Here, we have introduced the function
        \begin{align}
            \mathcal{B}: \mathds R \to \mathds [0,\infty), \ E \mapsto \mathcal{B}(E) := \big( \mathrm{e}^{\beta E} - 1 \big)^{-1} \, \mathds 1_{(0,\infty)}(E)
        \end{align}
        in order to simplify the notation. Moreover, we write
        \begin{align} \label{def rho0}
            \rho_{0,V} := \max\{\rho - \rho_{c,\RP},0\}
        \end{align}
        for convenience.

        For $\mathds P$-almost all $\omega \in \Omega$ the atoms $\{ \hat x_j^{\omega}\}_j$ of $\PoissonrandommeasureRomega$ can be labeled by $\mathds Z$ and such that
        \begin{equation}
            \ldots < \hat x_{-1}^{\omega} < \hat x_{0}^{\omega} < 0 < \hat x_1^{\omega} < \hat x_2^{\omega} < \ldots \ .
        \end{equation}
        By $\kappa_N^{\omega}$ we refer to the number of atoms of $\PoissonrandommeasureRomega$ within $\Lambda_N$,
        \begin{align}
            \kappa_N^{\omega} := | \{ j \in \mathds Z : \hat x_j^{\omega} \in \Lambda_N\}| \ .
        \end{align}
        We $\mathds P$-almost surely have
        \begin{align} \label{kappa LN to nu}
            \lim\limits_{N \to \infty} \dfrac{\kappa_N^{\omega}}{L_N} = \nu \ .
        \end{align}
        Moreover,
        \begin{align} \label{kappaNomega between 1 minus epsilon L and 1 plus epsilon L}
            \lim\limits_{N \to \infty} \mathds P \big( (1-\varepsilon) \nu L_N \le \kappa_N^{\omega} \le (1 + \varepsilon) \nu L_N \big) = 1 \ ,
        \end{align}
        for all $\varepsilon > 0$, see, e.g., \cite[Section 3.3.2]{SeiYngZag12}.
    
        The set $\{ \hat l^{j}\}_{j\in \mathds Z \backslash\{0\}}$ where
        \begin{align}\label{definition hat lj}
        \hat l^{j} : \Omega \to \mathds R, \ \omega \mapsto \hat l^{j,\omega} := |(\hat x_j^{\omega}, \hat x_{j+1}^{\omega})|
        \end{align}
        is a set of independent and identically distributed random variables with common probability density $\nu \e^{-\nu l} \mathds 1_{(0,\infty)}(l)$, $l \in \mathds R$. We define
        \begin{align}\label{definition INj}
            I_N^{j,\omega} := (\hat x_{j}^{\omega}, \hat x_{j+1}^{\omega}) \cap \Lambda_N
        \end{align}
        and
        \begin{align} \label{definition lNj}
            l_N^{j,\omega} := |I_N^{j,\omega}|
        \end{align}
        for all $j \in \mathds Z$, all $N \in \mathds N$, and $\mathds P$-almost all $\omega \in \Omega$. For all $N \in \mathds N$, there are $\mathds P$-almost surely only finitely many atoms of $\PoissonrandommeasureRomega$ within $\Lambda_N$ and, consequently, only finitely many nonempty intervals in $(I_N^{j,\omega})_{j\in \mathds Z}$. For each $N \in \mathds N$, we order these intervals according to their lengths, and denote the largest interval by $\Ilargest$, the second largest by $I_{N,>}^{(2),\omega}$, etc. We write $\llargestN{j}$ for the length of $I_{N,>}^{(j),\omega}$, $j \in \mathds Z$.

        \begin{remark} \label{Remark typical omega}
            When we write $\widetilde \Omega$ in the rest of this work, then we mean the subset of $\Omega$ with measure $\mathds P(\widetilde \Omega) = 1$ such that for all $\omega \in \widetilde \Omega$ the $\mathds P$-almost sure properties we have mentioned so far hold. 
        \end{remark}

        Next, we introduce the relevant definitions of and an important fact regarding BEC. We consider a system of noninteracting bosons described by a sequence of one-particle random Schrödinger operators $(H_{N,\RPN})_{N \in \mathds N}$.

        \begin{definition} \label{Definition gBEC}
            We say that \emph{generalized BEC (g-BEC) $\mathds P$-almost surely occurs} if and only if
            \begin{align*}
                \mathds P \Big( \lim\limits_{\varepsilon \searrow 0} \limsup\limits_{N \to \infty} \dfrac{1}{N} \sum\limits_{j \in \mathds N : E_{N,\RPN}^{j,\omega} \le \varepsilon} n_{N,\RPN}^{j,\omega} > 0 \Big) = 1\ .
            \end{align*}
        \end{definition}

        \begin{theorem}\label{Theorem g-BEC}
            Let $V$ be a Poisson random potential of fixed strength. Then the following statements hold: If $\rho>\rho_{c,\RP}$, then
            \begin{align}
                \mathds P \Bigg( \lim\limits_{\varepsilon \searrow 0} \liminf\limits_{N \to \infty} \dfrac{1}{N} \sum\limits_{j \in \mathds N : E_{N,\RP}^{j,\omega} \le \varepsilon} n_{N,\RP}^{j,\omega} = \dfrac{\rho_{0,V}}{\rho} \Bigg) = 1 
            \end{align}
            and in particular g-BEC $\mathds P$-almost surely occurs. If $\rho\le\rho_{c,\RP}$, then
            \begin{align}
                \mathds P \Bigg( \lim\limits_{\varepsilon \searrow 0} \limsup\limits_{N \to \infty} \dfrac{1}{N} \sum\limits_{j \in \mathds N : E_{N,\RP}^{j,\omega} \le \varepsilon} n_{N,\RP}^{j,\omega} = 0 \Bigg) = 1 \ ,
            \end{align}
            that is, g-BEC does $\mathds P$-almost surely not occur.
    
            If $(V_N)_{N \in \mathds N}$ is a Poisson random potential with a strength that converges to infinity, then
            \begin{align}
                \mathds P \Bigg( \lim\limits_{\varepsilon \searrow 0} \liminf\limits_{N \to \infty} \dfrac{1}{N} \sum\limits_{j \in \mathds N : E_{N,\RPN}^{j,\omega} \le \varepsilon} n_{N,\RPN}^{j,\omega} \ge \dfrac{\rho_{0,V_1}}{\rho} \Bigg) = 1  \ .
            \end{align}
            In particular, g-BEC $\mathds P$-almost surely occurs if $\rho > \rho_{c,\RP_1}$.
        \end{theorem}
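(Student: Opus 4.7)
The plan is to exploit the grand-canonical identity \eqref{Gleichung Bedingung mu aequivalent} by isolating the condensate as the low-energy piece of the chemical-potential integral, and then to control the high-energy bulk via the vague convergence of the density of states. Writing $\mathcal B_\mu(E):=(\mathrm e^{\beta(E-\mu)}-1)^{-1}\mathds 1_{(\mu,\infty)}(E)$ and fixing a continuity point $\varepsilon>0$ of $\mathcal N_{\infty,\RP}^{\mathrm I}$, the identity splits as
\begin{equation*}
\rho \;=\; \underbrace{\int_{(0,\varepsilon)}\mathcal B_{\mu_{N,\RPN}^\omega}(E)\,\mathcal N_{N,\RPN}^\omega(\mathrm d E)}_{=:\;\rho_<^{N,\omega}(\varepsilon)} \;+\; \underbrace{\int_{[\varepsilon,\infty)}\mathcal B_{\mu_{N,\RPN}^\omega}(E)\,\mathcal N_{N,\RPN}^\omega(\mathrm d E)}_{=:\;\rho_\ge^{N,\omega}(\varepsilon)},
\end{equation*}
and one checks $\tfrac{1}{N}\sum_{j:E_{N,\RPN}^{j,\omega}<\varepsilon}n_{N,\RPN}^{j,\omega}=\rho_<^{N,\omega}(\varepsilon)/\rho$. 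The task is thus to compute $\lim_N\rho_\ge^{N,\omega}(\varepsilon)$ and then to let $\varepsilon\searrow 0$.

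For the first two assertions (fixed $\RP$), I would first pin down the limiting chemical potential. The a.s.\ inequality $\mu_{N,\RP}^\omega<E_{N,\RP}^{1,\omega}$ combined with $E_{N,\RP}^{1,\omega}\to 0$ — which follows because \eqref{Lifshitz tail} gives $\mathcal N_{\infty,\RP}^{\mathrm I}(\delta)>0$ for every $\delta>0$ and vague convergence then forces eigenvalues below $\delta$ for large $N$ — yields $\limsup_N\mu_{N,\RP}^\omega\le 0$. A matching lower bound by contradiction (if $\mu_{N,\RP}^\omega\le\mu_*<0$ along a subsequence, then $\rho=\int\mathcal B_{\mu_{N,\RP}^\omega}\mathrm d\mathcal N_{N,\RP}^\omega\le\int\mathcal B_{\mu_*}\mathrm d\mathcal N_{N,\RP}^\omega\to\int\mathcal B_{\mu_*}\mathrm d\mathcal N_{\infty,\RP}<\rho_{c,\RP}$) gives $\mu_{N,\RP}^\omega\to\mu_\infty$, where $\mu_\infty=0$ if $\rho\ge\rho_{c,\RP}$ and otherwise $\mu_\infty<0$ is the unique solution of $\int\mathcal B_{\mu_\infty}\mathrm d\mathcal N_{\infty,\RP}=\rho$. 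On $[\varepsilon,\infty)$ the integrand $\mathcal B_{\mu_{N,\RP}^\omega}$ is bounded and converges uniformly on compacts to $\mathcal B_{\mu_\infty}$; combined with vague convergence and a uniform tail estimate (exponential decay of $\mathcal B_\mu$ together with the free-Dirichlet bound $\mathcal N_{N,\RP}^{\mathrm I,\omega}(E)\le C\sqrt E$, valid since $\RP\ge 0$), this gives $\lim_N\rho_\ge^{N,\omega}(\varepsilon)=\int_{[\varepsilon,\infty)}\mathcal B_{\mu_\infty}\mathrm d\mathcal N_{\infty,\RP}$ $\mathds P$-a.s. For $\rho>\rho_{c,\RP}$, $\mu_\infty=0$ and sending $\varepsilon\searrow 0$ the right-hand side converges to $\rho_{c,\RP}$, yielding $\lim_{\varepsilon\searrow 0}\liminf_N\rho_<^{N,\omega}(\varepsilon)=\rho-\rho_{c,\RP}=\rho_{0,\RP}$; for $\rho<\rho_{c,\RP}$ the bound $\rho_<^{N,\omega}(\varepsilon)\le\mathcal B_{\mu_\infty/2}(0)\,\mathcal N_{N,\RP}^\omega((0,\varepsilon))\to 0$ suffices; the borderline $\rho=\rho_{c,\RP}$ uses $\int_{(0,\varepsilon)}\mathcal B\,\mathrm d\mathcal N_{\infty,\RP}\to 0$ as $\varepsilon\searrow 0$ by dominated convergence since $\rho_{c,\RP}<\infty$.

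For the third assertion, the starting point is the pointwise inequality $V_N\ge V_1$, which yields $H_{N,\RPN}^\omega\ge H_{N,\RP_1}^\omega$, hence $E_{N,\RPN}^{j,\omega}\ge E_{N,\RP_1}^{j,\omega}$ and $\mathcal N_{N,\RPN}^{\mathrm I,\omega}(E)\le\mathcal N_{N,\RP_1}^{\mathrm I,\omega}(E)$ for every $E\ge 0$. A short Fubini/integration-by-parts argument then gives, for any positive decreasing $g$ with sufficient decay at infinity,
\begin{equation*}
\int_{[\varepsilon,\infty)}g(E)\,\mathcal N_{N,\RPN}^\omega(\mathrm d E)\;\le\;\int_{[\varepsilon,\infty)}g(E)\,\mathcal N_{N,\RP_1}^\omega(\mathrm d E)\;+\;g(\varepsilon)\,\mathcal N_{N,\RP_1}^{\mathrm I,\omega}(\varepsilon).
\end{equation*}
Applied to $g=\mathcal B_{\mu_{N,\RPN}^\omega}$ (after establishing $\limsup_N\mu_{N,\RPN}^\omega\le 0$ via the upper eigenvalue bound from Section~\ref{section energy bounds} combined with $\mu_{N,\RPN}^\omega<E_{N,\RPN}^{1,\omega}$), passing to $N\to\infty$ using vague convergence $\mathcal N_{N,\RP_1}^\omega\to\mathcal N_{\infty,\RP_1}$, and then letting $\varepsilon\searrow 0$, one obtains $\limsup_{\varepsilon\searrow 0}\limsup_N\rho_\ge^{N,\omega}(\varepsilon)\le\rho_{c,\RP_1}$. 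Hence $\lim_{\varepsilon\searrow 0}\liminf_N\rho_<^{N,\omega}(\varepsilon)\ge\rho-\rho_{c,\RP_1}=\rho_{0,\RP_1}$, and dividing by $\rho$ gives the claim.

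The main obstacle will be upgrading vague convergence of the density of states to convergence of the weighted integral $\int\mathcal B_{\mu_{N,\RPN}^\omega}\mathrm d\mathcal N_{N,\RPN}^\omega$ $\mathds P$-almost surely: the integrand is mildly singular near $E=0$ (controlled by the Lifshitz tail, which makes $\mathcal N_{\infty,\RP}$ very light there) and has an exponential tail at infinity (controlled by the free Dirichlet IDS bound coming from $V\ge 0$). The delicate point specific to the third assertion is that $V_N$ itself varies with $N$, so no limit density of states for $V_N$ is available; everything has to be referred, via the operator monotonicity $H_{N,\RPN}^\omega\ge H_{N,\RP_1}^\omega$, back to the fixed potential $V_1$, and one must verify that the residual boundary term $g(\varepsilon)\,\mathcal N_{N,\RP_1}^{\mathrm I,\omega}(\varepsilon)$ produced by the integration-by-parts comparison vanishes $\mathds P$-a.s.\ in the limit $\varepsilon\searrow 0$.
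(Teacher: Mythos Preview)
Your proposal is correct and follows essentially the same route as the paper. The paper defers the fixed-strength assertions to \cite{lenoble2004bose} and \cite[Appendix~A]{KPS182}, whose arguments are precisely the ones you outline (pinning down $\mu_\infty$ by the monotonicity/contradiction argument, then passing to the limit in the bulk integral via vague convergence plus a free-Dirichlet tail bound). For the converging-strength part, your reduction to $V_1$ via $H_{N,\RPN}^\omega\ge H_{N,\RP_1}^\omega$ together with integration by parts is exactly the mechanism of the paper's Theorem~\ref{Theorem A.7} and Lemma~\ref{Lemma 5.1}; the boundary term you flag as delicate, $g(\varepsilon)\,\mathcal N_{N,\RP_1}^{\mathrm I,\omega}(\varepsilon)$, is handled in the paper by the Lifshitz-tail estimate $\mathcal B(\varepsilon/2)\,\mathcal N_{\infty,\RP_1}^{\mathrm I}(2\varepsilon)\to 0$, which is the same control you anticipate.
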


        A proof of the first part of Theorem~\ref{Theorem g-BEC} can be found in \cite{lenoble2004bose} in combination with \cite[Appendix A]{KPS182}. We give a proof of the second part of Theorem~\ref{Theorem g-BEC} in the appendix, see Theorem~\ref{Theorem A.7}.

        In Section~\ref{secMainResults}, we show that, in several probabilistic senses depending on the strength of the Poisson random potential, a type-I g-BEC occurs and consequently that finitely many eigenstates of the corresponding sequence of one-particle random Schrödinger operators are macroscopically occupied and all other eigenstates are not macroscopically occupied. By stating the definition of a ``not macroscopically occupied'' one-particle eigenstate, we simultaneously give the definition of a ``macroscopically occupied'' one-particle eigenstate.

        \begin{definition}[Macroscopic occupation]\label{Definition makroskopische Besetzung}
            The $j$th eigenstate of $(H_{N,\RPN})_{N \in \mathds N}$ is said to be \emph{not macroscopically occupied $\mathds{P}$-almost surely / in the $r$th mean / in probability} if and only if $n_{N,\RPN}^{j,\omega}/N$ converges to zero $\mathds P$-almost surely / in the $r$th mean / in probability. In addition, we say that the $j$th eigenstate of $(H_{N,\RPN}^{\omega})_{N \in \mathds N}$ is \emph{macroscopically occupied with probability almost one} if and only if for all $0 < \varepsilon < 1$ there exists a constant $c> 0$ such that $\liminf_{N \to \infty} \mathds P(n_{N,\RPN}^{j,\omega} / N \ge c) \ge 1 - \varepsilon$.
        \end{definition}
	
        We note that we have stated the definitions of being not macroscopically occupied $\mathds{P}$-almost surely, in the $r$th mean, and in probability stronger and consequently the definition of being macroscopically occupied $\mathds{P}$-almost surely, in the $r$th mean, and in probability  weaker than possible. However, we will show that if the strength of the Poisson random potential converges in a certain sense but arbitrarily slowly to infinity, then $(n_{N,\RPN}^{1,\omega}/N)_{N \in \mathds N}$ converges to a positive value in probability and in the $r$th mean, see Corollary~\ref{Corollary 2}.
 
        \begin{definition}[Type-I g-BEC] \label{Definition types of BEC}
            We speak of \emph{type-I g-BEC} in the probabilistic sense $\mathds P$-almost surely / in the $r$th mean / in probability / with probability almost one if and only if the number of eigenstates of $(H_{N,\RPN})_{N \in \mathds N}$ that are macroscopically occupied $\mathds{P}$-almost surely / in the $r$th mean / in probability / with probability almost one is finite but at least one.
        \end{definition}
  
        Furthermore, type-II g-BEC in the probabilistic sense $\mathds P$-almost surely / in the $r$th mean / in probability / with probability almost one is said to occur if and only if there are infinitely many eigenstates of $H_{N,\RPN}^{\omega}$ macroscopically occupied $\mathds{P}$-almost surely / in the $r$th mean / in probability / with probability almost one. If and only if g-BEC $\mathds P$-almost surely occurs but none of the eigenstates are macroscopically occupied $\mathds{P}$-almost surely / in the $r$th mean / in probability / with probability almost one, then one may say that type-III g-BEC occurs in the probabilistic sense $\mathds P$-almost surely / in the $r$th mean / in probability / with probability almost one.
  
        Note that while a $\mathds{P}$-almost sure macroscopic occupation of an eigenstate of $(H_{N,\RPN})_{N \in \mathds N}$ implies the $\mathds{P}$-almost sure occurrence of g-BEC, the converse is in general not true, see, e.g., \cite{van1982generalized} or \cite{LenobleZagrebnovLuttingerSy}. One can deduce sufficient conditions for the occurrence of the types of BEC in the various probabilistic senses with Definitions~\ref{Definition makroskopische Besetzung} and~\ref{Definition types of BEC}. For the convenience of the reader and since type-I g-BEC will be the main concern in this paper, we state the following sufficient conditions. Type-I g-BEC occurs in probability if there exists a $\hat c \in \mathds N$ and a $c > 0$ such that
        \begin{align}
            \lim\limits_{N \to \infty} \mathds P\left( \left| \sum\limits_{i=1}^{\hat c} \dfrac{n_{N,\RPN}^{i,\omega}}{N} - c \right| < \eta , \dfrac{n_{N,\RPN}^{\hat c + 1,\omega}}{N} < \eta' \right) = 1
        \end{align}
        for all $\eta, \eta' > 0$. If there exists an $r \ge 1$, a $c > 0$, and a $\hat c \in \mathds N$ such that
        \begin{align}
            \lim\limits_{N \to \infty} \E \, \Big| \sum\limits_{i=1}^{\hat c} \dfrac{n_{N,\RPN}^{i,\omega}}{N} - c \Big|^r = 0 \ \text{ and } \ \lim\limits_{N \to \infty} \E \Big( \dfrac{n_{N,\RPN}^{\hat c + 1,\omega}}{N} \Big)^r = 0 \ ,
        \end{align}
        then type-I g-BEC occurs in the $r$th mean. Lastly, a type-I g-BEC occurs with probability almost one if for all $0 < \varepsilon < 1$ there exist a $\hat c \in \mathds N$ and a $c > 0$ such that for all $\eta > 0$,
        \begin{align}
            \liminf\limits_{N \to \infty} \mathds P \left( \sum\limits_{i = 1}^{\hat c} \dfrac{n_{N,\RPN}^{i,\omega}}{N} \ge c , \dfrac{n_{N,\RPN}^{\hat c + 1,\omega}}{N} < \eta \right) \ge 1 - \varepsilon \ .
        \end{align}

        As we will see in Section~\ref{secMainResults}, to show the occurrence of type-I g-BEC in our model it is sufficient to prove that with high probability a certain energy gap between the lowest eigenvalue $E_{N,\RPN}^{1,\omega}$ and some higher eigenvalue $E_{N,\RPN}^{j,\omega}$, $2 \le j \in \mathds N$, of the one-particle random Schrödinger operator $H_{N,\RPN} = (-\laplace + V_N)_{\Lambda_N}^{\mathrm{D}}$ is fulfilled, see also \cite{KPSConditionTypeOneBEC2020}. Hence, we introduce, for any $0 < \zeta_2 < \zeta_1 < 1$ and any $2 \le j \in \mathds N$, the event
        \begin{align} \label{Def omega N j}
            \Omega_{N,\RPN}^{j,\zeta_1,\zeta_2} & := \left\{ \omega \in \widetilde \Omega : E_{N,\RPN}^{j, \omega} - E_{N,\RPN}^{1,\omega} \ge N^{-1+\zeta_1}, E_{N,\RPN}^{1,\omega} \le \left[ \left( 1 + \zeta_2 \right) \dfrac{\nu \pi}{\ln(L_N)} \right]^2 \right\} \ .
        \end{align}
        In order to show that this event occurs with high probability, we study the difference between the lengths $\llargest$ and $\llargestN{j}$, $2 \le j \in \mathds N$, which are generated by a Poisson point process on $\mathds R$, in the next section. Subsequently, we prove an upper bound for $E_{N,\RPN}^{1,\omega}$ as well as as an lower bound for $E_{N,\RPN}^{j, \omega}$, $j \in \mathds N$, that depend on $\llargest$ and on $\llargestN{j}$, respectively, in Section~\ref{section energy bounds}. To conclude this section, we would like to make the following remark as a side note.

        \begin{remark} \label{Remark LS with finite strength}
            Our results in this work also hold in the case of the \emph{Luttinger--Sy model of finite strength}, that is, a one-dimensional noninteracting Bose gas in a Poisson random potential on $\mathds R$ with, informally, a single-site potential
            \begin{align*}
                u : \mathds R \to \mathds R, \ x \mapsto u(x) := \gamma \delta(x)
            \end{align*}
            where $\gamma > 0$ and $\delta$ is the Dirac delta function. Rigorously, the corresponding one-particle Schrödinger operator $H_{N,\gamma}^{\omega}$ is, for all $N \in \mathds N$ and $\mathds P$-almost all $\omega \in \Omega$, the self-adjoint operator on $\mathrm{L}^2(\Lambda_N)$ that is uniquely defined via the quadratic form 
            \begin{equation} \label{quadratic form RSO gammaN}
                \begin{split}
                    q^{\omega}_{N,\gamma}: \mathrm{H}^1_0(\Lambda_N) \times \mathrm{H}^1_0(\Lambda_N) & \to \mathds C \\
                    (\varphi,\psi) & \mapsto \int\limits_{\Lambda_N}\overline{\varphi^{\prime}(x)}\psi^{\prime}(x) \ud x + \gamma \sum_{j \in \mathds Z: \, \hat x_j^{\omega} \in \Lambda_N}\overline{\varphi(\hat x_j^{\omega})} \psi(\hat x_j^{\omega})
                \end{split}
            \end{equation}
            on $\mathrm{L}^2(\Lambda_N)$. See, e.g., \cite[p. 146--149]{pastur1992spectra} and \cite{KPS182} for more details regarding this model. Thus, we confirm and extend the results in \cite{KPS182} with a different, more direct method.  
        \end{remark}

    \section{Properties of the Poisson point process on $\mathds R$} \label{Porperties of the Poisson Point process}

        In addition to the information that we have provided in Section~\ref{SecPrelim}, we collect and prove in this section facts regarding the lengths $(l_N^{j,\omega})_{j \in \mathds Z}$, $l_N^{j,\omega} = |(\hat x_{j-1}^{\omega}, \hat x_j^{\omega} ) \cap \Lambda_N|$, that are generated by a Poisson point process on $\mathds R$ with rate $\nu > 0$. In particular, we are concerned with the difference of $\llargest$ and $\llargestN{j}$, $2 \le j \in \mathds N$. We also refer to \cite{kingman1993poisson} and \cite{last2018lectures} for more information about the Poisson point process. 

        Firstly, we show an improved statement compared to \eqref{kappaNomega between 1 minus epsilon L and 1 plus epsilon L}. Recall that $\kappa_N^{\omega}$ is the number of atoms of a particular realization of the Poisson random measure within $\Lambda_N$.

        \begin{lemma}\label{Lemma 3.1}
            For all $0 < \varepsilon < 1/2$ and for all but finitely many $N \in \mathds N$ we have
            \begin{align*}
                \mathds P \Big( (1-L_N^{-\varepsilon}) L_N \le \kappa_N^{\omega} \le (1+L_N^{-\varepsilon}) L_N \Big) \ge 1 - 2 \e^{ - (\nu/3) L_N^{1-2\varepsilon}} \ .
            \end{align*}
        \end{lemma}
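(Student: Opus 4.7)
The plan is to exploit the fact that $\kappa_N^{\omega}$, the number of atoms of the Poisson random measure $\PoissonrandommeasureR$ in the bounded Borel set $\Lambda_N$ of Lebesgue measure $L_N$, is by definition a Poisson random variable with parameter $\nu L_N$. Hence the statement (with the obvious reading that the centering is at the mean $\nu L_N$) is a standard one-dimensional concentration estimate, and the natural route is a direct Chernoff bound on the moment generating function $\E[\e^{t \kappa_N^{\omega}}] = \exp(\nu L_N (\e^{t}-1))$.

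More concretely, I would first recall the two classical one-sided tail bounds for a Poisson variable $X$ with mean $\lambda>0$, namely
\begin{equation*}
\mathds P\bigl(X \ge (1+\delta)\lambda\bigr) \le \exp\bigl(-\lambda \delta^{2}/3\bigr), \qquad \mathds P\bigl(X \le (1-\delta)\lambda\bigr) \le \exp\bigl(-\lambda \delta^{2}/2\bigr),
\end{equation*}
valid for $0<\delta \le 1$. Both follow from Markov's inequality applied to $\e^{\pm t X}$ after optimizing in $t>0$ using the MGF above, together with the elementary estimates $\e^{t}-1-t \le t^{2}/2 + \ldots$ that produce the constants $1/3$ and $1/2$; this is textbook material so I would only sketch the derivation or cite it.

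I would then set $\lambda = \nu L_N$ and $\delta = L_N^{-\varepsilon}$. Since $0<\varepsilon<1/2$ the quantity $L_N^{-\varepsilon}$ lies in $(0,1]$ for all but finitely many $N$, so the Chernoff bounds apply. Inserting these choices into both tail estimates produces exponents $-(\nu/3)L_N^{1-2\varepsilon}$ and $-(\nu/2)L_N^{1-2\varepsilon}$ respectively, and a union bound with the weaker $1/3$ in the exponent yields
\begin{equation*}
\mathds P\bigl(|\kappa_N^{\omega}-\nu L_N| > \nu L_N \cdot L_N^{-\varepsilon}\bigr) \le 2\e^{-(\nu/3) L_N^{1-2\varepsilon}},
\end{equation*}
which rearranges to the claimed inequality.

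There is no serious obstacle here: the result is a standard Poisson concentration statement and the only point that requires mild care is verifying that $\delta = L_N^{-\varepsilon} \le 1$ so that the Chernoff bound in the form with the $1/3$ constant is applicable, which is exactly why the lemma is phrased "for all but finitely many $N$". If a more self-contained presentation were desired, I would simply carry out the Markov-plus-MGF optimization explicitly in one or two lines instead of quoting the Chernoff bound.
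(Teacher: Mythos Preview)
Your proposal is correct and follows essentially the same route as the paper: the paper also applies the exponential Markov (Chernoff) method to the Poisson variable $\kappa_N^{\omega}\sim\mathrm{Poi}(\nu L_N)$, carrying out the MGF optimization explicitly via the PMF sum and then invoking the elementary inequality $(1+x)\ln(1+x)-x\ge x^2/\bigl(\tfrac{2}{3}x+2\bigr)$ to extract the constant $\nu/3$ in the exponent. Your observation that the centering must be at $\nu L_N$ (not $L_N$) matches the paper's proof, and the ``for all but finitely many $N$'' clause is used exactly as you say, to ensure $L_N^{-\varepsilon}$ is small enough for the constant $1/3$ to apply.
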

        \begin{proof}
            Let $0 < \varepsilon < 1/2$ be arbitrarily given. For all $N \in \mathds N$ we have
            \begin{align*}
                \mathds{P}\left( \kappa_N^{\omega} \ge (1 + L_N^{-\varepsilon}) \nu L_N \right) & = \sum\limits_{m = (1 + L_N^{-\varepsilon}) \nu L_N}^{\infty} \e^{-\nu L_N} \dfrac{(\nu L_N)^m}{m!} \\
                & \le \sum\limits_{m = (1 + L_N^{-\varepsilon}) \nu L_N}^{\infty} \e^{-\nu L_N} \dfrac{(\nu L_N)^m}{m!} (1 + L_N^{-\varepsilon})^{m - ((1 + L_N^{-\varepsilon}) \nu L_N)} \\
                & \le \e^{\nu L_N[L_N^{-\varepsilon} - (1 + L_N^{-\varepsilon}) \ln (1 + L_N^{-\varepsilon})]} \ ,
            \end{align*}
            see also \cite[Section 3.3.2]{SeiYngZag12}.
            Next, by using the inequality
            \begin{align} \label{Inequality natural log}
                (1+x) \ln(1+x) - x \ge \dfrac{x^2}{\frac{2}{3} x + 2}
            \end{align}
            for all $x > -1$, we obtain
            \begin{align*}
                \mathds{P}\left( \kappa_N^{\omega} \ge (1 + L_N^{-\varepsilon}) \nu L_N \right) \le \e^{- \nu L_N L_N^{-2\varepsilon} (\tfrac{2}{3} L_N^{-\varepsilon} + 2)^{-1}}
            \end{align*}
            for all $N \in \mathds N$.
            Similarly, one can show that 
            \begin{align*}
                \mathds{P}\left( \kappa_N^{\omega} \le (1 - L_N^{-\varepsilon}) \nu L_N \right) \le \e^{- \nu L_N L_N^{-2\varepsilon} (-\tfrac{2}{3} L_N^{-\varepsilon} + 2)^{-1}}
            \end{align*}
            for all $N \in \mathds N$ such that $L_N^{-\varepsilon} < 1$.
        \end{proof}

        For the convenience of the reader, we mention that inequality \eqref{Inequality natural log} can be shown as follows. We set $f(x) := (1+x) \ln(1+x) - x - x^2/(\frac{2}{3} x + 2)$, $x > -1$.
        Since $f''(x) \ge 0$ for all $x > -1$ and $f'(0) = 0$, one concludes that $f'(x) \ge 0$ for all $x \ge 0$ and $f'(x) \le 0$ for all $-1 < x \le 0$. Furthermore, $f(0)=0$. Thus, $f(x)\ge 0$ for all $x > -1$.

        \begin{lemma} \label{Lemma 3.2}
            For all $\zeta > 0$,
            \begin{align*}
                \lim\limits_{N \to \infty} \mathds P \left( (1-\zeta) \nu^{-1} \ln(L_N) \le \llargest \le (1+\zeta) \nu^{-1} \ln(L_N) \right) = 1 \ .
            \end{align*}
        \end{lemma}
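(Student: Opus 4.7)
The plan is to establish the two tail bounds $\mathds P(\llargest > c_N^+) \to 0$ and $\mathds P(\llargest < c_N^-) \to 0$ separately, where $c_N^{\pm} := (1\pm\zeta)\nu^{-1}\ln L_N$; combined they yield the lemma. Both bounds rely only on the independent-increments property of the Poisson point process together with elementary partition/union-bound arguments, so Lemma~\ref{Lemma 3.1} is not actually needed for this statement.

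For the upper bound, I would partition $\Lambda_N$ into $M \le \lceil L_N \rceil$ disjoint unit-length subintervals $J_1,\ldots,J_M$. If $\llargest > c_N^+$, then some $I_N^{j,\omega}$ has length exceeding $c_N^+$, and by an elementary geometric count any such subinterval of $\Lambda_N$ must fully contain a block of at least $m := \lceil c_N^+ - 2 \rceil$ consecutive cells $J_{i+1},\ldots,J_{i+m}$, all of which are automatically atom-free for $\PoissonrandommeasureRomega$. The probability that a specified block of $m$ consecutive $J_i$'s is atom-free equals $\e^{-\nu m}$, so a union bound over the at most $L_N$ possible starting positions yields
\begin{equation*}
\mathds P\bigl(\llargest > c_N^+\bigr) \;\le\; L_N\, \e^{-\nu(c_N^+ - 2)} \;\le\; \e^{2\nu}\, L_N^{-\zeta} \;\longrightarrow\; 0 .
\end{equation*}

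For the lower bound, I would instead partition $\Lambda_N$ into $M_N := \lfloor L_N/c_N^- \rfloor$ disjoint subintervals $K_1,\ldots,K_{M_N}$, each of length exactly $c_N^-$. If any single $K_i$ contains no atom of $\PoissonrandommeasureRomega$, then $K_i$ lies inside some Poisson gap $(\hat x_j^{\omega}, \hat x_{j+1}^{\omega})$, and since $K_i \subset \Lambda_N$ also inside $I_N^{j,\omega}$, which forces $\llargest \ge c_N^-$. Contrapositively, $\{\llargest < c_N^-\}$ forces every $K_i$ to contain at least one atom, and by independence of disjoint Poisson increments
\begin{equation*}
\mathds P\bigl(\llargest < c_N^-\bigr) \;\le\; \bigl(1 - \e^{-\nu c_N^-}\bigr)^{M_N} \;\le\; \exp\!\bigl(-L_N^{-(1-\zeta)} M_N\bigr),
\end{equation*}
which is of order $\exp\bigl(-\Theta(L_N^{\zeta}/\ln L_N)\bigr) \to 0$ because $\nu c_N^- = (1-\zeta)\ln L_N$ and $M_N \asymp L_N/\ln L_N$.

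The only step that is not a routine computation is the geometric observation behind the upper bound — that any open subinterval of $\Lambda_N$ of length strictly greater than $c_N^+$ must completely contain at least $\lceil c_N^+ - 2 \rceil$ consecutive unit cells of the fixed partition — since this is what reduces a supremum over the random, $\omega$-dependent family of gaps $\{I_N^{j,\omega}\}_j$ to a deterministic union bound over finitely many prescribed blocks of cells. Once this reduction is in place, the rest is a textbook Poisson estimate.
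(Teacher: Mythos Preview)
Your proof is correct and takes a genuinely different route from the paper's. The paper argues via the i.i.d.\ exponential structure of the gap lengths $(\hat l^{j})_{j\in\mathds Z\setminus\{0\}}$ together with the asymptotic $\kappa_N^{\omega}/L_N \to \nu$: conditioning on having roughly $\nu L_N$ gaps, one obtains $\mathds P(\llargest > c_N^+)\lesssim 1-(1-L_N^{-1-\zeta})^{(1+\varepsilon)\nu L_N}$ and $\mathds P(\llargest < c_N^-)\lesssim (1-L_N^{-1+\zeta})^{(1-\varepsilon)\nu L_N}$, and both of these are shown to vanish via the estimate $1-x^{-1}\le\ln x\le x-1$. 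Your argument bypasses any reference to $\kappa_N^{\omega}$ by working instead with a \emph{deterministic} grid and the independent-increments property of the Poisson measure directly --- a union bound over $O(L_N)$ candidate atom-free blocks of unit cells for the upper tail, and independence over $\lfloor L_N/c_N^-\rfloor$ disjoint boxes of length $c_N^-$ for the lower tail. What you gain is a fully self-contained estimate that never needs to control the random number of atoms (so Lemma~\ref{Lemma 3.1} is indeed unnecessary here, as you note); the paper's approach, by contrast, stays closer to the extreme-value picture for exponential order statistics, which is the natural language for the finer gap results that follow in Section~\ref{Porperties of the Poisson Point process} (for instance Proposition~\ref{Proposition 3.4} and Lemma~\ref{Lemma 3.3}). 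One trivial remark: your lower-tail partition only makes sense for $0<\zeta<1$, but for $\zeta\ge1$ the bound $\llargest\ge c_N^-$ is vacuous anyway.
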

        \begin{proof}
            Let $\zeta > 0$ be arbitrarily given. Since $\mathds P$-almost surely $\lim_{N \to \infty} \kappa_N^{\omega} / L_N = \nu$ and $(\hat l^{j})_{j \in \mathds Z \backslash \{0\}}$ are independent and identically distributed random variables with common probability density $\nu \e^{-\nu l} \mathds 1_{(0,\infty)}(l)$, $l \in \mathds R$, we have
            \begin{align*}
                \limsup\limits_{N \to \infty} \mathds P \left( \llargest > (1+\zeta) \nu^{-1} \ln(L_N) \right) \le \limsup\limits_{N \to \infty} \left( 1 - (1 - L_N^{-1-\zeta})^{(1+\varepsilon)\nu L_N} \right)
            \end{align*}
            for any $\varepsilon > 0$. By using the inequality $1-x^{-1} \le \ln(x) \le x-1$ for all $x > 0$, one shows that this probability converges to zero in the limit $N \to \infty$.
            Similarly, one has
            \begin{align*}
                \limsup\limits_{N \to \infty} \mathds P \left( \llargest < (1-\zeta) \nu^{-1} \ln(L_N) \right) \le \limsup\limits_{N \to \infty} \big( 1 - L_N^{-1+\zeta} \big)^{(1-\varepsilon) \nu L_N}
            \end{align*}
            for any $0 < \varepsilon < 1$ where the right-hand side converges to zero in the limit $N \to \infty$ as well. For more details, see also \cite[Chapter 3, Lemma 3.4]{sznitman1998brownian} or \cite[Lemma A.5]{KPSConditionTypeOneBEC2020}.
        \end{proof}
        
        Let $( \tilde l^{j})_{j \in \mathds N}$ be a sequence of independent and identically distributed random variables with common probability density $\nu \e^{-\nu l} \mathds 1_{(0,\infty)}(l)$, $l \in \mathds R$. For $\mathds P$-almost all $\omega \in \Omega$, we define $\llargestk{1}$ and $\llargestk{2}$ as the largest and the second largest element of the set $\{ \tilde l^{j,\omega}\}_{j=1}^{k}$, respectively. Also, recall definitions~\eqref{definition hat lj} and~\eqref{definition lNj}.

        \begin{lemma} \label{Lemma 3.3}
            For any bounded sequence $(c_N)_{N \in \mathds N} \subset [0,\infty)$ and any $0 < \varepsilon < 1/2$ we have
            \begin{align*}
                & \liminf\limits_{N \to \infty} \mathds P\Big( \llargest - \llargestzwei > c_N \Big) \ge \liminf\limits_{N \to \infty} \mathds P \Big( \tilde l_{\lfloor (1 - L_N^{-\varepsilon}) \nu L_N \rfloor - 2,>}^{(1),\omega} - \tilde l_{\lfloor (1 - L_N^{-\varepsilon}) \nu L_N \rfloor - 2,>}^{(2),\omega} > c_N \Big) \ .
            \end{align*}
        \end{lemma}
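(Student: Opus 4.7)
My plan is to condition on $\kappa_N^\omega$ and use the well-known distributional fact that, given $\kappa_N^\omega = k$, the $k+1$ interval lengths $(l_N^{j,\omega})_j$ (including the two boundary-truncated intervals at the endpoints of $\Lambda_N$) are the spacings of $k$ i.i.d.\ uniform points in $\Lambda_N$, hence jointly uniformly distributed on the simplex $\{\ell \ge 0 : \sum_{i=1}^{k+1} \ell_i = L_N\}$. Using the Dirichlet representation $l_N^{j,\omega} \stackrel{d}{=} L_N E_j / S$ with $E_1, \ldots, E_{k+1}$ i.i.d.\ $\mathrm{Exp}(1)$ and $S := \sum_{i=1}^{k+1} E_i$, together with R\'enyi's representation of exponential order statistics (which yields $E^{(1)} - E^{(2)} \sim \mathrm{Exp}(1)$ independent of $S - (E^{(1)} - E^{(2)}) \sim \mathrm{Gamma}(k,1)$), a short Laplace-transform computation gives
\begin{equation*}
    \mathds P\bigl(\llargest - \llargestzwei > c_N \,\big|\, \kappa_N^\omega = k\bigr) = \bigl(1 - c_N/L_N\bigr)^{k} \qquad \text{for } 0 \le c_N < L_N.
\end{equation*}

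Since $k \mapsto (1 - c_N/L_N)^{k}$ is nonincreasing, restricting to the high-probability event $\{\kappa_N^\omega \le \lfloor(1+L_N^{-\varepsilon})\nu L_N\rfloor\}$ (whose probability tends to one by the immediate upper-tail analog of Lemma~\ref{Lemma 3.1}, proved identically to the lower-tail half) gives
\begin{equation*}
    \mathds P(\llargest - \llargestzwei > c_N) \ge \bigl(1 - c_N/L_N\bigr)^{\lfloor(1+L_N^{-\varepsilon})\nu L_N\rfloor}(1 - o(1)).
\end{equation*}
For the right-hand side of the lemma, the memoryless property of the exponential distribution tells us that the top-two gap of any $M \ge 2$ i.i.d.\ $\mathrm{Exp}(\nu)$ variables is itself $\mathrm{Exp}(\nu)$-distributed, so this probability equals $\mathrm{e}^{-\nu c_N}$ exactly, independently of $M_N$.

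The Taylor expansion $\ln(1 - c_N/L_N) = -c_N/L_N + O(c_N^2/L_N^2)$ combined with the boundedness of $(c_N)$ yields $(1 - c_N/L_N)^{\lfloor(1+L_N^{-\varepsilon})\nu L_N\rfloor} = \mathrm{e}^{-\nu c_N(1 + \eta_N)}$ for some $\eta_N \to 0$ uniform in $c_N$. The main subtlety I anticipate is the final $\liminf$ step: since $(c_N)$ is only assumed bounded and need not converge, one cannot pass directly to the limit. The fix is to fix $\delta > 0$, use $\eta_N \le \delta$ for $N$ large, exploit the elementary identity $\liminf_{N \to \infty}\mathrm{e}^{-(1+\delta)\nu c_N} = \mathrm{e}^{-(1+\delta)\nu \limsup_{N} c_N}$ (valid since $c_N$ is bounded and $x \mapsto \mathrm{e}^{-(1+\delta)\nu x}$ is continuous and strictly decreasing), and then let $\delta \searrow 0$, whence $\liminf_{N \to \infty}\mathrm{LHS} \ge \mathrm{e}^{-\nu \limsup_{N} c_N} = \liminf_{N \to \infty}\mathrm{RHS}$.
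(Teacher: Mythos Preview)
Your proof is correct and takes a genuinely different, considerably more streamlined route than the paper's.

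The paper works directly with the i.i.d.\ sequence $(\hat l^{j})_{j\in\mathds Z\setminus\{0\}}$ of inter-arrival lengths, which forces it to deal separately with the three ``bad'' intervals (the two boundary-truncated ones and the interval containing the origin, whose length is $\mathrm{Gamma}(2,\nu)$ rather than $\mathrm{Exp}(\nu)$), to control the atoms in the fluctuation window $J_{\lceil(1+L_N^{-\varepsilon})\nu L_N/2\rceil}\setminus J_{\lfloor(1-L_N^{-\varepsilon})\nu L_N/2\rfloor-2}$, and then to reduce to the i.i.d.\ top-two gap by a long chain of intersections and the bound $\mathds P(A\cap B)\ge\mathds P(A)+\mathds P(B)-1$. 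Your approach sidesteps all of this: by conditioning on $\kappa_N^\omega=k$ you make the $k+1$ spacings exchangeable and exactly Dirichlet-distributed, so the boundary intervals and the origin interval are no longer special. The key identity $\mathds P(\llargest-\llargestzwei>c\mid\kappa_N^\omega=k)=(1-c/L_N)^k$, obtained via R\'enyi and the $\mathrm{Beta}(1,k)$ law of $D_{k+1}/S$, is exact and does all the work; only the one-sided concentration $\mathds P(\kappa_N^\omega\le(1+L_N^{-\varepsilon})\nu L_N)\to 1$ (already contained in Lemma~\ref{Lemma 3.1}) is needed afterwards. Your final $\liminf$ step via $\limsup c_N$ is the right way to handle the non-convergent bounded sequence. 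As a bonus, your argument simultaneously yields Proposition~\ref{Proposition 3.4} (the RHS equals $e^{-\nu c_N}$) rather than invoking it, and in fact shows $\liminf_N\mathds P(\llargest-\llargestzwei>c_N)\ge\liminf_N e^{-\nu c_N}$ directly. Two cosmetic remarks: what you call a ``Laplace-transform computation'' is really the $\mathrm{Gamma}/\mathrm{Beta}$ identity; and Lemma~\ref{Lemma 3.1} already gives the upper tail you need, so no separate analog is required.
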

        \begin{proof}
            Let a bounded sequence $(c_N)_{N \in \mathds N} \subset [0,\infty)$ and an $0 < \varepsilon < 1/2$ be arbitrarily given.
            For each $\omega \in \widetilde \Omega$ and $N \in \mathds N$, let $\kappa_N^{(L),\omega}$ and $\kappa_N^{(R),\omega}$ be the number of atoms of $\PoissonrandommeasureRomega$ within $(-L_N/2,0)$ and within $(0,L_N/2)$, respectively. Proceeding almost identically as in the proof of Lemma~\ref{Lemma 3.1} one can show that
            \begin{align*}
                \lim_{N \to \infty} \mathds P \Big( (1-L_N^{-\varepsilon})L_N/2 \le \kappa_N^{(L),\omega} \le (1+L_N^{-\varepsilon}) L_N/2 \Big) = 1
            \end{align*}
            and 
            \begin{align*}
                \lim_{N \to \infty} \mathds P \Big( (1-L_N^{-\varepsilon})L_N/2 \le \kappa_N^{(R),\omega} \le (1+L_N^{-\varepsilon}) L_N/2 \Big) = 1 \ .
            \end{align*}
            For convenience, we define the set 
            \begin{align*}
                J_k:= \{ -k, -k+1, \ldots, k-1, k \}\backslash \{0\}, \quad k \in \mathds N \ .
            \end{align*}
            Using that $-x/(1-x) \le \ln(1-x) \le -x$ for all $0<x<1$ we have 
            \begin{align*}
                & \lim\limits_{N \to \infty} \mathds P \Big( \max\big\{ \hat l^{j,\omega} : j \in J_{\lceil (1 + L_N^{-\varepsilon})L_N/2 \rceil }  \backslash J_{\lfloor (1 - L_N^{-\varepsilon})L_N/2 \rfloor - 2} \big\} \le (1 - \varepsilon/2) \nu^{-1} \ln(L_N) \Big) \\
                & \qquad = \lim\limits_{N \to \infty} (1 - L_N^{-1+\varepsilon/2})^{2\nu L_N^{1-\varepsilon}} = 1
            \end{align*}
            and
            \begin{align*}
                & \lim\limits_{N \to \infty} \mathds P \Big(\max\big\{ \hat l^{j,\omega} : j \in J_{\lfloor (1 - L_N^{-\varepsilon})L_N/2 \rfloor - 2} \big\} \ge (1-\varepsilon/3)\nu^{-1} \ln(L_N) \Big)\\
                & \qquad = 1 -  \lim\limits_{N \to \infty} (1 - L_N^{-1+\varepsilon/3})^{(1-L_N^{-\varepsilon}) \nu L_N -2} = 1 \ .
            \end{align*}
            Note that the two outer intervals within $\Lambda_N$ and the interval that contains the zero are relatively small, that is, 
            \begin{align*}
                \lim\limits_{N \to \infty} \mathds P(\Omega_N^{(\text{R})} \cap \Omega_N^{(\text{L})} \cap \Omega_N^{(0)}) = 1
            \end{align*}
            where, for all $N \in \mathds N$ and all $\omega \in \widetilde \Omega$, we have set
            \begin{align*}
                \Omega_N^{(\text{R})} & := \{ \omega \in \widetilde \Omega : |I_N^{\text{R},\omega}| \le (2\nu)^{-1} \ln (L_N) \} \ , \\
                \Omega_N^{(\text{L})} & := \{ \omega \in \widetilde \Omega : |I_N^{\text{L},\omega}| \le (2\nu)^{-1} \ln (L_N) \} \ , \\
                \shortintertext{and}
                \Omega_N^{(0)} & := \{ \omega \in \widetilde \Omega : |I_N^{0,\omega}| \le (2\nu)^{-1} \ln (L_N) \}
            \end{align*}
            with $I_N^{\text{R},\omega} := I_N^{j_N^{\text{max},\omega},\omega}$, $j_N^{\text{max},\omega} := \max\{ j \in \mathds Z: I_N^{j,\omega} \neq \emptyset\}$, and $I_N^{\text{L},\omega} := I_N^{j_N^{\text{min},\omega},\omega}$, $j_N^{\text{min},\omega} := \min\{ j \in \mathds Z : I_N^{j,\omega} \neq \emptyset \}$. Also, for any $\omega \in \widetilde \Omega$ and any $2 \le k \in \mathds N$ the second-largest element of the set $\{ \hat l^{j,\omega} \in J_k\}$ is given by $\max\Big( \big\{ \hat l^{j,\omega} \in J_k \big\} \backslash \big\{ \max ( \{ \hat l^{j,\omega} \in J_k \} ) \big\}   \Big)$.
            
            Since $\mathds P(A \cap B) \ge \mathds P(A) + \mathds P(B) - 1$ 
            for any events $A,B \subset \Omega$, we conclude, for all but finitely many $N \in \mathds N$,
            \begin{align*}
                & \mathds P \Big( \llargest - \llargestzwei > c_N \Big) \\
                & \quad \ge \mathds P \Big( \big\{ \llargest - \llargestzwei > c_N \big\} \cap \Omega_N^{(\text{R})} \cap \Omega_N^{(\text{L})} \cap \Omega_N^{(0)} \\
                & \qquad \qquad \cap \big\{ (1-L_N^{-\varepsilon})L_N/2 \le \kappa_N^{(L),\omega} \le (1+L_N^{-\varepsilon}) L_N/2 \big\} \\
                & \qquad \qquad \cap \big\{ (1-L_N^{-\varepsilon})L_N/2 \le \kappa_N^{(R),\omega} \le (1+L_N^{-\varepsilon}) L_N/2 \big\} \\
                & \qquad \qquad \cap \big\{ \max(\{ \hat l^{j,\omega} : j \in J_{\lceil (1 + L_N^{-\varepsilon})L_N/2 \rceil} \backslash J_{\lfloor(1 - L_N^{-\varepsilon})L_N/2 \rfloor -2} \} ) \le (1 - \varepsilon/2) \nu^{-1} \ln(L_N) \big\} \\  
                & \qquad \qquad \cap \big\{ \max(\{ \hat l^{j,\omega} : j \in J_{\lfloor (1 - L_N^{-\varepsilon})L_N/2 \rfloor - 2} \} ) \ge (1 - \varepsilon/3) \nu^{-1} \ln(L_N) \big\} \Big) \\
                & \quad = \mathds P \Big( \Big\{ \max \big( \{ \hat l^{j,\omega} : j \in J_{\lfloor (1 - L_N^{\varepsilon}) L_N/2 \rfloor - 2} \} \big) \\
                & \qquad \qquad \qquad \qquad - \max \Big( \{ \hat l^{j,\omega} : j \in J_{\lfloor (1 - L_N^{-\varepsilon})L_N/2 \rfloor - 2} \}\backslash \big\{ \max ( \{ \hat l^{j,\omega} : j \in J_{\lfloor (1 - L_N^{-\varepsilon})L_N/2 \rfloor - 2} \} ) \big\} \Big) > c_N \Big\} \\
                & \qquad \qquad \cap \Big\{ \max \Big( \big\{ \hat l^{j,\omega} : j \in J_{\lceil (1 + L_N^{-\varepsilon})L_N/2 \rceil} \big\} \backslash \big\{ \max \big( \{ \hat l^{j,\omega} : j \in J_{\lceil (1 + L_N^{-\varepsilon})L_N/2 \rceil} \} \big) \big\} \Big) \\
                & \qquad \qquad \qquad \in \big\{\hat l^{j,\omega} : j \in J_{\lfloor (1 - L_N^{-\varepsilon})L_N/2 \rfloor - 2} \big\} \Big\} \\
                & \qquad \qquad \cap \Omega_N^{(\text{R})} \cap \Omega_N^{(\text{L})} \cap \Omega_N^{(0)} \\
                & \qquad \qquad \cap \big\{ (1-L_N^{-\varepsilon})L_N/2 \le \kappa_N^{(L),\omega} \le (1+L_N^{-\varepsilon}) L_N/2 \big\} \\
                & \qquad \qquad \cap \big\{ (1-L_N^{-\varepsilon})L_N/2 \le \kappa_N^{(R),\omega} \le (1+L_N^{-\varepsilon}) L_N/2 \big\} \\
                & \qquad \qquad \cap \big \{ \max\{ \hat l^{j,\omega} : j \in J_{\lceil (1 + L_N^{-\varepsilon})L_N/2 \rceil} \backslash J_{\lfloor (1 - L_N^{-\varepsilon})L_N/2 \rfloor - 2} \} \le (1 - \varepsilon/2) \nu^{-1} \ln(L_N) \big\} \\  
                & \qquad \qquad \cap \big \{ \max\{ \hat l^{j,\omega} : j \in J_{\lfloor (1 - L_N^{-\varepsilon})L_N/2 \rfloor - 2} \} \ge (1 - \varepsilon/3) \nu^{-1} \ln(L_N) \big\} \Big) \\
                & \qquad + \mathds P \Big( \Big\{ \max(\{ \hat l^{j,\omega} : j \in J_{\lfloor (1 - L_N^{\varepsilon}) L_N/2 \rfloor - 2} \}) \\
                & \qquad \qquad \qquad \qquad - \max \Big( \big\{ \hat l^{j,\omega} : j \in J_{\lceil (1 + L_N^{-\varepsilon})L_N/2 \rceil} \big \} \backslash \big\{ \max ( \{ \hat l^{j,\omega} : j \in J_{\lfloor (1 - L_N^{-\varepsilon})L_N/2 \rfloor - 2} \} ) \big\} \Big) > c_N \Big\} \\
                & \qquad \qquad \cap \Big\{ \max \Big( \big\{ \hat l^{j,\omega} : j \in J_{\lceil (1 + L_N^{-\varepsilon})L_N/2 \rceil} \big\} \backslash \big\{ \max \big( \{ \hat l^{j,\omega} : j \in J_{\lceil (1 + L_N^{-\varepsilon})L_N/2 \rceil} \} \big) \big\} \Big) \\
                & \qquad \qquad \qquad \in \big\{\hat l^{j,\omega} : j \in J_{\lceil (1+ L_N^{-\varepsilon}) L_N/2 \rceil} \backslash J_{\lfloor (1 - L_N^{-\varepsilon})L_N/2 \rfloor - 2} \big\} \Big\} \\
                & \qquad \qquad \cap \Omega_N^{(\text{R})} \cap \Omega_N^{(\text{L})} \cap \Omega_N^{(0)} \\
                & \qquad \qquad \cap \big\{ (1-L_N^{-\varepsilon})L_N/2 \le \kappa_N^{(L),\omega} \le (1+L_N^{-\varepsilon}) L_N/2 \big\} \\
                & \qquad \qquad \cap \big\{ (1-L_N^{-\varepsilon})L_N/2 \le \kappa_N^{(R),\omega} \le (1+L_N^{-\varepsilon}) L_N/2 \big\} \\
                & \qquad \qquad \cap \big \{ \max\{ \hat l^{j,\omega} : j \in J_{\lceil (1 + L_N^{-\varepsilon})L_N/2 \rceil} \backslash J_{\lfloor (1 - L_N^{-\varepsilon})L_N/2 \rfloor - 2} \} \le (1 - \varepsilon/2) \nu^{-1} \ln(L_N) \big\} \\  
                & \qquad \qquad \cap \big \{ \max\{ \hat l^{j,\omega} : j \in J_{\lfloor (1 - L_N^{-\varepsilon})L_N/2 \rfloor - 2} \} \ge (1 - \varepsilon/3) \nu^{-1} \ln(L_N) \big\} \Big) \\  
                & \quad \ge \mathds P \Big( \tilde l_{\lfloor (1 - L_N^{-\varepsilon}) \nu L_N \rfloor - 2,>}^{(1),\omega} - \tilde l_{\lfloor (1 - L_N^{-\varepsilon}) \nu L_N \rfloor - 2,>}^{(2),\omega} > c_N \Big) \\
                & \qquad + \mathds P \Big( \max \Big( \big\{ \hat l^{j,\omega} : j \in J_{\lceil (1 + L_N^{-\varepsilon})L_N/2 \rceil} \big\} \backslash \big\{ \max \big( \{ \hat l^{j,\omega} : j \in J_{\lceil (1 + L_N^{-\varepsilon})L_N/2 \rceil} \} \big) \big\} \Big) \\
                & \qquad \qquad  \quad \in \big\{\hat l^{j,\omega} : j \in J_{\lfloor (1 - L_N^{-\varepsilon})L_N/2 \rfloor - 2} \big\} \Big\} \Big)  - 1\\
                & \qquad + \mathds P \Big( \max \Big( \big\{ \hat l^{j,\omega} : j \in J_{\lceil (1 + L_N^{-\varepsilon})L_N/2 \rceil} \big\} \backslash \big\{ \max \big( \{ \hat l^{j,\omega} : j \in J_{\lceil (1 + L_N^{-\varepsilon})L_N/2 \rceil} \} \big) \big\} \Big) \\
                & \qquad \qquad \qquad \in \big\{\hat l^{j,\omega} : j \in J_{\lceil (1+ L_N^{-\varepsilon}) L_N/2 \rceil} \backslash J_{\lfloor (1 - L_N^{-\varepsilon})L_N/2 \rfloor - 2} \big\} \Big) \\
                & \qquad + 2 \mathds P ( \Omega_N^{(\text{R})} \cap \Omega_N^{(\text{L})} \cap \Omega_N^{(0)} ) - 2\\
                & \qquad + 2 \mathds P \big( (1-L_N^{-\varepsilon})L_N/2 \le \kappa_N^{(L),\omega} \le (1+L_N^{-\varepsilon}) L_N/2 \big) - 2\\
                & \qquad + 2 \mathds P \big( (1-L_N^{-\varepsilon})L_N/2 \le \kappa_N^{(R),\omega} \le (1+L_N^{-\varepsilon}) L_N/2 \big) - 2\\
                & \qquad + 2 \mathds P \big( \max\{ \hat l^{j,\omega} : j \in J_{\lceil (1 + L_N^{-\varepsilon})L_N/2 \rceil} \backslash J_{\lfloor (1 - L_N^{-\varepsilon})L_N/2 \rfloor - 2} \} \le (1 - \varepsilon/2) \nu^{-1} \ln(L_N) \big) - 2\\
                & \qquad + 2 \mathds P \big( \max\{ \hat l^{j,\omega} : j \in J_{\lfloor (1 - L_N^{-\varepsilon})L_N/2 \rfloor - 2} \} \ge (1 - \varepsilon/3) \nu^{-1} \ln(L_N) \big\} \big) - 2 \ .
            \end{align*}
            Thus, by also using Lemma~\ref{Lemma 3.1} and Lemma~\ref{Lemma 3.2} as well as the fact that $\mathds P(A) + \mathds P(A^c) =1$ for any event $A \subset \Omega$,
            \begin{align*}
                \liminf\limits_{N \to \infty} \mathds P \Big( \llargest - \llargestzwei > c_N \Big) \ge \liminf\limits_{N \to \infty} \mathds P \Big( \tilde l_{\lfloor (1 - L_N^{-\varepsilon}) \nu L_N \rfloor - 2,>}^{(1),\omega} - \tilde l_{\lfloor (1 - L_N^{-\varepsilon}) \nu L_N \rfloor - 2,>}^{(2),\omega} > c_N \Big) \ .
            \end{align*}
        \end{proof}

        The next three lemmata are statements about the probability that the difference between the length of the largest interval and the length of the second largest interval of all subintervals $(I_N^{j,\omega})_{j \in \mathds Z}$ within $\Lambda_N$ is larger than a certain quantity. They are used in connection with results from Section~\ref{section energy bounds} to show a sufficient energy gap between the eigenvalues of the random Schrödinger operator $H_{N,\RPN}$ and to subsequently conclude the occurrence of type-I g-BEC in Section~\ref{secMainResults}.

        \begin{prop} \label{Proposition 3.4}
            For any $c>0$ and any $2 \le k \in \mathds N$,
                $$\mathds P(\llargestk{1} - \llargestk{2} > c) = \e^{- \nu c} \ .$$
            Furthermore, for any $c> 0$,
            \begin{align}
                \liminf\limits_{N \to \infty} \mathds P \left( \llargest - \llargestzwei > c \right) \ge \mathrm{e}^{-\nu c} \ .
            \end{align}
        \end{prop}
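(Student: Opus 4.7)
The statement splits into two parts: an exact distributional identity for the spacing between the top two order statistics of $k$ i.i.d.\ $\mathrm{Exp}(\nu)$ variables, and an asymptotic lower bound for the corresponding spacing on a Poisson point process restricted to $\Lambda_N$. My plan is to establish the first by a short direct computation, and then to derive the second immediately from the first by invoking Lemma~\ref{Lemma 3.3}.

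For the first claim, I would fix $k \ge 2$ and use exchangeability of $\tilde l^{1,\omega}, \ldots, \tilde l^{k,\omega}$ (together with the fact that ties occur with probability zero) to write
\begin{align*}
    \mathds P\bigl( \llargestk{1} - \llargestk{2} > c \bigr) = k \, \mathds P\bigl( \tilde l^{1,\omega} > \tilde l^{j,\omega} + c \text{ for all } 2 \le j \le k \bigr).
\end{align*}
Conditioning on the value $y$ of $\tilde l^{1,\omega}$ (with density $\nu \e^{-\nu y}\mathds 1_{(0,\infty)}(y)$) and using independence of the remaining $k-1$ variables, the right-hand side becomes
\begin{align*}
    k \int_c^\infty \bigl( 1 - \e^{-\nu(y-c)} \bigr)^{k-1} \nu \e^{-\nu y} \ud y.
\end{align*}
The substitution $u = y - c$ extracts a factor $\e^{-\nu c}$, and a further substitution $v = 1 - \e^{-\nu u}$ reduces what remains to $k \int_0^1 v^{k-1} \ud v = 1$, yielding the stated identity $\e^{-\nu c}$. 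Alternatively, one could appeal directly to the memoryless property of the exponential distribution: conditionally on the argmax and on the value $s$ of the second-largest variable, the maximum is distributed as $s + \mathrm{Exp}(\nu)$, so the gap is itself $\mathrm{Exp}(\nu)$ irrespective of $s$.

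For the second claim, I would apply Lemma~\ref{Lemma 3.3} with the constant sequence $c_N := c$ and a fixed $0 < \varepsilon < 1/2$. Writing $k_N := \lfloor (1 - L_N^{-\varepsilon}) \nu L_N \rfloor - 2$, the lemma yields
\begin{align*}
    \liminf_{N \to \infty} \mathds P\bigl( \llargest - \llargestzwei > c \bigr) \ge \liminf_{N \to \infty} \mathds P\bigl( \llargestj{k_N}{1} - \llargestj{k_N}{2} > c \bigr),
\end{align*}
and since $k_N \ge 2$ for all sufficiently large $N$, the right-hand side equals $\e^{-\nu c}$ by the first part. I do not expect any serious obstacle here: the distributional identity is a textbook fact about exponential order statistics, and all the delicate boundary bookkeeping for the Poisson point process on $\Lambda_N$ has already been handled inside Lemma~\ref{Lemma 3.3}.
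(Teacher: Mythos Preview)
Your proposal is correct and follows essentially the same route as the paper: the paper also reduces the first part to the integral $k \int_{c}^{\infty} (1 - \e^{-\nu(x-c)})^{k-1} \nu \e^{-\nu x} \, \mathrm{d}x$ (arriving there via the joint density of the two top order statistics rather than via your exchangeability-plus-conditioning, but this is a cosmetic difference), and then invokes Lemma~\ref{Lemma 3.3} exactly as you do for the second part.
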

        \begin{proof}
    
            Let a $c>0$ and a $2 \le k \in \mathds N$ be arbitrarily given. We conclude, see also \cite[Section 6.3]{LenobleZagrebnovLuttingerSy},
            \begin{align}
                \begin{split} \label{zklxckjh2}
                    \mathds P(\llargestk{1} - \llargestk{2} > c) & = k(k-1) \int_{c}^{\infty} \int_0^{x-c} (1-\e^{-\nu y})^{k-2} \, \nu \e^{-\nu y}  \, \mathrm{d} y \, \nu \e^{-\nu x} \, \mathrm{d} x\\
                    & = k \int_{c}^{\infty} \left( 1 - \e^{-\nu(x - c)} \right)^{k-1} \nu \e^{-\nu x} \, \mathrm{d} x \\
                    & = \e^{- \nu c} \ .
                \end{split}
            \end{align}
            Moreover, with Lemma~\ref{Lemma 3.3},
            \begin{align*}
                \liminf\limits_{N \to \infty} \mathds P \left( \llargest - \llargestzwei > c \right) \ge \liminf\limits_{k \to \infty} \mathds P(\llargestk{1} - \llargestk{2} > c) = \e^{-\nu c} \ .
            \end{align*}
        \end{proof}

        \begin{prop} \label{Proposition 3.5}
            If $(c_N)_{N \in \mathds N} \subset [0,\infty)$ is a sequence with $\lim_{N \to \infty} c_N = 0$, then we have
            \begin{align}
                \lim\limits_{N \to \infty} \mathds P(\llargest - \llargestzwei > c_N) = 1 \ .
            \end{align}
        \end{prop}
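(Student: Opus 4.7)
The approach is to combine Lemma~\ref{Lemma 3.3} with the explicit identity from Proposition~\ref{Proposition 3.4}, exploiting the key feature of the latter: the probability $\mathds P(\llargestk{1} - \llargestk{2} > c)$ equals $\mathrm{e}^{-\nu c}$ \emph{independently of $k$}. This $k$-independence is what makes a one-line deduction possible.

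First I would observe that since $c_N \to 0$, the sequence $(c_N)_{N \in \mathds N}$ is bounded in $[0,\infty)$. Thus, for any fixed $0 < \varepsilon < 1/2$, Lemma~\ref{Lemma 3.3} applies and gives
\begin{align*}
    \liminf\limits_{N \to \infty} \mathds P \left( \llargest - \llargestzwei > c_N \right) \ge \liminf\limits_{N \to \infty} \mathds P \left( \llargestj{k_N}{1} - \llargestj{k_N}{2} > c_N \right) \ ,
\end{align*}
where $k_N := \lfloor (1 - L_N^{-\varepsilon}) \nu L_N \rfloor - 2 \to \infty$; in particular $k_N \ge 2$ for all but finitely many $N$.

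Next I would invoke Proposition~\ref{Proposition 3.4} with the substitutions $k = k_N$ and $c = c_N$. For the (at most finitely many) indices $N$ with $c_N = 0$, the event on the right has probability one trivially, since the $\tilde l^j$ are absolutely continuous and ties among them occur with probability zero. For all other $N$ the proposition gives exactly $\mathrm{e}^{-\nu c_N}$. Since $c_N \to 0$, this lower bound converges to $1$, and combined with the trivial upper bound $\mathds P(\cdot) \le 1$ the claim follows.

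The only conceptual point requiring attention — and really the only nontrivial step — is recognizing that the identity in Proposition~\ref{Proposition 3.4} is independent of $k$, so one is free to substitute an $N$-dependent $k_N$ (tending to infinity) together with an $N$-dependent threshold $c_N$ (tending to zero) without confronting any uniformity issue. Once this observation is made the proof becomes essentially immediate, so I do not anticipate any genuine obstacle beyond that bookkeeping.
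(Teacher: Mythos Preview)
Your proposal is correct and follows essentially the same approach as the paper's proof: apply Lemma~\ref{Lemma 3.3} (noting boundedness of $(c_N)$) and then the $k$-independent identity of Proposition~\ref{Proposition 3.4} to obtain $\mathrm{e}^{-\nu c_N} \to 1$. Your extra remark handling the case $c_N = 0$ is a nice bit of care that the paper omits.
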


        \begin{proof}
            With Lemma~\ref{Lemma 3.3} and the fact that
            \begin{align*}
                \mathds P(\llargestk{1} - \llargestk{2} > c_N) = \e^{- \nu c_N}
            \end{align*}
            for all $2 \le k \in \mathds N$ and all $N \in \mathds N$, see Proposition~\ref{Proposition 3.4}, we conclude, with an arbitrary $0 < \varepsilon < 1/2$,
            \begin{align*}
                & \liminf\limits_{N \to \infty} \mathds P \left( \llargest - \llargestzwei > c_N \right) \\
                & \quad \ge \liminf\limits_{N \to \infty} \mathds P \Big( \tilde l_{\lfloor (1 - L_N^{-\varepsilon}) \nu L_N \rfloor - 2,>}^{(1),\omega} - \tilde l_{\lfloor (1 - L_N^{-\varepsilon}) \nu L_N \rfloor - 2,>}^{(2),\omega} > c_N \Big)  \\
                & \quad = \liminf\limits_{N \to \infty} \e^{- \nu c_N} \\
                & \quad = 1 \ .
            \end{align*}
        \end{proof}    

        Lastly, we show that for any $c> 0$ one can obtain an arbitrarily high probability for the event $\{\llargest - \llargestN{j} > c\}$ by choosing $j \in \mathds N$ sufficiently large. In the proof of the next lemma, we use the limiting integrated density of the states of the Luttinger--Sy model \cite{luttinger1973bose, luttinger1973low, LenobleZagrebnovLuttingerSy}. For the convenience of the reader, we introduce this well-known quantum particle system and mention some important facts in the next remark.

        \begin{remark} \label{Remark Luttinger Sy model}
            The \emph{Luttinger--Sy model} is a system of noninteracting bosonic particles corresponding to the sequence of one-particle random Schrödinger operators $(H_{N,\mathrm{LS}})_{N \in \mathds N}$ that are defined, for $\mathds P$-almost all $\omega \in \Omega$ and all $N \in \mathds N$, as the strong resolvent limit $\gamma \to \infty$ of the self-adjoint operators defined by the quadratic form $q^{\omega}_{N,\gamma}$, see \eqref{quadratic form RSO gammaN}. Informally, it consists of noninteracting bosons on the real line in a Poisson random potential on $\mathds R$ with the single-site potential $\gamma \delta$ where $\delta$ is the Dirac delta function and $\gamma = \infty$ \cite[Section 2]{LenobleZagrebnovLuttingerSy}. $\mathds P$-almost surely, the one-particle Schrödinger operator $H_{N,\mathrm{LS}}^{\omega}$ has, for each $N \in \mathds N$, a purely discrete spectrum. We denote its eigenvalues, arranged in increasing order and each eigenvalue repeated according to its multiplicity, by $E_{N,\mathrm{LS}}^{j,\omega}$, $j \in \mathds N$.
            The sequence of the integrated densities of states $(\mathcal N_{N, \mathrm{LS}}^{\mathrm{I},\omega})_{N \in \mathds N}$ of the Luttinger--Sy model $\mathds P$-almost surely converges pointwise to the nonrandom limiting integrated density of states
            \begin{align}
                \mathcal N_{\infty, \mathrm{LS}}^{\mathrm{I}} : \mathds R \to [0,\infty), \ E \mapsto \mathcal N_{\infty, \mathrm{LS}}^{\mathrm{I}}(E) = \nu \dfrac{\e^{-\nu \pi E^{-1/2}}}{1 - \e^{-\nu \pi E^{-1/2}}} \mathds 1_{(0,\infty)}(E) \ .
            \end{align}
            In addition, the critical density $\rho_{c,\mathrm{LS}} = \int_{\mathds R} \mathcal B(E) \mathcal N_{\infty,\mathrm{LS}}(\mathrm{d} E)$ is finite, for any $\beta > 0$.
        \end{remark}

        \begin{prop} \label{Proposition 3.7}
            For all $\varepsilon > 0$ and all $\hat c > 0$ there is a $j \in \mathds N$ such that
            \begin{align}
                \mathds P(\llargest - \llargestN{j} > \hat c) \ge 1 - \varepsilon
            \end{align}
            for all but finitely many $N \in \mathds N$.
        \end{prop}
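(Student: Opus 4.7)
The plan is to reduce to a computation involving i.i.d.\ exponential order statistics via an argument parallel to Lemma~\ref{Lemma 3.3}, and then to invoke the Renyi representation of exponential order statistics to obtain an explicit distributional identity independent of the sample size. Fix $\varepsilon > 0$ and $\hat c > 0$.

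First I would extend Lemma~\ref{Lemma 3.3} from the gap $\llargest - \llargestzwei$ to $\llargest - \llargestN{j}$ for arbitrary $j \ge 2$. The proof transfers essentially verbatim: the concentration of $\kappa_N^{\omega}$ around $\nu L_N$ from Lemma~\ref{Lemma 3.1}, the $\mathds P$-almost sure negligibility of the two outermost intervals and of the interval containing the origin on the scale $\nu^{-1} \ln L_N$, and the i.i.d.\ $\mathrm{Exp}(\nu)$ distribution of the non-edge lengths all carry over unchanged. The outcome is that for any $\varepsilon' \in (0, 1/2)$,
\begin{equation*}
\liminf_{N \to \infty} \mathds P\bigl( \llargest - \llargestN{j} > \hat c \bigr) \;\ge\; \liminf_{N \to \infty} \mathds P\bigl( \llargestj{k_N}{1} - \llargestj{k_N}{j} > \hat c \bigr),
\end{equation*}
where $k_N := \lfloor (1 - L_N^{-\varepsilon'}) \nu L_N \rfloor - 2$.

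Next I would invoke the Renyi representation of order statistics of i.i.d.\ exponentials: for $k \ge j$, the successive top spacings $\llargestj{k}{i} - \llargestj{k}{i+1}$, $i = 1, \ldots, k-1$, are independent with $\llargestj{k}{i} - \llargestj{k}{i+1} \sim \mathrm{Exp}(i \nu)$. Telescoping yields, for independent $S_n \sim \mathrm{Exp}(\nu)$,
\begin{equation*}
\llargestk{1} - \llargestk{j} \;\overset{d}{=}\; \sum_{n=1}^{j-1} \frac{S_n}{n},
\end{equation*}
a distribution which does not depend on $k$ for any $k \ge j$. This generalizes the exact identity $\mathds P(\llargestk{1} - \llargestk{2} > c) = \e^{-\nu c}$ from Proposition~\ref{Proposition 3.4}, which is recovered when $j = 2$.

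Finally, $\E\bigl[\sum_{n=1}^{j-1} S_n/n\bigr] = \nu^{-1} H_{j-1} \to \infty$ as $j \to \infty$ while $\mathrm{Var}\bigl(\sum_{n=1}^{j-1} S_n/n\bigr) \le \nu^{-2} \pi^2/6$; Chebyshev's inequality therefore gives $\lim_{j \to \infty} \mathds P\bigl(\sum_{n=1}^{j-1} S_n/n > \hat c\bigr) = 1$. Picking $j$ so large that this probability is at least $1 - \varepsilon/2$, and combining the three steps, yields $\liminf_{N \to \infty} \mathds P\bigl( \llargest - \llargestN{j} > \hat c \bigr) \ge 1 - \varepsilon/2$, and hence $\mathds P(\llargest - \llargestN{j} > \hat c) \ge 1 - \varepsilon$ for all but finitely many $N$. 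I expect the main obstacle to be the careful bookkeeping needed to generalize Lemma~\ref{Lemma 3.3}: one must verify that on a suitable high-probability event the $j$th-largest interval lies in the bulk of genuinely i.i.d.\ non-edge intervals rather than among the two outermost intervals or the interval containing the origin, and this is precisely where Lemma~\ref{Lemma 3.2} enters crucially. The Renyi step is classical and the Chebyshev step is routine.
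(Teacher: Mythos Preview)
Your proposal is correct but follows a genuinely different route from the paper's own proof. The paper does not extend Lemma~\ref{Lemma 3.3} or invoke the R\'enyi representation at all; instead it goes through the spectral theory of the Luttinger--Sy model (Remark~\ref{Remark Luttinger Sy model}). Concretely, the paper first fixes $\tilde c_1>0$ so that $\mathds P(\llargest > \nu^{-1}\ln L_N - \tilde c_1)\ge 1-\varepsilon/2$, sets $\tilde c_2:=\hat c+\tilde c_1$, and then uses the explicit limiting integrated density of states $\mathcal N_{\infty,\mathrm{LS}}^{\mathrm{I}}$ together with \eqref{ENN le c Ninfty} to bound $\E[|\{i:E_{N,\mathrm{LS}}^{i,\omega}\le \widehat E_N\}|]$ for $\widehat E_N=(\nu\pi/(\ln L_N-\nu\tilde c_2))^2$. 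A Markov--type bound then gives $\mathds P(E_{N,\mathrm{LS}}^{j,\omega}>\widehat E_N)\ge 1-\varepsilon/2$ for $j$ large, and since $E_{N,\mathrm{LS}}^{j,\omega}\le(\pi/\llargestN{j})^2$ this yields $\llargestN{j}\le\nu^{-1}\ln L_N-\tilde c_2$ with high probability; combining the two one--sided bounds gives the gap $>\hat c$.

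Your argument is more elementary and more transparent: the R\'enyi identity $\llargestk{1}-\llargestk{j}\overset{d}{=}\sum_{n=1}^{j-1}S_n/n$ with i.i.d.\ $S_n\sim\mathrm{Exp}(\nu)$ gives a sample--size--free distribution whose mean $\nu^{-1}H_{j-1}$ diverges while the variance stays bounded, so no spectral input is needed. The paper's approach, by contrast, reuses the Luttinger--Sy density--of--states machinery already present in the paper, which is economical in context but less direct. Your identification of the main obstacle is accurate: the extension of Lemma~\ref{Lemma 3.3} requires showing that with high probability the $j$ largest intervals among $(I_N^{i,\omega})_i$ all lie in the bulk of genuinely i.i.d.\ non--edge lengths; this follows because the $j$th largest among $\sim\nu L_N$ i.i.d.\ $\mathrm{Exp}(\nu)$ variables exceeds $(1-\varepsilon/3)\nu^{-1}\ln L_N$ with probability tending to one for any fixed $j$ (the number of exceedances is binomial with mean $\sim L_N^{\varepsilon/3}\to\infty$), while the maximum over the $O(L_N^{1-\varepsilon})$ extra and edge intervals stays below $(1-\varepsilon/2)\nu^{-1}\ln L_N$ as in the original proof of Lemma~\ref{Lemma 3.3}.
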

        \begin{proof}
            We follow the proof idea of \cite[Theorem 3.2]{KPS182} in large parts. Let $\hat c > 0$ and $0 < \varepsilon < 1$ be arbitrarily given. We choose an $\tilde c_1 >0$ such that
            \begin{align*}
                \mathds P \left( \llargest > \nu^{-1} \ln(L_N) - \tilde c_1 \right) \ge 1 - \varepsilon/2
            \end{align*}
            for all but finitely many $N \in \mathds N$; see, e.g., \cite[Theorem C.6]{KPS18} or \cite[Section 3.3, Lemma 3.4]{sznitman1998brownian} regarding a reason why such a choice of $\tilde c_1$ is possible.
    
            Next, we define $\tilde c_2 := \hat c + \tilde c_1 > 0$. For
            \begin{align*}
                \widehat E_{N} := \left( \dfrac{\nu \pi}{\ln(L_N) - \nu \tilde c_2} \right)^2
            \end{align*}
            we have
            \begin{align*}
                \E[\mathcal N_{N,\mathrm{LS}}^{\mathrm{I},\omega}(\widehat E_{N})] \le \mathcal N_{\infty, \mathrm{LS}}^{\mathrm{I}}(\widehat E_{N}) = \nu \dfrac{\e^{\nu \tilde c_2} L_N^{-1}}{1 - \e^{\nu \tilde c_2} L_N^{-1}}
            \end{align*}
            for all $N \in \mathds N$, see the information provided below \eqref{ENN le c Ninfty}. We multiply both sides of this inequality by $L_N$ and conclude
            \begin{align*}
                j \, \mathds P\left( \big| \big\{ i \in \mathds N : E_{N,\mathrm{LS}}^{i,\omega} \le \widehat E_{N} \big\} \big| \ge j \right) & \le \E \left[ \big| \big\{ i \in \mathds N : E_{N,\mathrm{LS}}^{i,\omega} \le \widehat E_{N} \big\} \big| \right] \le \dfrac{\nu \e^{\nu \tilde c_2}}{1 - \e^{\nu \tilde c_2}L_N^{-1}}
            \end{align*}
            for any $j \in \mathds N$ and all $N \in \mathds N$. Thus,
            \begin{align*}
                \mathds P \left( E_{N,\mathrm{LS}}^{j,\omega} > \widehat E_{N} \right) = \mathds P \left( \big| \big\{ i \in \mathds N : E_{N,\mathrm{LS}}^{i,\omega} \le \widehat E_{N} \big\} \big| \le j - 1 \right) \ge 1 - \dfrac{\nu \e^{\nu \tilde c_2}}{1 - \e^{\nu \tilde c_2} L_N^{-1}} \dfrac{1}{j}
            \end{align*}
            for any $j \in \mathds N$ and all $N \in \mathds N$. After choosing, for example, $j = \lceil 3 \nu \e^{\nu \tilde c_2} \varepsilon^{-1} \rceil$, that is, $j$ as the smallest natural number larger than or equal to $3 \nu \e^{\nu \tilde c_2} \varepsilon^{-1}$, we obtain for all sufficiently large $N \in \mathds N$
            \begin{align*}
                \mathds P \left( E_{N,\mathrm{LS}}^{j,\omega} \ge \widehat E_{N} \right) \ge 1 - \varepsilon/2 \ .
            \end{align*}
            Since $E_{N,\mathrm{LS}}^{j,\omega} \le ( \pi / \llargestN{j})^2$ $\mathds P$-almost surely, we have
            \begin{align*}
                \mathds P \left( \llargestN{j} \le \nu^{-1} \ln(L_N) - \tilde c_2 \right) & \ge 1 - \varepsilon/2
            \end{align*}
            for all sufficiently large $N \in \mathds N$.
    
            Finally, we conclude
            \begin{align*}
                & \mathds P \left( \llargestN{1} - \llargestN{j} > \hat c \right) \\
                & \quad \ge \mathds P \left( \llargest > \nu^{-1} \ln(L_N) - \tilde c_1, \llargestN{j} \le \nu^{-1} \ln(L_N) - \tilde c_2 \right) \\
                & \quad \ge \mathds P \left( \llargestN{j} \le \nu^{-1} \ln(L_N) - \tilde c_2 \right) + \mathds P \left( \llargest > \nu^{-1} \ln(L_N) - \tilde c_1 \right) - 1 \\
                & \quad \ge 1 - \varepsilon
            \end{align*}
            for all sufficiently large $N \in \mathds N$.
        \end{proof}

    \section{Energy bounds} \label{section energy bounds}

        The upper bound for the lowest eigenvalue in~\eqref{Def omega N j}, for any $0<\zeta_2<1$, is fulfilled with a probability that converges to one in the limit $N \to \infty$, see, e.g., Theorem 3.1 in Section 3.3 or Theorem 4.6 in Section 4.4 of \cite{sznitman1998brownian} (note that \cite{sznitman1998brownian} uses $-\frac{1}{2} \laplace$ instead of $-\laplace$ in \eqref{Definition RSO} and does not consider Poisson random potentials with a strength that converges to infinity). This can be shown by using the Rayleigh--Ritz variational principle and the fact that with high probability a sufficiently large interval free of atoms of $\PoissonrandommeasureRomega$ occurs within $\Lambda_N$, see Lemma~\ref{Lemma 3.2}. For the convenience of the reader, we provide more details in the next proposition. Recall from Section~\ref{SecPrelim} the definition of $\CSU$, $\CSUleft$, and $\CSUright$, in particular, \eqref{Def CSU}. Also, $(V_N)_{N \in \mathds N}$ is in this section either a Poisson random potential of fixed strength or with a strength that converges to infinity, see Remark~\ref{remark RP}.

        \begin{prop} \label{Proposition 4.1}
            We have
            \begin{align*}
                \lim\limits_{N \to \infty} \mathds P \Big( E_{N,\RPN}^{1,\omega} & \le \dfrac{\pi^2}{\big( \llargest - \CSU \big)^2} \Big) = 1 \ .
            \end{align*}
        \end{prop}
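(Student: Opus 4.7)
The plan is to apply the Rayleigh--Ritz variational principle with a test function supported in an atom-free, and therefore potential-free, subinterval of $\Ilargest$. Writing $\Ilargest = (\hat x_k^\omega, \hat x_{k+1}^\omega)$ on the event that both endpoints are atoms of $\PoissonrandommeasureRomega$ lying in $\Lambda_N$, I observe that the single-site potential $u$ has support in $[-\CSUleft, \CSUright]$, so each atom $\hat x_j^\omega$ contributes to $V_N$ only on $[\hat x_j^\omega - \CSUleft, \hat x_j^\omega + \CSUright]$. Atoms at or to the left of $\hat x_k^\omega$ therefore contribute nothing to the right of $\hat x_k^\omega + \CSUright$, and atoms at or to the right of $\hat x_{k+1}^\omega$ contribute nothing to the left of $\hat x_{k+1}^\omega - \CSUleft$. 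Hence $V_N \equiv 0$ on the open subinterval $J_N^\omega := (\hat x_k^\omega + \CSUright,\ \hat x_{k+1}^\omega - \CSUleft)$, which has length $\llargest - \CSU$ whenever $\llargest > \CSU$.

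Next, I would show that the event that $\llargest > \CSU$ and that $\Ilargest$ is not one of the two boundary cells of $\Lambda_N$ has probability tending to one. Lemma~\ref{Lemma 3.2} already gives $\mathds P(\llargest > \CSU) \to 1$, since $\CSU$ is fixed while $\llargest$ grows like $\nu^{-1}\ln L_N$. Moreover, by the memorylessness of the Poisson process, the lengths of the two boundary cells of $\Lambda_N$ are stochastically dominated by independent $\mathrm{Exp}(\nu)$ random variables; hence both lie below $(1-\zeta)\nu^{-1}\ln L_N$ with probability $1 - O(L_N^{-(1-\zeta)})$ for any $\zeta \in (0,1)$, while Lemma~\ref{Lemma 3.2} places $\llargest$ above this threshold with probability tending to one. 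Combining these forces $\Ilargest$ to be an interior cell with high probability.

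On this high-probability event I take as trial function the normalized first Dirichlet eigenfunction of $-\laplace$ on $J_N^\omega$, extended by zero to $\Lambda_N$. It lies in $H_0^1(\Lambda_N)$ with unit $L^2$-norm, and since $V_N$ vanishes on its support, the Rayleigh--Ritz principle together with the explicit formula for the ground-state eigenvalue on an interval of length $\llargest - \CSU$ yields
\begin{align*}
E_{N,\RPN}^{1,\omega} \le \frac{\pi^2}{(\llargest - \CSU)^2}.
\end{align*}
The inequality thus holds on an event of probability tending to one, which is the claim. The main delicate point is the case where $\Ilargest$ touches $\partial\Lambda_N$, because then atoms lying just outside $\Lambda_N$ could render $V_N$ nonzero inside $J_N^\omega$; but this is precisely the scenario ruled out with high probability by the boundary-cell estimate in the second step.
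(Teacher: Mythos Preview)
Your proof is correct and follows essentially the same route as the paper: apply the Rayleigh--Ritz principle with the first Dirichlet eigenfunction on the subinterval $(\hat x_k^\omega + \CSUright,\ \hat x_{k+1}^\omega - \CSUleft)$ of $\Ilargest$, on which $V_N$ vanishes, and invoke Lemma~\ref{Lemma 3.2}.

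One remark: your boundary-cell detour is unnecessary. Even if $\Ilargest$ touches $\partial\Lambda_N$, say $\Ilargest = (\hat x_k^\omega, L_N/2)$, the paper's test interval $(\hat x_k^\omega + \CSUright,\ L_N/2 - \CSUleft)$ is still potential-free. Indeed, for $x$ in this interval an atom at $\hat x_j^\omega$ contributes to $V_N(x)$ only if $\hat x_j^\omega \in [x-\CSUright, x+\CSUleft] \subset (\hat x_k^\omega, L_N/2)$, and there are no atoms in this range regardless of whether atoms exist just outside $\Lambda_N$. So the inequality $E_{N,\RPN}^{1,\omega} \le \pi^2/(\llargest - \CSU)^2$ holds deterministically for every $\omega \in \widetilde\Omega$ with $\llargest > \CSU$, and only Lemma~\ref{Lemma 3.2} is needed to conclude. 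Your argument excluding the boundary cells is correct but adds work that the uniform construction avoids.
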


        \begin{proof}
            Firstly, note that
            \begin{align*}
                E_{N,\RPN}^{1,\omega} = \inf\Big\{ \int\limits_{\Lambda_N} |\varphi'(x) |^2 \ud x + \int\limits_{\Lambda_N} V_N(x) | \varphi(x) |^2 \ud x : \varphi \in \mathrm{H}_0^1(\Lambda_N), \|\varphi\|_{2} = 1 \Big\} \ .
            \end{align*}
            Let $\tilde x^{(1), \mathrm{left}, \omega}_N, \tilde x^{(1), \mathrm{right}, \omega}_N \in \mathds Z$ be the points that correspond to the largest interval $\Ilargest$ within $\Lambda_N$, that is, such that $\Ilargest = (\tilde x^{(1), \mathrm{left}, \omega}_N, \tilde x^{(1), \mathrm{right}, \omega}_N)$. We define $\widetilde I_{N,>}^{(1),\omega} := (\tilde x^{(1), \mathrm{left}, \omega}_N + \CSUright, \tilde x^{(1), \mathrm{right}, \omega}_N - \CSUleft)$, with the understanding that $(b,a) = \emptyset$ if $a \le b$, and choose
            \begin{align*}
                \varphi_N^{\omega} : \Lambda_N & \to \mathds R, \\
                x & \mapsto \varphi_N^{\omega}(x) := \Big( \frac{2}{\llargest - \CSU} \Big)^{1/2} \sin \Big( \frac{\pi (x - \tilde x^{(1), \mathrm{left}, \omega}_N - \CSUright)}{\llargest - \CSU}  \Big) \mathds 1_{\widetilde I_{N,>}^{(1),\omega}}(x)
            \end{align*}
            for $\mathds P$-almost all $\omega \in \Omega$ and all $N \in \mathds N$ with $\llargest > \CSU$. Thus,
            \begin{align*}
                & \int\limits_{\Lambda_N} V_N(x) | \varphi_N^{\omega}(x) |^2 \ud x  = 0
            \end{align*}
            and consequently
            \begin{align*}
                E_{N,\RPN}^{1,\omega} & \le \dfrac{\pi^2}{\big( \llargest - \CSU \big)^2}
            \end{align*}
            for $\mathds P$-almost all $\omega \in \Omega$ and all $N \in \mathds N$ such that $\llargest > \CSUleft + \CSUright$.
            After taking into account Lemma~\ref{Lemma 3.2}, we have proved this theorem.
        \end{proof}

        To prove that the gap between the lowest eigenvalue $E_{N,\RPN}^{1,\omega}$ and some higher eigenvalue $E_{N,\RPN}^{j,\omega}$, $2 \le j \in \mathds N$, in \eqref{Def omega N j} holds under certain circumstances with high probability as well, we next find a sufficiently good lower bound for $E_{N,\RPN}^{j,\omega}$, $2 \le j \in \mathds N$. As a first step, we use a lower bound for the first eigenvalue $E_{\left( -\laplace + V_N(\omega) \right)_{\IlargestN{j}}^{\text{Neu}}}^{1,\omega}$ of the operator $(-\laplace + V_N(\omega))_{\IlargestN{j}}^{\text{Neu}}$ with Neumann boundary conditions. Note that $(-\laplace + V_N(\omega))_{\IlargestN{j}}^{\text{Neu}}$ is the unique self-adjoint operator defined by the quadratic form
        \begin{equation}
            \begin{split}
                \mathrm{H}^1(\IlargestN{j}) \times \mathrm{H}^1(\IlargestN{j}) & \to \mathds C \\
                (\varphi,\psi) & \mapsto \int\limits_{\IlargestN{j}} \overline{\varphi^{\prime}(x)}\psi^{\prime}(x) \ud x +  \int\limits_{\IlargestN{j}} \overline{\varphi(x)} V_N(\omega,x) \psi(x) \ud x
            \end{split}
        \end{equation}
        on $L^2(\IlargestN{j})$ and has a purely discrete spectrum, see, e.g., \cite[Theorem 5.1 and p. 103]{pastur1992spectra} or \cite[p. 263]{reed1978analysisoperators}.

        \begin{lemma} \label{Lemma 4.2}
        Let $\omega \in \widetilde \Omega$, $N \in \mathds N$, and $j \in \mathds N$ be given. Furthermore, we assume that $\llargestN{j} \ge 2 \CSU$. Then for any $0 < a \le \CSUright$ and $0 < b \le \CSUleft$
        we have
        \begin{equation} \label{Inequality Lower bound}
            E_{(-\laplace + V_N(\omega))_{\IlargestN{j}}^{\mathrm{Neu}}}^{1,\omega} \ge \dfrac{\pi^2}{(\llargestN{j} - (\CSU-a-b))^2} - \dfrac{(4 \pi)^2 (8\pi+1)^2}{\widetilde \WWS_N^{a,b}} \dfrac{1}{(\llargestN{j} - \CSU)^2 \llargestN{j}}
        \end{equation}
        where
        \begin{equation}
            \widetilde \WWS_N^{a, b} := \WWS_N \min \Big\{\int_{\CSUright-a}^{\CSUright} u(x) \ud x , \int_{-\CSUleft}^{-\CSUleft+b} u(x) \ud x \Big\} > 0 \ .
        \end{equation}
    \end{lemma}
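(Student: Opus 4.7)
The plan is to combine a Dirichlet--Poincar\'e inequality on a carefully chosen subinterval with a potential-induced localization of the Neumann ground state near the two endpoints of $\IlargestN{j}$. Let $\varphi$ be a normalized ground state with eigenvalue $\lambda := E_{(-\laplace + V_N(\omega))_{\IlargestN{j}}^{\mathrm{Neu}}}^{1,\omega}$, write $\IlargestN{j} = (x_L, x_R)$, $\ell := \llargestN{j}$, and $\ell' := \ell - \CSU + a + b$. The only atoms of the Poisson configuration that can contribute to $V_N(\omega,\cdot)$ inside $\IlargestN{j}$ are the ones at $x_L$ and $x_R$, and all other atomic contributions are nonnegative; consequently on the two detection windows $W_L := [x_L + \CSUright - a, x_L + \CSUright]$ and $W_R := [x_R - \CSUleft, x_R - \CSUleft + b]$ (both contained in $\IlargestN{j}$ since $\ell \ge 2\CSU$) one has $\int_{W_L} V_N \ud x \ge \WWS_N \int_{\CSUright - a}^{\CSUright} u(y) \ud y \ge \widetilde\WWS_N^{a,b}$ and symmetrically for $W_R$. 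Combining the pointwise bound $\int_{W_L} V_N \varphi^2 \ge (\min_{W_L} \varphi^2)\,\widetilde\WWS_N^{a,b}$ with $\int_{\IlargestN{j}} V_N \varphi^2 \le \lambda$ yields points $x_1 \in W_L$, $x_2 \in W_R$ with $\varphi(x_i)^2 \le \lambda/\widetilde\WWS_N^{a,b}$, $i=1,2$, and $\ell - \CSU \le x_2 - x_1 \le \ell'$.

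Next, I apply a Dirichlet--Poincar\'e estimate on $[x_1, x_2]$: decompose $\varphi = l + g$ there, with $l$ the affine interpolation of the boundary data $\varphi(x_1),\varphi(x_2)$, so that $g \in H_0^1([x_1,x_2])$ and the Dirichlet bound gives $\int_{x_1}^{x_2} |g'|^2 \ge \pi^2 (x_2-x_1)^{-2} \int g^2 \ge \pi^2 (\ell')^{-2} \int g^2$. Since $\int_{x_1}^{x_2} g' = 0$, the cross-term in $|\varphi'|^2 = |g' + l'|^2$ vanishes, hence $\int_{x_1}^{x_2}|\varphi'|^2 \ge \int |g'|^2$. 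To assemble, I use the normalization $\int_{\IlargestN{j}} \varphi^2 = 1$, estimate the two tails $[x_L, x_1]$ and $[x_2, x_R]$ (each of length $\le \CSU$) through $\varphi(x) = \varphi(x_i) + \int_{x_i}^x \varphi'$ and Cauchy--Schwarz so that $\int_{\mathrm{tail}} \varphi^2 \le 2\CSU\,\varphi(x_i)^2 + \CSU^2 \int_{\mathrm{tail}} |\varphi'|^2$, and relate $\int g^2$ to $\int \varphi^2$ on $[x_1, x_2]$ via Young's inequality applied to $2 \int g l$, together with $\int l^2 \le 2\ell\lambda/\widetilde\WWS_N^{a,b}$ obtained from the bound on $\varphi(x_i)^2$. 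Collecting terms gives an inequality of the form $\lambda \ge \pi^2(\ell')^{-2}(1 - \eta_1) - \eta_2$, with both $\eta_1$ and $\eta_2$ of order $\lambda/\widetilde\WWS_N^{a,b}$, which rearranges to the stated bound.

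The main technical obstacle is the bookkeeping in the final step: every $\lambda$-dependent remainder must be of precisely the right order so that, after rearrangement, the leading term retains the clean coefficient $\pi^2(\ell - \CSU + a + b)^{-2}$ (without any $O(\lambda/\widetilde\WWS_N^{a,b})$-style modification of its prefactor) while the correction acquires exactly the shape $(\widetilde\WWS_N^{a,b}(\ell-\CSU)^2\ell)^{-1}$ with the explicit numerical constant $(4\pi)^2(8\pi+1)^2$. The choice of the Young parameter in the estimate of $2\int gl$ and the tight handling of the tail contributions $\CSU^2 \int_{\mathrm{tail}}|\varphi'|^2$ are the delicate points; in particular, the hypothesis $\llargestN{j} \ge 2\CSU$ ensures that $\ell - \CSU \ge \ell/2$, which is what lets the $\CSU^2$-type terms be absorbed into the announced correction. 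Steps~1 and~2 are essentially routine Rayleigh--Ritz-type manipulations; the key conceptual input is Step~1's identification of $\widetilde\WWS_N^{a,b}$ as exactly the $L^1$-mass of the neighbor-atom contribution to $V_N$ on the detection windows, which forces $\varphi$ to vanish to leading order there and legitimizes treating $[x_1, x_2]$ as a near-Dirichlet subinterval.
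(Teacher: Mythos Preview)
Your overall architecture---locate points $x_1,x_2$ in the detection windows $W_L,W_R$ where the ground state is small, subtract the affine interpolant, and apply a Dirichlet--Poincar\'e bound---is close to what the paper does. However, your claim that the errors $\eta_1,\eta_2$ are both of order $\lambda/\widetilde\WWS_N^{a,b}$ does not hold, and this is where the argument breaks down rather than being a bookkeeping detail.

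The issue is that you spend the potential term once, to obtain $\varphi(x_i)^2\le \lambda/\widetilde\WWS_N^{a,b}$, and thereafter work only with the kinetic part. In your Young step for $2\int gl$ one has $\|l\|_{L^2[x_1,x_2]}^2\le \ell\,\lambda/\widetilde\WWS_N^{a,b}$; any choice of the Young parameter then forces either a constant loss in the leading coefficient or a correction of size at least $\|l\|_2\sim(\ell\lambda/\widetilde\WWS_N^{a,b})^{1/2}$. Plugging in the a~priori bound $\lambda\le \pi^2/(\ell-\CSU)^2$ gives a correction to $\lambda$ of order
\[
\frac{1}{(\ell')^2(\ell-\CSU)}\sqrt{\frac{\ell}{\widetilde\WWS_N^{a,b}}}\;\sim\;\frac{1}{\ell^{5/2}\sqrt{\widetilde\WWS_N^{a,b}}},
\]
not the stated $1/((\ell-\CSU)^2\ell\,\widetilde\WWS_N^{a,b})\sim 1/(\ell^3\widetilde\WWS_N^{a,b})$. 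This is not merely a worse constant: the $\widetilde\WWS_N^{a,b}$--dependence is $1/\sqrt{\widetilde\WWS_N^{a,b}}$ instead of $1/\widetilde\WWS_N^{a,b}$, and the $\ell$--power is $5/2$ instead of $3$. In the applications (Theorems~5.4 and~5.5) the energy gap to beat is of order $1/\ell^3$, so your correction would dominate unless $\widetilde\WWS_N^{a,b}\gg \ell\sim\ln L_N$, which is not assumed.

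The paper avoids this by \emph{not} converting the smallness of the minima $\beta_N^{\mathrm{left}},\beta_N^{\mathrm{right}}$ into a bound in terms of $\lambda$. Instead it subtracts a linear function on the whole interval, handles the two Neumann tails by even reflection (so that the Poincar\'e inequality applies uniformly with constant $\pi^2/(\ell-(\CSU-a-b))^2$ across all three pieces), and keeps the potential contribution $\ge \tfrac12\widetilde\WWS_N^{a,b}\beta_N^2$ inside the lower bound. This produces an inequality of the form
\[
\lambda\;\ge\;\Big(\frac{\pi}{\ell'}-c\,\beta_N\,\ell^{-1/2}\Big)^2+\tfrac12\,\widetilde\WWS_N^{a,b}\,\beta_N^2,
\]
valid for the \emph{actual} $\beta_N=\beta_N^{\mathrm{left}}+\beta_N^{\mathrm{right}}$; minimising the right-hand side over $\beta_N$ then gives the correction proportional to $1/\widetilde\WWS_N^{a,b}$ with the explicit constant $(4\pi)^2(8\pi+1)^2$. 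The key point is that retaining the quadratic potential term and optimising at the end is what buys the extra factor of $1/\sqrt{\widetilde\WWS_N^{a,b}}$.
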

    \begin{proof}
        We follow \cite[Section 3.3, proof of Lemma 3.2]{sznitman1998brownian} in large parts. Let $0 < a \le \CSUright$ and $0 < b \le \CSUleft$ be arbitrarily given.
        Note that
        \begin{align*}
            & E_{(-\laplace + V_N(\omega))_{\IlargestN{j}}^{\mathrm{Neu}}}^{1,\omega} = \min\Big\{ \int\limits_{\IlargestN{j}} \left( |\varphi'|^2(x)  + V_N(\omega,x) |\varphi|^2(x) \right) \, \mathrm{d} x  : \varphi \in H^1(\IlargestN{j}), \|\varphi\|_{2}= 1 \Big\} \ .
        \end{align*}
        We conclude that there is a nonnegative function $\varphi_N \in \mathrm{H}^1(\IlargestN{j}) \cap \mathrm{C}(\overline{\IlargestN{j}})$ with $\|\varphi_N\|_2 = 1$ such that
        \begin{align*}
            & E_{(-\laplace + V_N(\omega))_{\IlargestN{j}}^{\mathrm{Neu}}}^{1,\omega} = \int\limits_{\IlargestN{j}} \left( |\varphi_N'|^2(x)  + V_N(\omega,x) |\varphi_N|^2(x) \right) \, \mathrm{d} x \ .
        \end{align*}
    
        In order to estimate $\int_{\IlargestN{j}} |\varphi_N'|^2(x) \, \mathrm{d} x$ from below, we use the Poincare inequality. We denote the two points that correspond to the $j$th largest interval by $\tilde x_N^{(j),\mathrm{left},\omega}$ and $\tilde x_N^{(j),\mathrm{right},\omega}$, $\IlargestN{j} = (\tilde x_N^{(j),\mathrm{left},\omega}, \tilde x_N^{(j),\mathrm{right},\omega})$. We subtract a linear function from $\varphi_N$ to introduce zeros in $[\tilde x_N^{(j),\mathrm{left},\omega} + \CSUright - a,  \tilde x_N^{(j),\mathrm{left},\omega} + \CSUright]$ and in $[\tilde x_N^{(j),\mathrm{right},\omega} - \CSUleft, \tilde x_N^{(j),\mathrm{right},\omega} - \CSUleft + b]$. 
        We indicate the location of the minimum of $\varphi_N$ in $[\tilde x_N^{(j),\mathrm{left},\omega} + \CSUright - a,  \tilde x_N^{(j),\mathrm{left},\omega} + \CSUright]$ and in $[\tilde x_N^{(j),\mathrm{right},\omega} - \CSUleft, \tilde x_N^{(j),\mathrm{right},\omega} - \CSUleft + b]$ by $\alpha_N^{\text{left}}$ and $\alpha_N^{\mathrm{right}}$, respectively. Furthermore, we denote the minimum of $\varphi_N$ in $[\tilde x_N^{(j),\mathrm{left},\omega} + \CSUright - a,  \tilde x_N^{(j),\mathrm{left},\omega} + \CSUright]$ and in $[\tilde x_N^{(j),\mathrm{right},\omega} - \CSUleft, \tilde x_N^{(j),\mathrm{right},\omega} - \CSUleft + b]$ by $\beta_N^{\text{left}}$ and $\beta_N^{\mathrm{right}}$, respectively. In conclusion,
        \begin{align*}
            \beta_N^{\text{left}} = \varphi_N(\alpha_N^{\text{left}}) = \min\{ \varphi_N(x) : x \in [\tilde x_N^{(j),\mathrm{left},\omega} + \CSUright - a,  \tilde x_N^{(j),\mathrm{left},\omega} + \CSUright] \}
            \shortintertext{and}
            \beta_N^{\mathrm{right}} = \varphi_N(\alpha_N^{\mathrm{right}}) = \min\{ \varphi_N(x) : [\tilde x_N^{(j),\mathrm{right},\omega} - \CSUleft, \tilde x_N^{(j),\mathrm{right},\omega} - \CSUleft + b] \} \ .
        \end{align*}
        After subtracting the linear function
        $$g_N: \overline{\IlargestN{j}} \to \mathds R, \quad x \mapsto g_N(x) := \dfrac{\beta_N^{\mathrm{right}}}{\alpha_N^{\mathrm{right}} - \alpha_N^{\text{left}}} (x - \alpha_N^{\text{left}}) + \dfrac{\beta_N^{\text{left}}}{\alpha_N^{\mathrm{right}} - \alpha_N^{\text{left}}} (\alpha_N^{\mathrm{right}} - x)$$
        from $\varphi_N$, we have $(\varphi_N - g_N)(\alpha_N^{\text{left}}) = (\varphi_N - g_N)(\alpha_N^{\mathrm{right}}) = 0$. Also,
        \begin{align*}
            \|\varphi_N'\|_2^2 & = \|\varphi_N' - g_N' + g_N'\|_2^2 \ge \left(\|(\varphi_N - g_N)' \|_2 - \|g_N'\|_2 \right)^2 \ .
        \end{align*}
        We extend $(\varphi_N - g_N)$ by symmetry to $[2 \tilde x_N^{(j),\mathrm{left},\omega} - \alpha_N^{\text{left}}, 2 \tilde x_N^{(j),\mathrm{right},\omega} - \alpha_N^{\mathrm{right}}]$,
        \begin{align*}
            & \hat \varphi_N - \hat g_N : [2 \tilde x_N^{(j),\mathrm{left},\omega}- \alpha_N^{\text{left}}, 2 \tilde x_N^{(j),\mathrm{right},\omega} - \alpha_N^{\mathrm{right}}] \to \mathds R \\
            & \qquad x \mapsto (\hat \varphi_N - \hat g_N)(x) :=
            \begin{cases}
                (\varphi_N - g_N)(x) & \text{if } x \in [\tilde x_N^{(j),\mathrm{left},\omega}, \tilde x_N^{(j),\mathrm{right},\omega}] \vspace{0.05in} \\
                (\varphi_N - g_N)(2 \tilde x_N^{(j),\mathrm{left},\omega}- x) & \text{if } x \in [2 \tilde x_N^{(j),\mathrm{left},\omega}- \alpha_N^{\text{left}}, \tilde x_N^{(j),\mathrm{left},\omega}) \vspace{0.05in} \\
                (\varphi_N - g_N)(2 \tilde x_N^{(j),\mathrm{right},\omega} - x) & \text{if } x \in (\tilde x_N^{(j),\mathrm{right},\omega}, 2 \tilde x_N^{(j),\mathrm{right},\omega} - \alpha_N^{\mathrm{right}}] \\
            \end{cases} \ .
        \end{align*}
        Note that $(\hat \varphi_N - \hat g_N)(2 \tilde x_N^{(j),\mathrm{left},\omega}- \alpha_N^{\text{left}}) = (\varphi_N - g_N)(\alpha_N^{\text{left}}) = 0$ and $(\hat \varphi_N - \hat g_N)(2 \tilde x_N^{(j),\mathrm{right},\omega} -\alpha_N^{\mathrm{right}}) = 0$.
        With the Poincare inequality, we now obtain
        \begin{align*}
            & \|(\varphi_N - g_N)'\|_2^2 \\
            & \quad = \int_{\alpha_N^{\text{left}}}^{\alpha_N^{\mathrm{right}}} (\varphi_N - g_N)'^2(x) \ud x + \dfrac{1}{2} \int_{2 \tilde x_N^{(j),\mathrm{left},\omega}- \alpha_N^{\text{left}}}^{\alpha_N^{\text{left}}} (\hat \varphi_N - \hat g_N)'^2(x) \ud x \\
            & \qquad + \, \dfrac{1}{2} \int_{\alpha_N^{\mathrm{right}}}^{2 \tilde x_N^{(j),\mathrm{right},\omega} - \alpha_N^{\mathrm{right}}} (\hat \varphi_N - \hat g_N)'^2(x) \ud x \\
            & \quad \ge \dfrac{\pi^2}{(\alpha_N^{\mathrm{right}} - \alpha_N^{\text{left}})^2} \int_{\alpha_N^{\text{left}}}^{\alpha_N^{\mathrm{right}}} (\varphi_N - g_N)^2 (x) \ud x \\
            & \qquad + \, \dfrac{\pi^2}{(2 \alpha_N^{\text{left}} - 2 \tilde x_N^{(j),\mathrm{left},\omega})^2} \int_{\tilde x_N^{(j),\mathrm{left},\omega}}^{\alpha_N^{\text{left}}} (\varphi_N - g_N)^2 (x) \ud x\\
            & \qquad + \, \dfrac{\pi^2}{(2 \tilde x_N^{(j),\mathrm{right},\omega} - 2 \alpha_N^{\mathrm{right}})^2} \int_{\alpha_N^{\mathrm{right}}}^{\tilde x_N^{(j),\mathrm{right},\omega}} (\varphi_N - g_N)^2(x) \ud x \\
            & \quad \ge \dfrac{\pi^2}{(\llargestN{j} - (\CSU - a - b))^2} \|\varphi_N - g_N\|_2^2 \ .
        \end{align*}
        Consequently, with $\beta_N := \beta_N^{\mathrm{right}} + \beta_N^{\text{left}}$,
        \begin{align*}
            \|\varphi_N'\|_2^2 & \ge \left( \dfrac{\pi \|\varphi_N -g_N\|_2}{\llargestN{j} - (\CSU - a - b)}  - \|g_N'\|_2 \right)^2 
            \ge \left( \dfrac{\pi (\|\varphi_N\|_2  - \| g_N\|_2)}{\llargestN{j} - (\CSU - a - b)}  - \beta_N \dfrac{1}{(\llargestN{j})^{1/2}}  \right)^2 \\
            & \ge \left( \dfrac{\pi}{\llargestN{j} - (\CSU - a - b)} - \left( 8\pi +1 \right) 2 \beta_N \dfrac{1}{(\llargestN{j})^{1/2}}  \right)^2 \ .
        \end{align*}
        In conclusion,
        \begin{align*}
            & E_{(-\laplace + V_N(\omega))_{\IlargestN{j}}^{\mathrm{Neu}}}^{1,\omega} \\
            & \quad \ge \, \left( \dfrac{\pi}{\llargestN{j} - (\CSU - a - b)} - \left( 8\pi +1 \right) 2 \beta_N \dfrac{1}{(\llargestN{j})^{1/2}}  \right)^2 \\
            & \qquad \qquad  + \, (\beta_N^{\mathrm{left}})^2 \int_{\tilde x_N^{(j),\mathrm{left},\omega} + \CSUright - a}^{\tilde x_N^{(j),\mathrm{left},\omega} + \CSUright} V_N(\omega,x) \ud x + (\beta_N^{\mathrm{right}})^2 \int_{\tilde x_N^{(j),\mathrm{right},\omega}-\CSUleft}^{\tilde x_N^{(j),\mathrm{right},\omega} - \CSUleft+b} V_N(\omega,x) \ud x \\
            & \quad \ge \, \left( \dfrac{\pi}{\llargestN{j} - (\CSU - a - b)} - \left( 8 \pi +1 \right) 2 \beta_N \dfrac{1}{(\llargestN{j})^{1/2}}  \right)^2 + \dfrac{1}{2} \widetilde \WWS_N^{a,b} \beta_N^2
        \end{align*}
        where $\widetilde \WWS_N^{a, b} = \WWS_N \min\big \{\int_{\CSUright-a}^{\CSUright} u(x) \ud x , \int_{-\CSUleft}^{-\CSUleft+b} u(x) \ud x \big\} > 0$.
    
        Next, we find a minimum of the right-hand side of the above inequality by computing its derivative with respect to $\beta_N$,
        \begin{align*}
            & - 2 \left( \dfrac{\pi}{\llargestN{j} - (\CSU - a - b)} - (8\pi+1) 2 \beta_N \dfrac{1}{(\llargestN{j})^{1/2}}  \right) (8\pi+1) 2 \dfrac{1}{(\llargestN{j})^{1/2}} + \widetilde \WWS_N^{a,b} \beta_N  
        \end{align*}
        which is zero if and only if
        \begin{align*}
            \beta_N & = \dfrac{4\pi(8\pi+1)}{\llargestN{j} - (\CSU - a - b)} \dfrac{1}{(\llargestN{j})^{1/2}} \Bigg[ \dfrac{8(8\pi+1)^2}{\llargestN{j}} + \widetilde \WWS_N^{a,b} \Bigg]^{-1} \ .
        \end{align*}
        Finally, we conclude
        \begin{align*}
            & E_{(-\laplace + V_N(\omega))_{\IlargestN{j}}^{\mathrm{Neu}}}^{1,\omega} \ge \dfrac{\pi^2}{(\llargestN{j} - (\CSU - a - b))^2} - \dfrac{(4 \pi)^2 (8\pi+1)^2}{\widetilde \WWS_N^{a,b}} \dfrac{1}{(\llargestN{j} - \CSU)^2 \llargestN{j}} \ .
        \end{align*}
    \end{proof}

    Lastly, in the next two propositions we state a lower bound for $E_{N,\RPN}^{2,\omega}$ and for $E_{N,\RPN}^{j,\omega}$, $2 \le j \in \mathds N$, respectively.

        \begin{prop} \label{Proposition 4.3}
            We have
            \begin{equation*}
                \lim\limits_{N \to \infty} \mathds P \Bigg( E_{N,\RPN}^{2,\omega} \ge \min\Big\{ E_{\left( -\laplace + V_N \right)_{\Ilargestzwei}^{\text{Neu}}}^{1,\omega}, \dfrac{9}{4} \dfrac{(\nu \pi)^2}{(\ln(L_N))^{2}} \Big\} \Bigg) = 1\ .
            \end{equation*}
        \end{prop}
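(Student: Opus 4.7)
The plan is to lower-bound $E_{N,\RPN}^{2,\omega}$ by Neumann--Dirichlet bracketing in two stages and then use an upper tail bound on $\llargest$. First, I would impose Neumann boundary conditions at every Poisson atom $\hat x_j^{\omega} \in \Lambda_N$, so that the resulting direct-sum operator $\bigoplus_j (-\laplace + V_N(\omega))_{I_N^{j,\omega}}^{\mathrm{Neu}}$ has $k$-th eigenvalues lower-bounding $E_{N,\RPN}^{k,\omega}$. Since the first Neumann eigenvalue on such a subinterval is a decreasing function of its length once the length exceeds $\CSU$ (compare Lemma~\ref{Lemma 4.2}), the two smallest eigenvalues of this direct sum come from the two largest subintervals, which yields
\[
E_{N,\RPN}^{2,\omega}\;\ge\;\min\Bigl\{E^{2,\omega}_{(-\laplace+V_N(\omega))_{\Ilargest}^{\mathrm{Neu}}},\,E^{1,\omega}_{(-\laplace+V_N(\omega))_{\Ilargestzwei}^{\mathrm{Neu}}}\Bigr\}.
\]
It is therefore enough to show that $E^{2,\omega}_{(-\laplace+V_N)_{\Ilargest}^{\mathrm{Neu}}}\ge\tfrac{9}{4}(\nu\pi)^2/\ln^2(L_N)$ with probability tending to one.

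For this I would apply Neumann bracketing a second time, now \emph{inside} $\Ilargest=(\hat x_j^{\omega},\hat x_{j+1}^{\omega})$, by inserting Neumann cuts at the interior points $\hat x_j^{\omega}+\llargest/4$ and $\hat x_j^{\omega}+3\llargest/4$. This splits $\Ilargest$ into three pieces of lengths $\llargest/4,\llargest/2,\llargest/4$. On the event $\{\llargest\ge 4\CSU\}$, which has probability tending to one by Lemma~\ref{Lemma 3.2}, the middle piece lies entirely outside the support of $V_N(\omega)$, so its two smallest Neumann eigenvalues are $0$ and $\pi^2/(\llargest/2)^2=4\pi^2/\llargest^2$. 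On each end piece the potential is supported only near the outer atom, and a one-sided adaptation of Lemma~\ref{Lemma 4.2}---namely, subtract a constant (rather than a linear function) from the trial eigenfunction so that it vanishes at one point inside the potential region, then apply the mixed Dirichlet--Neumann Poincar\'e constant $\pi^2/(4\ell^2)$ on the complementary free part---gives its first Neumann eigenvalue bounded below by $\pi^2/\bigl(4(\llargest/4-(\CSUright-a))^2\bigr)$ up to an additive error of order $1/(\widetilde{\WWS}_N^{a,b}\,\llargest^3)$, which is $4\pi^2/\llargest^2\,(1-o(1))$ as $N\to\infty$. The second-smallest element of the combined three-piece spectrum therefore satisfies $E^{2,\omega}_{(-\laplace+V_N)_{\Ilargest}^{\mathrm{Neu}}}\ge 4\pi^2/\llargest^2\,(1-o(1))$ with probability tending to one.

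To absorb the $(1-o(1))$ factor and produce the target constant, I would invoke Lemma~\ref{Lemma 3.2} with any $\zeta<1/3$ (for instance $\zeta=1/4$), giving $\llargest\le\tfrac{5}{4}\nu^{-1}\ln(L_N)$ with probability tending to one; then $4\pi^2/\llargest^2\ge 64\pi^2\nu^2/(25\ln^2(L_N))>\tfrac{9}{4}(\nu\pi)^2/\ln^2(L_N)$ with a strictly positive gap, and the correction becomes negligible for $N$ large. Combining this with the first-stage bound yields the proposition. The main technical obstacle is the one-sided variant of Lemma~\ref{Lemma 4.2} used on the end pieces: its proof must be redone with a single location--minimum pair $(\alpha,\beta)$ instead of the pair from each side, and the resulting remainder must be controlled even in the fixed-strength case $\WWS_N\equiv\WWS$, where the mechanism $\WWS_N\to\infty$ is unavailable and one has to exploit the $\llargest^{-3}$ decay of the error alone.
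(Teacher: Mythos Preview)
Your approach is essentially the paper's: Neumann-bracket at the Poisson atoms, then split $\Ilargest$ into three pieces by two further Neumann cuts, bound the second Neumann eigenvalue on the middle piece by $\pi^2/|\text{middle}|^2$, bound the first Neumann eigenvalue on each end piece via a one-sided version of Lemma~\ref{Lemma 4.2}, and invoke Lemma~\ref{Lemma 3.2} to control $\llargest$. The only differences are cosmetic: you cut at the proportions $1/4,\,3/4$ of $\llargest$ whereas the paper cuts at the fixed distances $(4\nu)^{-1}\ln(L_N)$ from each endpoint, and you present the bracketing in two stages rather than one. The paper's treatment of the end piece $K_N^{2,\omega}$ is precisely the ``subtract a constant, reflect across the free Neumann end, apply Poincar\'e'' argument you anticipate, so what you call the main technical obstacle is already carried out there.

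One step, however, needs more justification. Your assertion that ``the first Neumann eigenvalue on such a subinterval is a decreasing function of its length once the length exceeds $\CSU$'' is not a consequence of Lemma~\ref{Lemma 4.2} (which supplies only a lower bound) and is not literally true as stated: the restriction of $V_N(\omega)$ to $I_N^{j,\omega}$ picks up contributions from atoms lying outside that interval within distance $\CSU$ of its endpoints, so this eigenvalue is not a function of the length alone, and the two boundary subintervals carry potential on only one side. Without monotonicity you cannot exclude that some $E^{1,\omega}_{(-\laplace+V_N)_{\IlargestN{j}}^{\mathrm{Neu}}}$ with $j\ge 3$ falls strictly below $E^{1,\omega}_{(-\laplace+V_N)_{\Ilargestzwei}^{\mathrm{Neu}}}$, and then your displayed inequality $E_{N,\RPN}^{2,\omega}\ge\min\bigl\{E^{2,\omega}_{(-\laplace+V_N)_{\Ilargest}^{\mathrm{Neu}}},\,E^{1,\omega}_{(-\laplace+V_N)_{\Ilargestzwei}^{\mathrm{Neu}}}\bigr\}$ can fail. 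The paper's proof glosses over the same point (it bounds the three pieces of $\Ilargest$ but never checks that the remaining $\IlargestN{j}$, $j\ge 2$, contribute no eigenvalue below $E^{1,\omega}_{\Ilargestzwei}$), so this is a shared lacuna rather than a divergence from the paper. Downstream, in Theorem~\ref{Theorem 5.4}, only the Lemma~\ref{Lemma 4.2} lower bound for $E^{1,\omega}_{\Ilargestzwei}$ is ever used, and \emph{that} bound is monotone in the length for long intervals, so the issue is ultimately harmless; but a self-contained proof of the proposition as written needs an additional argument here.
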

        \begin{proof}
            Let $\omega \in \widetilde \Omega$ and $N \in \mathds N$ such that $\llargest \ge (3/4) \nu^{-1} \ln(L_N)$ be given. We replace the Dirichlet boundary conditions at $-L_N/2$ and $L_N/2$ with Neumann boundary conditions. Also, we put Neumann boundary conditions at all points $\hat x_j^{\omega} \in \Lambda_N$. As before, $I_N^{j,\omega} = (\hat x_{j}^{\omega}, \hat x_{j+1}^{\omega})$ for all $j \in \mathds Z$. In addition, we denote location of the two points that belongs to $\Ilargest$ by $\tilde x^{(1), \mathrm{left}, \omega}_N$ and $\tilde x^{(1), \mathrm{right}, \omega}_N$, that is, $\Ilargest = (\tilde x^{(1), \mathrm{left}, \omega}_N,\tilde x^{(1), \mathrm{right}, \omega}_N)$. We also put Neumann boundary conditions at $\tilde x^{(1), \mathrm{left}, \omega}_N + (4 \nu)^{-1} \ln(L_N)$ and $\tilde x^{(1), \mathrm{right}, \omega}_N - (4\nu)^{-1} \ln(L_N)$. We denote these intervals by
            \begin{align*}
                \tildeIlargestN{j} & :=
                \begin{cases}
                    (\tilde x^{(1), \mathrm{left}, \omega}_N + (4 \nu)^{-1} \ln(L_N), \tilde x^{(1), \mathrm{right}, \omega}_N - (4 \nu)^{-1} \ln(L_N)) \quad & \text{ if } j = 1 \\
                    \IlargestN{j} \quad & \text{ if } j \ge 2
                \end{cases} \ ,\\
                K_N^{1,\omega} & := (\tilde x^{(1), \mathrm{left}, \omega}_N, \tilde x^{(1), \mathrm{left}, \omega}_N + (4 \nu)^{-1} \ln(L_N)) \ ,
                \shortintertext{and}
                K_N^{2, \omega} & := (\tilde x^{(1), \mathrm{right}, \omega}_N - (4 \nu)^{-1} \ln(L_N), \tilde x^{(1), \mathrm{right}, \omega}_N) \ .
            \end{align*}
            Then we have
            \begin{align*}
                (-\laplace + V_N(\omega))_{\Lambda_N}^{\mathrm{D}} \ge \left(- \laplace + V_N \right)_{K_N^{1,\omega}}^{\text{Neu}} \oplus \left(- \laplace + V_N(\omega) \right)_{K_N^{2,\omega}}^{\text{Neu}} \oplus \bigoplus\limits_{j \in \mathds N : \, |\tildeIlargestN{j}| > 0} \left(- \laplace + V_N(\omega) \right)_{\tildeIlargestN{j}}^{\text{Neu}}  \ , 
            \end{align*}
            see also \cite[(5.11)]{jaeck2009nature} and \cite[Section XIII.15, Proposition 4]{reed1978analysisoperators}. 
            With Lemma~\ref{Lemma 3.2} we now conclude that the probability of the set of all $\omega \in \Omega$ such that
            \begin{align*}
                E_{(-\laplace + V_N(\omega))_{\tildeIlargestN{1}}^{\mathrm{Neu}}}^{2,\omega} & \ge (3/2)^2 (\nu \pi)^2  (\ln(L_N))^{-2} \ , \\
                E_{(-\laplace + V_N(\omega))_{K_N^{1,\omega}}^{\mathrm{Neu}}}^{1,\omega} & \ge 3 (\nu \pi)^2  (\ln(L_N))^{-2} \ , \\
                \shortintertext{and}
                E_{(-\laplace + V_N(\omega))_{K_N^{2,\omega}}^{\mathrm{Neu}}}^{1,\omega} & \ge 3 (\nu \pi)^2  (\ln(L_N))^{-2} \ .
            \end{align*}
            converges to one in the limit $N \to \infty$.
     
            The first inequality is due to the fact that $\lim_{N \to \infty} \mathds P(\llargest \le (4/3) \nu^{-1} \ln(L_N)) = 1$, see Lemma~\ref{Lemma 3.2}, and
            \begin{align*}
                E_{(-\laplace + V_N(\omega))_{\tildeIlargestN{j}}^{\mathrm{Neu}}}^{2,\omega} \ge E_{(-\laplace)_{\tildeIlargestN{j}}^{\mathrm{Neu}}}^{2,\omega} = \dfrac{\pi^2}{|\tildeIlargestN{j}|^2}
            \end{align*}
            for all $N \in \mathds N$ and $j \in \mathds N$ with $\tildeIlargestN{j} > 0$, see, e.g., \cite[p. 266]{reed1978analysisoperators}.
    
            We also show the last inequality in more details. The inequality in the middle can be shown almost identically. We proceed similarly as in the proof of Lemma~\ref{Lemma 4.2}. Let $N \in \mathds N$ and $\omega \in \widetilde \Omega$ such that $\llargest \ge (3/4) \nu^{-1} \ln(L_N)$ be given. We denote by $\varphi_N \in \mathrm{H}^1(K_N^{2,\omega}) \cap \mathrm{C}(\overline{K_N^{2,\omega}})$ the function with the properties $\varphi_N \ge 0$, $\|\varphi_N\|_{2} = 1$, and
            \begin{align*}
                & \inf \Big\{ \int\limits_{K_N^{2,\omega}} \big(|\varphi'|^2(x) + V_N(\omega,x)|\varphi|^2(x) \big) \ud x : \varphi \in \mathrm{H}^1(K_N^{1,\omega}), \|\varphi\|_{2} = 1 \Big\} \\
                & \qquad \ge \int\limits_{K_N^{2,\omega}} \big( |\varphi'_N|^2(x) + V_N(\omega,x) |\varphi_N|^2(x) \big) \ud x \\
                & \qquad \ge \int\limits_{K_N^{2,\omega}} |\varphi'_N|^2(x) \ud x + \int_{\tilde x^{(1), \mathrm{right}, \omega}_N - \CSUleft}^{\tilde x^{(1), \mathrm{right}, \omega}_N} V_N(\omega,x) |\varphi_N|^2(x) \ud x \ .
            \end{align*}
            We write $\beta_N$ for the minimum of $\varphi_N$ in $[\tilde x^{(1), \mathrm{right}, \omega}_N - \CSUleft,\tilde x^{(1), \mathrm{right}, \omega}_N]$.
            By $\tilde \varphi_N - \beta_N$ we denote the symmetric extension of $\varphi_N - \beta_N$ to $\widetilde K_N^{2,\omega} := [\tilde x^{(1), \mathrm{right}, \omega}_N - (2 \nu)^{-1} \ln(L_N), \tilde x^{(1), \mathrm{right}, \omega}_N]$. Using the Poincare inequality, we then conclude
            \begin{align*}
                \|\varphi_N'\|_{\mathrm{L}^2(K_N^{2,\omega})}^2 & = \|(\varphi_N - \beta_N)'\|_{\mathrm{L}^2(K_N^{2,\omega})}^2 = \dfrac{1}{2} \|(\tilde \varphi_N - \beta_N)'\|_{\mathrm{L}^2(\widetilde K_N^{2,\omega})}^2\\
                & \ge \dfrac{1}{2} \dfrac{\pi^2}{|\widetilde K_N^{2,\omega}|^2} \|\tilde \varphi_N - \beta_N\|_{{\mathrm{L}^2(\widetilde K_N^{2,\omega})}}^2 = \dfrac{\pi^2}{|\widetilde K_N^{2,\omega}|^2} \|\varphi_N - \beta_N\|_{\mathrm{L}^2(K_N^{2,\omega})}^2 \\
                & \ge \dfrac{\pi^2}{|\widetilde K_N^{2,\omega}|^2} \big( \|\varphi_N\|_{\mathrm{L}^2(K_N^{2,\omega})} - \|\beta_N\|_{\mathrm{L}^2(K_N^{2,\omega})} \big)^2 \ge \dfrac{\pi^2}{|\widetilde K_N^{2,\omega}|^2} \big( 1 - \beta_N |K_N^{2,\omega}|^{1/2} \big)^2 \\
                & \ge \dfrac{\pi^2}{|\widetilde K_N^{2,\omega}|^2} - 2 \dfrac{\pi^2 \beta_N}{|\widetilde K_N^{2,\omega}|^{3/2}} \ .
            \end{align*}
            Therefore,
            \begin{align*}
                E_{(-\laplace + V_N(\omega))_{K_N^{2,\omega}}^{\mathrm{Neu}}}^{1,\omega} & \ge \int\limits_{K_N^{2,\omega}} \big( |\varphi'_N|^2(x) + V_N(\omega,x) |\varphi_N|^2(x) \big) \ud x \\
                & \ge \dfrac{\pi^2}{|\widetilde K_N^{2,\omega}|^2} - 2 \dfrac{\pi^2 \beta_N}{|\widetilde K_N^{2,\omega}|^{3/2}} + \beta_N^2 \int_{\tilde x^{(1), \mathrm{right}, \omega}_N - \CSUleft}^{\tilde x^{(1), \mathrm{right}, \omega}_N} V_N(\omega,x) \ud x \\
                & \ge \dfrac{\pi^2}{|\widetilde K_N^{2,\omega}|^2} - \dfrac{2 \pi^4}{\int_{\tilde x^{(1), \mathrm{right}, \omega}_N - \CSUleft}^{\tilde x^{(1), \mathrm{right}, \omega}_N} V_N(\omega,x) \ud x} \dfrac{1}{|\widetilde K_N^{2,\omega}|^{3} } \\
                & \ge 3 (\nu \pi)^2 (\ln(L_N))^{-2}
            \end{align*}
            for all but finitely many $N \in \mathds N$.
        \end{proof}

        \begin{prop} \label{Proposition 4.4}
            For any $2 \le j \in \mathds N$,
            \begin{align*}
                \lim\limits_{N \to \infty} \mathds P \left( E_{N,\RPN}^{j,\omega} \ge \min\left\{ E_{\left(- \laplace + V_N(\omega) \right)_{\IlargestN{\lceil j/2 \rceil}}^{\text{Neu}}}^{1,\omega} , \dfrac{15}{9} \dfrac{(\nu \pi)^2}{(\ln (L_N))^{2}} \right\} \right) = 1 \ ,
            \end{align*}
            where $\lceil j/2 \rceil$ is the smallest natural number larger than or equal to $j/2$.
        \end{prop}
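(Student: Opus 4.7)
The plan is to extend the Neumann bracketing and subdivision strategy used for Proposition~\ref{Proposition 4.3} from the single longest interval to the $\lceil j/2 \rceil - 1$ longest intervals, and then to count the low eigenvalues of the resulting direct-sum operator.

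First I would insert Neumann boundary conditions at $\pm L_N/2$ and at each atom of $\PoissonrandommeasureRomega$ inside $\Lambda_N$, producing the operator inequality $H_{N,\RPN}^\omega \ge \bigoplus_i (-\laplace + V_N(\omega))_{I_N^{i,\omega}}^{\mathrm{Neu}}$. Then, for each $k = 1,\ldots,\lceil j/2 \rceil - 1$, I would further place two Neumann points inside $\IlargestN{k}$ at distance $(4\nu)^{-1}\ln L_N$ from its endpoints, exactly as in the proof of Proposition~\ref{Proposition 4.3}, splitting each such interval into a middle piece $\tilde I_{N,>}^{(k),\omega}$ and two short outer pieces. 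By the min-max principle it then suffices to show that with probability tending to one at most $j - 1$ eigenvalues of this refined direct sum lie strictly below the threshold $T := \min\{E_{\left(-\laplace + V_N(\omega)\right)_{\IlargestN{\lceil j/2 \rceil}}^{\mathrm{Neu}}}^{1,\omega}, (15/9)(\nu\pi)^2/(\ln L_N)^2\}$.

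Counting summand by summand goes as in Proposition~\ref{Proposition 4.3}. The $2(\lceil j/2 \rceil - 1)$ outer pieces contribute no eigenvalue below $T$, their first Neumann eigenvalues being $\ge 3(\nu\pi)^2/(\ln L_N)^2 > T$ by the Lemma~\ref{Lemma 4.2}-type estimate already used for $K_N^{1,\omega}$ and $K_N^{2,\omega}$. Each of the $\lceil j/2 \rceil - 1$ middle pieces contributes at most one eigenvalue below $T$: its second Neumann eigenvalue is at least $\pi^2/|\tilde I_{N,>}^{(k),\omega}|^2 \ge (3/2)^2 (\nu\pi)^2/(\ln L_N)^2 > T$, once Lemma~\ref{Lemma 3.2} is applied with a small enough $\zeta$ so that $|\tilde I_{N,>}^{(k),\omega}| \le (2/3)\nu^{-1}\ln L_N$ with high probability. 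The interval $\IlargestN{\lceil j/2 \rceil}$ contributes no eigenvalue strictly below $T$, since its first eigenvalue is the first term of the minimum defining $T$ and its higher eigenvalues are larger. Finally, for the shorter un-subdivided intervals $\IlargestN{k}$ with $k > \lceil j/2 \rceil$, Lemma~\ref{Lemma 4.2}, combined with the length ordering $\llargestN{k} \le \llargestN{\lceil j/2 \rceil}$ and the safety bound $(15/9)(\nu\pi)^2/(\ln L_N)^2$, yields $E^{1,\omega} \ge T$, while the second Neumann eigenvalue is $\ge \pi^2/\llargestN{k}^2$ and is handled similarly. Since $\lceil j/2 \rceil - 1 \le j - 1$, the count gives $E_{N,\RPN}^{j,\omega} \ge T$ with probability tending to one.

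The main obstacle is the treatment of the shorter un-subdivided intervals $\IlargestN{k}$ with $k > \lceil j/2 \rceil$: Lemma~\ref{Lemma 4.2} only supplies a lower bound on the actual first Neumann eigenvalue, so the direct inequality $E^{1,\omega}_{\IlargestN{k}} \ge E^{1,\omega}_{\IlargestN{\lceil j/2 \rceil}}$ is not immediate from length monotonicity alone. The fallback $(15/9)(\nu\pi)^2/(\ln L_N)^2$ in the minimum defining $T$ is precisely what absorbs this discrepancy, and I expect the argument to split according to whether $\llargestN{k}$ is small enough for the Lemma~\ref{Lemma 4.2} bound to directly exceed $(15/9)(\nu\pi)^2/(\ln L_N)^2$ or close enough to $\llargestN{\lceil j/2 \rceil}$ for the near-monotonicity of the Lemma~\ref{Lemma 4.2} bound to dominate $E^{1,\omega}_{\IlargestN{\lceil j/2 \rceil}}$ up to a controlled error.
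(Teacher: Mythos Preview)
The paper's route is simpler than yours: it inserts Neumann conditions only at the Poisson atoms and at $\pm L_N/2$, with no further subdivision of any interval. The one new ingredient beyond Proposition~\ref{Proposition 4.3} is that every interval's \emph{third} Neumann eigenvalue satisfies
\[
E^{3,\omega}_{(-\laplace+V_N(\omega))_{\IlargestN{k}}^{\mathrm{Neu}}}\;\ge\;\frac{4\pi^2}{(\llargest)^2}\;\ge\;\frac{16}{9}\,\frac{(\nu\pi)^2}{(\ln L_N)^2}\;>\;T
\]
with high probability (via Lemma~\ref{Lemma 3.2}), so each interval contributes at most two eigenvalues below $T$; a pigeonhole on the first $j$ eigenvalues of the direct sum then produces the index $\lceil j/2\rceil$. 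Your subdivision of the $\lceil j/2\rceil-1$ longest intervals is therefore unnecessary: it trades ``at most two per interval'' for ``at most one per middle piece plus zero per outer piece'', which is more work for the same count.

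The obstacle you isolate in your last paragraph --- that $E^{1,\omega}_{\IlargestN{k}}\ge E^{1,\omega}_{\IlargestN{\lceil j/2\rceil}}$ for $k>\lceil j/2\rceil$ does not follow from length ordering alone --- is real, and the paper's terse proof does not spell it out either. Your proposed Lemma~\ref{Lemma 4.2} fix only yields $E^{1,\omega}_{\IlargestN{k}}\ge L(\llargestN{k})\ge L(\llargestN{\lceil j/2\rceil})$, where $L(\ell)$ denotes the length-monotone lower bound \eqref{Inequality Lower bound}; since $L(\llargestN{\lceil j/2\rceil})$ can lie strictly below the actual eigenvalue $E^{1,\omega}_{\IlargestN{\lceil j/2\rceil}}$, this does not literally establish the proposition as stated. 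It does, however, give exactly what Theorem~\ref{Theorem 5.4} needs, since only the Lemma~\ref{Lemma 4.2} bound for $\llargestN{\lceil j/2\rceil}$ is ever invoked there.
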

        \begin{proof}
            Let $N \in \mathds N$ and $\omega \in \widetilde \Omega$ be arbitrarily given. Similarly as in the proof of Proposition~\ref{Proposition 4.3}, we put Neumann boundary conditions at all points $x_j^{\omega}$ that are within $\Lambda_N$. Also, we replace the Dirichlet boundary conditions at $-L_N/2$ and $L_N/2$ with Neumann boundary conditions. We obtain
            \begin{align*}
                (-\laplace + V_N(\omega))_{\Lambda_N}^{\mathrm{D}} \ge \bigoplus\limits_{j \in \mathds Z : \, |I_N^{j,\omega}| > 0} \left(- \laplace + V_N(\omega) \right)_{I_N^{j,\omega}}^{\text{Neu}} \ .
            \end{align*}
            Moreover,
            \begin{align*}
                \left(- \laplace + V_N(\omega) \right)_{I_N^{j,\omega}}^{\text{Neu}} \ge \left(- \laplace \right)_{I_N^{j,\omega}}^{\text{Neu}}
            \end{align*}
            and
                $$E_{\left(- \laplace \right)_{I_N^{j,\omega}}^{\text{Neu}}}^{2,\omega} = \dfrac{\pi^2}{(l_N^{j,\omega})^2}$$
            for all $j \in \mathds N$ with $l_N^{j,\omega} = |I_N^{j,\omega}| > 0$.
            Also
            \begin{align*}
                E_{\left(- \laplace + V_N(\omega)\right)_{\IlargestN{j}}^{\text{Neu}}}^{3,\omega} \ge E_{\left(- \laplace \right)_{\IlargestN{j}}^{\text{Neu}}}^{3,\omega} \ge \dfrac{4 \pi^2}{(\llargestN{j})^2} \ ,
            \end{align*}
            see, e.g., \cite[p. 266]{reed1978analysisoperators}, and consequently
            \begin{align*}
                E_{\left(- \laplace + V_N(\omega)\right)_{\IlargestN{j}}^{\text{Neu}}}^{3,\omega} \ge \dfrac{4 \pi^2}{(\llargest)^2} \ge \dfrac{16}{9} \dfrac{(\nu \pi)^2}{(\ln (L_N))^{2}}
            \end{align*}
            for all $j \in \mathds N$ if $\llargest \le (3/2) \nu^{-1} \ln(L_N)$. Therefore with Lemma~\ref{Lemma 3.2},
            \begin{align*}
                \lim\limits_{N \to \infty} \mathds P \left( E_{N,\RPN}^{j,\omega} \ge \min\left\{ E_{\left(- \laplace + V_N(\omega) \right)_{\IlargestN{\lceil j/2 \rceil}}^{\text{Neu}}}^{1,\omega} , \dfrac{15}{9} \dfrac{(\nu \pi)^2}{(\ln (L_N))^{2}} \right\} \right) = 1
            \end{align*}
            for all $2 \le j \in \mathds N$.
        \end{proof}

    \section{Main results}\label{secMainResults}

        The next two propositions are crucial for the rest of this section and are an extension of \cite[Proposition 2.8]{KPSConditionTypeOneBEC2020} to Poisson random potentials whose strengths possibly converge to infinity. To prove them, we use Lemma~\ref{Lemma 5.1}. Subsequently, we show in Theorems~\ref{Theorem 5.4} and~\ref{Theorem 5.5} that the energy gap condition \eqref{Def omega N j} is fulfilled with high probability. Finally, we prove the occurrence of type-I g-BEC in Corollaries~\ref{Corollary 1} - \ref{Corollary 3}.
        If not otherwise stated, $(V_N)_{N \in \mathds N}$ is either a Poisson random potential of fixed strength or with a strength that converges to infinity, see Remark~\ref{remark RP}, in this section. 

        \begin{lemma} \label{Lemma 5.1}
            If $\rho \ge \rho_{c,V_1}$, then for any $\varepsilon > 0$ we $\mathds P$-almost surely have
            \begin{align*}
                \limsup\limits_{N \to \infty} \int\limits_{(0,\varepsilon]} \mathcal B(E - \mu_{N,\RPN}^{\omega}) \, \mathcal N_{N,\RPN}^{\omega}(\mathrm{d} E) \le \rho - \int\limits\limits_{(\varepsilon,\infty)} \mathcal B(E) \, \mathcal N_{\infty, \RP_1}(\mathrm{d} E) + \frac{6}{\beta \varepsilon} \mathcal N_{\infty,V_1}^{\mathrm{I}}(4 \varepsilon)
            \end{align*}
            as well as
            \begin{align*}
                &\limsup\limits_{N \to \infty} \int\limits_{(0,E_{N,V_N}^{j-1,\omega}]} \mathcal B(E - \mu_{N,\RPN}^{\omega}) \, \mathcal N_{N,\RPN}^{\omega}(\mathrm{d} E) \le \rho_{0,V_1}
            \end{align*}
            for all $2 \le j \in \mathds N$.
        \end{lemma}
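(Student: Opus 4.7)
The plan is to use the defining identity~\eqref{Gleichung Bedingung mu aequivalent} to reduce the first inequality to a lower bound on the tail of $\int \mathcal B(E-\mu_{N,V_N}^\omega)\,\mathcal N_{N,V_N}^\omega(\mathrm dE)$, and then to compare the $V_N$-system with the reference $V_1$-system, for which the limiting density of states $\mathcal N_{\infty,V_1}$ is available $\mathds P$-a.s. The key monotonicity is the following: $V_N \ge V_1$ gives $H_{N,V_N} \ge H_{N,V_1}$ and hence $\mathcal N_{N,V_N}^{I,\omega} \le \mathcal N_{N,V_1}^{I,\omega}$. By the layer-cake formula, this yields $\int f\,\mathrm d\mathcal N_{N,V_N}^\omega \le \int f\,\mathrm d\mathcal N_{N,V_1}^\omega$ for any bounded nonnegative decreasing $f$, in particular for $\mathcal B(\,\cdot - \mu)$, so matching both integrals to $\rho$ via~\eqref{Gleichung Bedingung mu aequivalent} forces $\mu_{N,V_N}^\omega \ge \mu_{N,V_1}^\omega$. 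Since $\rho \ge \rho_{c,V_1}$ and $\mathcal N_{N,V_1}^\omega \to \mathcal N_{\infty,V_1}$ vaguely, the standard argument yields $\mu_{N,V_1}^\omega \to 0$ $\mathds P$-a.s., and combined with $\mu_{N,V_N}^\omega < E_{N,V_N}^{1,\omega} \to 0$ (Proposition~\ref{Proposition 4.1}) this forces $\mu_{N,V_N}^\omega \to 0$ $\mathds P$-a.s.

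The central step is the comparison with the $V_1$-system. Applying~\eqref{Gleichung Bedingung mu aequivalent} also to $V_1$ and subtracting yields
\begin{equation*}
\int_{(0,\varepsilon]}\mathcal B(E-\mu_{N,V_N}^\omega)\,\mathrm d\mathcal N_{N,V_N}^\omega - \int_{(0,\varepsilon]}\mathcal B(E-\mu_{N,V_1}^\omega)\,\mathrm d\mathcal N_{N,V_1}^\omega = \int_{(\varepsilon,\infty)}\mathcal B(E-\mu_{N,V_1}^\omega)\,\mathrm d\mathcal N_{N,V_1}^\omega - \int_{(\varepsilon,\infty)}\mathcal B(E-\mu_{N,V_N}^\omega)\,\mathrm d\mathcal N_{N,V_N}^\omega.
\end{equation*}
The first term on the right converges a.s.\ to $\int_{(\varepsilon,\infty)}\mathcal B(E)\,\mathrm d\mathcal N_{\infty,V_1}$ by vague convergence and $\mu_{N,V_1}^\omega \to 0$. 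Using $\mu_{N,V_N}^\omega \ge \mu_{N,V_1}^\omega$ and monotonicity of $\mathcal B$, the right-hand side is bounded above by the ``excess-mass'' integral $\int_{(\varepsilon,\infty)}\mathcal B(E-\mu_{N,V_1}^\omega)\,\mathrm d(\mathcal N_{N,V_1}^\omega - \mathcal N_{N,V_N}^\omega)$. Splitting this at $4\varepsilon$: on $(\varepsilon,4\varepsilon]$ the bound $\mathcal B(E-\mu_{N,V_1}^\omega) \le 2/(\beta\varepsilon)$ (valid once $\mu_{N,V_1}^\omega < \varepsilon/2$) combined with $\mathcal N_{N,V_1}^{I,\omega}(4\varepsilon) \to \mathcal N_{\infty,V_1}^I(4\varepsilon)$ accounts for the dominant part of the error $6/(\beta\varepsilon)\mathcal N_{\infty,V_1}^I(4\varepsilon)$; on $(4\varepsilon,\infty)$ the missing mass $\mathcal N_{N,V_1}^{I,\omega} - \mathcal N_{N,V_N}^{I,\omega}$, being concentrated near $0$ by the Lifshitz-tail origin of the eigenvalue shifts, contributes $o(1)$ via integration by parts and the uniform upper bound~\eqref{ENN le c Ninfty}.

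For the second inequality, $E_{N,V_N}^{j-1,\omega} \to 0$ $\mathds P$-a.s.\ (since the lowest $j-1$ eigenvalues accumulate at zero, by arguments like those of Proposition~\ref{Proposition 4.1} applied to $\llargestN{j-1}$), so for any fixed $\varepsilon > 0$ eventually $E_{N,V_N}^{j-1,\omega} < \varepsilon$ and hence $\int_{(0, E_{N,V_N}^{j-1,\omega}]} \le \int_{(0,\varepsilon]}$; the first inequality then gives the bound. Letting $\varepsilon \searrow 0$: monotone convergence together with integrability at $0$ from~\eqref{Lifshitz tail} gives $\int_{(\varepsilon,\infty)}\mathcal B\,\mathrm d\mathcal N_{\infty,V_1} \to \rho_{c,V_1}$, while $\mathcal N_{\infty,V_1}^I(4\varepsilon)/\varepsilon \to 0$ superpolynomially by~\eqref{Lifshitz tail}, so the right-hand side tends to $\rho - \rho_{c,V_1} = \rho_{0,V_1}$. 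The main technical difficulty lies in the $(4\varepsilon,\infty)$-tail of the excess-mass integral: the missing mass $\mathcal N_{N,V_1}^{I,\omega}-\mathcal N_{N,V_N}^{I,\omega}$ may be sizeable globally, so one must quantitatively exploit the Lifshitz-tail localization of the eigenvalue shifts to close the estimate.
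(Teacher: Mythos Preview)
Your approach is genuinely different from the paper's, and it contains a real gap.

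The paper does not subtract the two defining equations. Instead it performs the $V_N \to V_1$ comparison \emph{inside the small interval} $(0,\varepsilon]$ by integration by parts: one writes $\int_{(0,\varepsilon]} \mathcal B(E-\mu_{N,V_N}^{\omega})\,\mathrm d\mathcal N_{N,V_N}^{\omega}$ as an integral of $\mathcal N_{N,V_N}^{\mathrm I,\omega}$ against $-\mathcal B'$, replaces $\mathcal N_{N,V_N}^{\mathrm I,\omega}$ by the larger $\mathcal N_{N,V_1}^{\mathrm I,\omega}$, and integrates by parts back. The only cost is a boundary term of size $\mathcal B(\varepsilon/2)\,\mathcal N_{N,V_1}^{\mathrm I,\omega}(2\varepsilon)$, which is genuinely small by the Lifshitz tail. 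What is left is an integral over $(0,\varepsilon]$ against $\mathcal N_{N,V_1}^{\omega}$ with the chemical potential $\mu_{N,V_N}^{\omega}$; since $\mu_{N,V_N}^{\omega}\to 0$ (Lemma~A.6), the Appendix~A lemmata for the fixed potential $V_1$ apply directly and close the argument.

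Your route instead lands on the excess--mass integral $\int_{(\varepsilon,\infty)}\mathcal B(E-\mu_{N,V_1}^{\omega})\,\mathrm d(\mathcal N_{N,V_1}^{\omega}-\mathcal N_{N,V_N}^{\omega})$, and the claim that the $(4\varepsilon,\infty)$ piece is $o(1)$ is false. When the strength $\mathcal S_N\to\infty$, the measures $\mathcal N_{N,V_N}^{\omega}$ approach the Luttinger--Sy density of states, which is strictly smaller than $\mathcal N_{\infty,V_1}$ \emph{everywhere}, not just near $E=0$. Hence that piece converges to the strictly positive number $\int_{(4\varepsilon,\infty)}\mathcal B(E)\,\mathrm d(\mathcal N_{\infty,V_1}-\mathcal N_{\infty,\mathrm{LS}})$, and after unwinding your bound one only obtains $\limsup_N \int_{(0,\varepsilon]}\mathcal B(E-\mu_{N,V_N}^{\omega})\,\mathrm d\mathcal N_{N,V_N}^{\omega}\le \rho-\int_{(\varepsilon,\infty)}\mathcal B\,\mathrm d\mathcal N_{\infty,\mathrm{LS}}+\text{error}$, which is strictly weaker than the stated inequality and, after $\varepsilon\searrow 0$, gives $\rho-\rho_{c,\mathrm{LS}}$ rather than $\rho_{0,V_1}=\rho-\rho_{c,V_1}$. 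The Lifshitz tail concerns $\mathcal N^{\mathrm I}(E)$ for $E\searrow 0$ only; it says nothing about the eigenvalue shifts on $(4\varepsilon,\infty)$, so the phrase ``Lifshitz--tail localization of the eigenvalue shifts'' does not carry content here. (Your observation $\mu_{N,V_N}^{\omega}\ge \mu_{N,V_1}^{\omega}$ is correct and gives a clean alternative proof that $\mu_{N,V_N}^{\omega}\to 0$, but it does not rescue the tail estimate.)
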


        \begin{proof}
            Since for $\mathds P$-almost all $\omega \in \Omega$, $\lim_{N \to \infty} \llargest =\infty$, see for example \cite[Chapter 3, Lemma 3.4]{sznitman1998brownian} or an appropriate version of Lemma~\ref{Lemma 3.2} (in combination with the Borel-Cantelli lemma), we have $\lim\limits_{N \to \infty} E_{N,\RPN}^{j-1,\omega}=0$ and therefore
            \begin{align*}
                \limsup\limits_{N \to \infty} \int\limits_{(0,E_{N,V_N}^{j-1,\omega}]} \mathcal B(E - \mu_{N,\RPN}^{\omega}) \, \mathcal N_{N,\RPN}^{\omega}(\mathrm{d} E) \le \limsup\limits_{N \to \infty} \int\limits_{(0,\varepsilon]} \mathcal B(E - \mu_{N,\RPN}^{\omega}) \, \mathcal N_{N,\RPN}^{\omega}(\mathrm{d} E)
            \end{align*}
            $\mathds P$-almost surely for any $\varepsilon > 0$ and for all $j \ge 2$. Furthermore, using integration by parts (for Lebesgue--Stieltjes integrals, see, e.g., \cite[Theorem 21.67]{hewitt1965real}), the fact that $\mu_{N,V_N}^{\omega} < E_{N,V_N}^{1,\omega}$ for $\mathds P$-almost all $\omega \in \Omega$ and all $N \in \mathds N$, and the Lifshitz tails property \eqref{Lifshitz tail}, we conclude for any $\varepsilon > 0$
            \begin{align*}
                & \limsup\limits_{N \to \infty} \int\limits_{(0,\varepsilon]} \mathcal B(E - \mu_{N,\RPN}^{\omega}) \, \mathcal N_{N,\RPN}^{\omega}(\mathrm{d} E)\\
                & \quad \le \, \limsup\limits_{N \to \infty} \lim\limits_{m \to 0} \Big[ \int\limits_{[m,\varepsilon]} \mathcal N_{N,\RPN}^{\mathrm{I}, \omega}(E-) (- \mathcal B'(E - \mu_{N,\RPN}^{\omega})) \ud E + \mathcal N_{N,\RPN}^{\mathrm{I}, \omega}(\varepsilon+) \mathcal B(\varepsilon - \mu_{N,\RPN}^{\omega}) \Big]\\
                & \quad \le \, \limsup\limits_{N \to \infty} \lim\limits_{m \to 0} \Big[ \int\limits_{[m,\varepsilon]} \mathcal N_{N,\RP_1}^{\mathrm{I}, \omega}(E-) (- \mathcal B'(E - \mu_{N,\RPN}^{\omega})) \ud E + \mathcal N_{N,\RP_1}^{\mathrm{I}, \omega}(2\varepsilon) \mathcal B(\varepsilon/2) \Big]\\
                & \quad \le \, \limsup\limits_{N \to \infty} \lim\limits_{m \to 0} \Big[ \int\limits_{[m,\varepsilon]} \mathcal B(E - \mu_{N,\RPN}^{\omega}) \, \mathcal N_{N,\RP_1}^{\omega}(\mathrm{d} E) + \mathcal N_{N,\RP_1}^{\mathrm{I}, \omega}(m-) \mathcal B(m - \mu_{N,\RPN}^{\omega}) \\
                & \qquad \qquad \qquad \qquad \qquad + \, \mathcal N_{N,\RP_1}^{\mathrm{I}, \omega}(2\varepsilon) \mathcal B(\varepsilon/2) \Big]\\
                & \quad \le \, \limsup\limits_{N \to \infty} \int\limits_{(0,\varepsilon]} \mathcal B(E - \mu_{N,\RPN}^{\omega}) \, \mathcal N_{N,\RP_1}^{\omega}(\mathrm{d} E) + \mathcal B(\varepsilon/2) \limsup\limits_{N \to \infty} \mathcal N_{N,V_1}^{\mathrm{I}, \omega}(2\varepsilon)
            \end{align*}
            $\mathds P$-almost surely, where $f(a-)$ and $f(a+)$ indicate the left and right limit of $f$ at $a$, respectively.
 
            Regarding the last term, we have for any $\widetilde \varepsilon > 0$
            \begin{align}
                \begin{split}\label{upper bound NNI}
                    & \limsup\limits_{N \to \infty} \mathcal N_{N,V_1}^{\mathrm{I}, \omega}(\widetilde \varepsilon) = \limsup\limits_{N \to \infty} \int\limits_{(0,\widetilde \varepsilon]} \, \mathcal N_{N,\RP_1}^{\omega}(\mathrm{d} E) \\
                    & \quad \le \, \limsup\limits_{N \to \infty} \int\limits_{\mathds R} \big[ (x+1) \mathds 1_{[-1,0]}(x) + \mathds 1_{[0,\widetilde \varepsilon]}(x) - (1 - \frac{x-\widetilde \varepsilon}{\widetilde \varepsilon}) \mathds 1_{[\widetilde \varepsilon,2\widetilde \varepsilon]}(x) \big] \, \mathcal N_{N,\RP_1}^{\omega}(\mathrm{d} E) \\
                    & \quad \le \mathcal N_{\infty,V_1}^{\mathrm{I}}(2\widetilde \varepsilon)
                \end{split}
            \end{align}
            $\mathds P$-almost surely due to the $\mathds P$-almost sure convergence of $(\mathcal N_{N,V_1}^{\omega})_{N \in \mathds N}$ to $\mathcal N_{\infty,V_1}$ in the vague sense.
            Also, for any $N \in \mathds N$ and any $\omega \in \widetilde \Omega$ we have $\mathcal N_{N,V_1}^{\mathrm{I},\omega}(m) = 0$ for all sufficiently small $m > 0$ since $E_{N,V_1}^{1,\omega} > 0$, $\mathcal B(m - \mu_N^{\omega}) \le \mathcal B(m)$ for all $m > 0$ if $\mu_{N,V_N}^{\omega} \le 0$, and $\mathcal B(m - \mu_N^{\omega}) = 0$ for all sufficiently small $m >0$ if $\mu_{N,V_N}^{\omega} > 0$.
 
            After using Lemmata~\ref{Lemma A.3},~\ref{Lemma A.4}, and~\ref{Lemma A.6} and the Lifshitz tails property \eqref{Lifshitz tail} we have proved this lemma. In particular,
            \begin{align*}
                & \limsup\limits_{N \to \infty} \int\limits_{(0,E_{N,V_N}^{j-1,\omega}]} \mathcal B(E - \mu_{N,\RPN}^{\omega}) \, \mathcal N_{N,\RPN}^{\omega}(\mathrm{d} E) \\
                & \quad \le \, \lim\limits_{\varepsilon \searrow 0} \limsup\limits_{N \to \infty} \int\limits_{(0,\varepsilon]} \mathcal B(E - \mu_{N,\RPN}^{\omega}) \, \mathcal N_{N,\RPN}^{\omega}(\mathrm{d} E) \\
                & \quad \le \, \lim\limits_{\varepsilon \searrow 0}  \limsup\limits_{N \to \infty} \int\limits_{(0, \varepsilon]} \mathcal B(E - \mu_{N,\RPN}^{\omega}) \, \mathcal N_{N,\RP_1}^{\omega}(\mathrm{d} E) + \lim\limits_{\varepsilon \searrow 0} \mathcal B(\varepsilon/2) \limsup\limits_{N \to \infty} \mathcal N_{N,V_1}^{\mathrm{I}, \omega}(2\varepsilon) \\
                & \quad \le \, \rho - \lim\limits_{\varepsilon \searrow 0}  \liminf\limits_{N \to \infty} \int\limits_{(\varepsilon, \infty)} \mathcal B(E - \mu_{N,\RPN}^{\omega}) \, \mathcal N_{N,\RP_1}^{\omega}(\mathrm{d} E) + \lim\limits_{\varepsilon \searrow 0} \mathcal B(\varepsilon/2) \mathcal N_{\infty,V_1}^{\mathrm{I}}(4\varepsilon) \\
                & \quad = \rho - \rho_{c,\RP_1}
            \end{align*}
            $\mathds P$-almost surely.
        \end{proof}

        \begin{prop}\label{Proposition 5.2}
            If $\rho \ge \rho_{c,V_1}$, then for any $2 \le j \in \mathds N$
            \begin{align}
                \liminf_{N \rightarrow \infty} \mathds{E}\, \int_{(0,E_{N,\RPN}^{j-1,\omega}]}\mathcal{B}(E-\mu_{N,\RPN}^{\omega}) \, \mathcal{N}_{N,\RPN}^{\omega}(\mathrm{d} E) \geq \rho_{0,V_1} \liminf_{N\rightarrow \infty} \mathds{P}(\Omega_{N,\RPN}^{j,\zeta_1,\zeta_2})\ .
            \end{align}
        \end{prop}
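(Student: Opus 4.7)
The plan is to combine the identity $\int_{\mathds R} \mathcal{B}(E - \mu_{N, \RPN}^\omega) \, \mathcal{N}_{N, \RPN}^\omega(\mathrm{d} E) = \rho$, which determines $\mu_{N, \RPN}^\omega$, with restriction to the event $\Omega_{N, \RPN}^{j, \zeta_1, \zeta_2}$. Dropping nonnegative contributions outside the event gives
\begin{align*}
    \mathds{E}\!\left[\int_{(0, E_{N, \RPN}^{j-1, \omega}]} \mathcal{B}(E - \mu_{N, \RPN}^\omega) \, \mathcal{N}_{N, \RPN}^\omega(\mathrm{d} E)\right] & \ge \rho\, \mathds{P}(\Omega_{N, \RPN}^{j, \zeta_1, \zeta_2}) \\
    & \quad - \mathds{E}\!\left[\mathds{1}_{\Omega_{N, \RPN}^{j, \zeta_1, \zeta_2}} \int_{(E_{N, \RPN}^{j-1, \omega}, \infty)} \mathcal{B}(E - \mu_{N, \RPN}^\omega) \, \mathcal{N}_{N, \RPN}^\omega(\mathrm{d} E)\right],
\end{align*}
so it suffices to prove the tail estimate
\begin{align*}
    \mathds{E}\!\left[\mathds{1}_{\Omega_{N, \RPN}^{j, \zeta_1, \zeta_2}} \int_{(E_{N, \RPN}^{j-1, \omega}, \infty)} \mathcal{B}(E - \mu_{N, \RPN}^\omega) \, \mathcal{N}_{N, \RPN}^\omega(\mathrm{d} E)\right] \le \rho_{c, \RP_1} \, \mathds{P}(\Omega_{N, \RPN}^{j, \zeta_1, \zeta_2}) + o(1) \ ;
\end{align*}
taking $\liminf_N$ in the previous display then yields $\liminf_N \mathds{E}[\cdots] \ge (\rho - \rho_{c, \RP_1}) \liminf_N \mathds{P}(\Omega_{N, \RPN}^{j, \zeta_1, \zeta_2}) = \rho_{0, \RP_1} \liminf_N \mathds{P}(\Omega_{N, \RPN}^{j, \zeta_1, \zeta_2})$, as desired.

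On $\Omega_{N, \RPN}^{j, \zeta_1, \zeta_2}$ I have the quantitative bounds $\mu_{N, \RPN}^\omega \le E_{N, \RPN}^{1, \omega} \le \mu_N^{\max} := [(1+\zeta_2)\nu\pi/\ln L_N]^2$ and $E_{N, \RPN}^{i, \omega} - \mu_{N, \RPN}^\omega \ge N^{-1 + \zeta_1}$ for every $i \ge j$, which I use to split the tail integral at the scale $\varepsilon_N := C(\ln L_N)^{-2}$ for a constant $C$ in the interval $((1+\zeta_2)^2 (\nu\pi)^2, (\nu\pi)^2/(1-\zeta_1)^2)$. This interval is nonempty because $\zeta_2/(1+\zeta_2) < \zeta_2 < \zeta_1$, which is exactly what the condition $\zeta_2 < \zeta_1$ in the definition of the event provides. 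On the low-energy portion $(E_{N, \RPN}^{j-1, \omega}, \varepsilon_N]$ I use $\mathcal{B}(E - \mu_{N, \RPN}^\omega) \le \mathcal{B}(N^{-1 + \zeta_1}) \lesssim N^{1-\zeta_1}$ together with $\mathcal{N}_{N, \RPN}^{\mathrm{I}, \omega}(\varepsilon_N) \le \mathcal{N}_{N, \RP_1}^{\mathrm{I}, \omega}(\varepsilon_N)$ (operator monotonicity since $V_N \ge V_1$), then take expectation and apply \eqref{ENN le c Ninfty} followed by the Lifshitz tails estimate \eqref{Lifshitz tail}; the resulting bound is of order $N^{1-\zeta_1}\, L_N^{-\nu\pi/\sqrt C}$, which tends to $0$ by the choice of $C$.

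For the high-energy piece $\int_{(\varepsilon_N, \infty)}$, the inequality $\mu_{N, \RPN}^\omega \le \mu_N^{\max} < \varepsilon_N$ gives $\mathcal{B}(E - \mu_{N, \RPN}^\omega) \le \mathcal{B}(E - \mu_N^{\max})$ on the event, and I apply the integration-by-parts and monotonicity technique used in the proof of Lemma~\ref{Lemma 5.1}: the inequality $\mathcal{N}_{N, \RPN}^{\mathrm{I}, \omega} \le \mathcal{N}_{N, \RP_1}^{\mathrm{I}, \omega}$ allows me to replace $\mathcal{N}_{N, \RPN}^\omega$ by $\mathcal{N}_{N, \RP_1}^\omega$, up to a boundary error $\mathds{E}[\mathcal{N}_{N, \RP_1}^{\mathrm{I}, \omega}(\varepsilon_N)]\,\mathcal{B}(\varepsilon_N - \mu_N^{\max})$ that vanishes because $\varepsilon_N - \mu_N^{\max} \asymp (\ln L_N)^{-2}$ while $\mathcal{N}_{\infty, \RP_1}^{\mathrm{I}}(\varepsilon_N) \lesssim L_N^{-\nu\pi/\sqrt C}$. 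The remaining term $\mathds{E}[\int_{(\varepsilon_N, \infty)} \mathcal{B}(E - \mu_N^{\max})\, \mathcal{N}_{N, \RP_1}^\omega(\mathrm{d} E)]$ tends to $\int_{(0, \infty)} \mathcal{B}(E)\, \mathcal{N}_{\infty, \RP_1}(\mathrm{d} E) = \rho_{c, \RP_1}$ via the vague convergence $\mathcal{N}_{N, \RP_1}^\omega \to \mathcal{N}_{\infty, \RP_1}$, the fact that $\mu_N^{\max} \to 0$, and dominated convergence (with \eqref{ENN le c Ninfty} and Lifshitz tails providing uniform tightness near $0$). The hardest step is exactly this balancing of scales: $\varepsilon_N$ must be small enough for the Lifshitz-tails decay to beat the divergence $\mathcal{B}(N^{-1+\zeta_1}) \asymp N^{1-\zeta_1}$ in the low-energy piece but simultaneously large enough relative to $\mu_N^{\max}$ to control the boundary terms, and the hypothesis $\zeta_2 < \zeta_1$ built into the event is precisely what makes both constraints compatible.
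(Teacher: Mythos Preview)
Your overall plan is sound and the low-energy estimate is essentially the paper's argument, but there is a genuine gap in the high-energy step. You announce that it suffices to prove
\[
\mathds{E}\!\left[\mathds{1}_{\Omega_{N,\RPN}^{j,\zeta_1,\zeta_2}} \int_{(E_{N,\RPN}^{j-1,\omega}, \infty)} \mathcal{B}(E - \mu_{N,\RPN}^\omega) \, \mathcal{N}_{N,\RPN}^\omega(\mathrm{d} E)\right] \le \rho_{c,\RP_1}\, \mathds{P}(\Omega_{N,\RPN}^{j,\zeta_1,\zeta_2}) + o(1),
\]
but your argument only yields the bound $\rho_{c,\RP_1} + o(1)$: after passing from $\mu_{N,\RPN}^\omega$ to the deterministic $\mu_N^{\max}$ and from $\mathcal N_{N,\RPN}^\omega$ to $\mathcal N_{N,\RP_1}^\omega$, you end with ``the remaining term $\mathds{E}[\int_{(\varepsilon_N,\infty)}\mathcal B(E-\mu_N^{\max})\,\mathcal N_{N,\RP_1}^\omega(\mathrm dE)]$ tends to $\rho_{c,\RP_1}$'', which has silently dropped the indicator. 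Since the random integral and the event $\Omega_{N,\RPN}^{j,\zeta_1,\zeta_2}$ are correlated, you cannot recover the factor $\mathds P(\Omega_{N,\RPN}^{j,\zeta_1,\zeta_2})$ from this; you would need $L^1$-convergence of $\int_{(\varepsilon_N,\infty)}\mathcal B(E-\mu_N^{\max})\,\mathcal N_{N,\RP_1}^\omega(\mathrm dE)$ to the constant $\rho_{c,\RP_1}$, which you have not established (and which is delicate because $\varepsilon_N\to 0$ while $\mathcal B$ blows up). With only the weaker bound you obtain $\liminf_N\mathds E[\cdots]\ge \rho\,\liminf_N\mathds P(\Omega_{N,\RPN}^{j,\zeta_1,\zeta_2})-\rho_{c,\RP_1}$, which is strictly smaller than $(\rho-\rho_{c,\RP_1})\liminf_N\mathds P(\Omega_{N,\RPN}^{j,\zeta_1,\zeta_2})$ whenever $\rho_{c,\RP_1}>0$ and $\liminf_N\mathds P(\Omega_{N,\RPN}^{j,\zeta_1,\zeta_2})<1$; this is precisely the regime needed in Corollary~\ref{Corollary 1}.

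The paper avoids this problem by \emph{not} restricting to $\Omega_{N,\RPN}^{j,\zeta_1,\zeta_2}$ at the outset. It keeps the full identity $\rho=\int_{(0,\infty)}\mathcal B(E-\mu_{N,\RPN}^\omega)\,\mathcal N_{N,\RPN}^\omega(\mathrm dE)$, so the high-energy piece $\int_{(\varepsilon,\infty)}$ carries no indicator and can legitimately be bounded by $\rho_{c,\RP_1}$ via reverse Fatou. The event enters only in the treatment of the intermediate window $(E_{N,\RPN}^{j-1,\omega},\varepsilon]$: on $\Omega_{N,\RPN}^{j,\zeta_1,\zeta_2}$ this piece is $o(1)$ by essentially your scale-balancing computation, while on the complement Lemma~\ref{Lemma 5.1} gives the $\mathds P$-almost sure bound $\int_{(0,\varepsilon]}\mathcal B\,\mathrm d\mathcal N_{N,\RPN}^\omega\le \rho_{0,\RP_1}+o_\varepsilon(1)$, whence the complementary contribution is at most $\rho_{0,\RP_1}\,\limsup_N\mathds P(\Omega\setminus\Omega_{N,\RPN}^{j,\zeta_1,\zeta_2})$. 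Combining gives $\rho-\rho_{c,\RP_1}-\rho_{0,\RP_1}(1-\liminf_N\mathds P(\Omega_{N,\RPN}^{j,\zeta_1,\zeta_2}))=\rho_{0,\RP_1}\liminf_N\mathds P(\Omega_{N,\RPN}^{j,\zeta_1,\zeta_2})$, exactly the claim.
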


        \begin{proof}
            We follow the proof of \cite[Proposition 2.8]{KPSConditionTypeOneBEC2020} in large parts and extend it to allow for Poisson random potentials whose strengths converge to infinity. Since for all $\omega \in \widetilde \Omega$ and all $N \in \mathds N$ we have $\rho = \int_{(0,\infty)} \mathcal{B}(E - \mu_{N,\RPN}^{\omega})  \, \mathcal N_{N,\RPN}^{\omega}(\mathrm{d} E)$, we conclude
            \begin{align*}
                & \E \int\limits_{(0,E_{N,\RPN}^{j-1,\omega}]} \mathcal{B}(E - \mu_{N,\RPN}^{\omega})  \, \mathcal N_{N,\RPN}^{\omega}(\mathrm{d} E) \\
                & \quad = \rho\, \mathds P(E_{N,\RPN}^{j-1,\omega} < \varepsilon) - \E \, \mathds 1_{\{E_{N,\RPN}^{j-1,\omega} < \varepsilon\}}(\omega) \int\limits_{(E_{N,\RPN}^{j-1,\omega},\varepsilon]} \mathcal{B}(E - \mu_{N,\RPN}^{\omega})  \, \mathcal N_{N,\RPN}^{\omega}(\mathrm{d} E) \\
                & \qquad \qquad \qquad \qquad \quad \,  - \E \, \mathds 1_{\{E_{N,\RPN}^{j-1,\omega} < \varepsilon\}}(\omega) \int\limits\limits_{(\varepsilon,\infty)} \mathcal{B}(E - \mu_{N,\RPN}^{\omega})  \, \mathcal N_{N,\RPN}^{\omega}(\mathrm{d} E) \\
                & \qquad \qquad \qquad \qquad \quad \,  + \E \, \mathds 1_{\{E_{N,\RPN}^{j-1,\omega} \ge \varepsilon\}}(\omega) \int\limits_{(0,E_{N,\RPN}^{j-1,\omega}]} \mathcal{B}(E - \mu_{N,\RPN}^{\omega})  \, \mathcal N_{N,\RPN}^{\omega}(\mathrm{d} E) \label{proof BEC type ns LSM 4}
            \end{align*}
            for all $\varepsilon > 0$ and all $N \in \mathds N$. Recall that since $\lim_{N \to \infty} \llargest =\infty$ $\mathds P$-almost surely we have $\mathds P(\lim_{N \to \infty} E_{N,\RPN}^{j-1,\omega}=0)=1$. Thus, the last term converges to zero in the limit $N \to \infty$.
	
            Firstly, using reverse Fatou lemma and integration by parts twice we obtain
            \begin{align*}
                & \lim\limits_{\varepsilon \searrow 0} \limsup\limits_{N \to \infty} \mathds E \int\limits\limits_{(\varepsilon,\infty)} \mathcal{B}(E - \mu_{N,\RPN}^{\omega}) \, \, \mathcal N_{N,\RPN}^{\omega}(\mathrm{d} E) \\
                & \quad \le \lim\limits_{\varepsilon \searrow 0} \mathds E \limsup\limits_{N \to \infty} \int\limits\limits_{(\varepsilon,\infty)} \mathcal{B}(E - \mu_{N,\RPN}^{\omega}) \, \, \mathcal N_{N,\RPN}^{\omega}(\mathrm{d} E) \\
                & \quad \le \lim\limits_{\varepsilon \searrow 0} \mathds E \Big[ \limsup\limits_{N \to \infty} \lim\limits_{M \to \infty} \int\limits_{[\varepsilon,M]} \mathcal N_{N,\RPN}^{\mathrm{I},\omega}(E-) \, ( - \mathcal{B'}(E - \mu_{N,\RPN}^{\omega})) \ud E  \\
                & \qquad \qquad + \, \limsup\limits_{N \to \infty} \lim\limits_{M \to \infty} \mathcal N_{N,\RPN}^{\mathrm{I},\omega}(M+) \, \mathcal{B}(M - \mu_{N,\RPN}^{\omega})\Big] \\
                & \quad \le \lim\limits_{\varepsilon \searrow 0} \mathds E \Big[ \limsup\limits_{N \to \infty} \lim\limits_{M \to \infty} \int\limits_{[\varepsilon,M]} \mathcal N_{N,\RP_1}^{\mathrm{I},\omega}(E-) \, ( - \mathcal{B'}(E - \mu_{N,\RPN}^{\omega})) \ud E  \\
                & \qquad \qquad + \, \limsup\limits_{N \to \infty} \lim\limits_{M \to \infty} \mathcal N_{N,\RP_1}^{\mathrm{I},\omega}(2 M) \, \mathcal{B}(M - \mu_{N,\RPN}^{\omega})\Big] \\
                & \quad \le \lim\limits_{\varepsilon \searrow 0} \E \Big[ \limsup\limits_{N \to \infty} \lim\limits_{M \to \infty} \int\limits_{[\varepsilon,M]} \mathcal{B}(E - \mu_{N,\RPN}^{\omega}) \, \, \mathcal N_{N,\RP_1}^{\omega}(\mathrm{d} E) + \limsup\limits_{N \to \infty} \mathcal N_{N,\RP_1}^{\mathrm{I},\omega}(\varepsilon-) \, \mathcal{B}(\varepsilon - \mu_{N,\RPN}^{\omega})\Big] \\
                & \quad \le \lim\limits_{\varepsilon \searrow 0} \E \Big[ \limsup\limits_{N \to \infty} \int\limits\limits_{(\varepsilon,\infty)} \mathcal{B}(E - \mu_{N,\RPN}^{\omega}) \, \, \mathcal N_{N,\RP_1}^{\omega}(\mathrm{d} E) + 2 \mathcal{B}(\varepsilon/2) \limsup\limits_{N \to \infty} \mathcal N_{N,\RP_1}^{\mathrm{I},\omega}(2 \varepsilon) \Big]\\
                & \quad \le \rho_{c,\RP_1}
            \end{align*}
            due to Lemma~\ref{Lemma A.3}, Lemma~\ref{Lemma A.6}, definition~\eqref{Definition critical density}, inequality \eqref{upper bound NNI}, and the Lifshitz tails property \eqref{Lifshitz tail}.
            In addition, note that $\mu_{N,\RPN}^{\omega} < E_{N,V_N}^{1,\omega} \le (\pi/L_N)^2$ for $\mathds P$-almost all $\omega \in \Omega$ and for all $N \in \mathds N$ as well as that $\mathcal N_{N,\RP_1}^{\mathrm{I},\omega}(E) \le \pi^{-1} E^{1/2}$ for $\mathds P$-almost all $\omega \in \Omega$, for all $E \ge 0$, and all $N \in \mathds N$ (see, e.g., \cite[Remark 3.2.2]{pechmanndiss} for more details) and therefore $\lim_{M \to \infty} \mathcal N_{N,\RP_1}^{\mathrm{I},\omega}(2 M) \, \mathcal{B}(M - \mu_{N,\RPN}^{\omega}) = 0$ $\mathds P$-almost surely for all $N \in \mathds N$.
	
            Secondly, since $\mathcal{B}(E - \mu_{N,\RPN}^{\omega}) \le ( \beta (E - E_{N,\RPN}^{1,\omega}) )^{-1} \le \beta^{-1} N^{1 - \zeta_1}$
            for all $\omega \in \Omega_{N,\RPN}^{j,\zeta_1,\zeta_2}$ and $E \ge E_{N,\RPN}^{j,\omega}$, since $\mathcal N_{N,\RPN}^{\mathrm{I},\omega}(E) \ge 0$ for $\mathds P$-almost all $\omega \in \Omega$, for all $N \in \mathds N$, and for all $E \in \mathds R$, with inequality~\eqref{ENN le c Ninfty} and due to the Lifshitz tails property \eqref{Lifshitz tail}, there is a constant $C_1 > 0$ such that
            \begin{align*}
                & \lim\limits_{N \to \infty} \int\limits_{\Omega_{N,\RPN}^{j,\zeta_1,\zeta_2}} \mathds 1_{\{E_{N,\RPN}^{j-1,\omega} < (1 + (\zeta_1 + \zeta_2)/2) \nu \pi / \ln(L_N)]^2\}}(\omega) \cdot \\
                & \qquad \qquad \qquad \Big[ \int\limits_{(E_{N,\RPN}^{j-1,\omega},[(1 + (\zeta_1 + \zeta_2)/2) \nu \pi/ \ln (L_N)]^2]} \mathcal{B}(E - \mu_{N,\RPN}^{\omega}) \, \, \mathcal N_{N,\RPN}^{\omega}(\mathrm{d} E) \Big] \, \mathds P(\mathrm{d} \omega) \\
                & \quad \le \lim\limits_{N \to \infty} \beta^{-1} N^{1-\zeta_1} \E \Big[ \mathcal N_{N,\RPN}^{\mathrm{I},\omega} \big([(1 + (\zeta_1 + \zeta_2)/2) \nu \pi/ \ln (L_N)]^2 \big) \Big] \\
                & \quad \le \lim\limits_{N \to \infty} \beta^{-1} N^{1-\zeta_1} \E \Big[ \mathcal N_{N,\RP_1}^{\mathrm{I},\omega} \big( [(1 + (\zeta_1 + \zeta_2)/2) \nu \pi/ \ln (L_N)]^2 \big) \Big] \\
                & \quad \le \constantENNleNinfty \lim\limits_{N \to \infty} \beta^{-1} N^{1-\zeta_1} \mathcal N_{\infty, \RP_1}^{\mathrm{I}} \big( [(1 + (\zeta_1 + \zeta_2)/2) \nu \pi/ \ln (L_N)]^2 \big)\\
                & \quad = 0 \ .
            \end{align*}
            In addition, we have $E \ge c_2 E_{N,\RPN}^{1,\omega}$ for all $E \ge [(1 + (\zeta_1 + \zeta_2)/2) \nu \pi / \ln(L_N)]^2$ and all $\omega \in \Omega_{N,\RPN}^{j,\zeta_1,\zeta_2}$ where
            $c_2 := ([1 + (\zeta_1 + \zeta_2)/2)]/[1 + \zeta_2])^2 > 1$. Therefore, using that $E - \mu_{N,\RPN}^{\omega} \ge E - E_{N,\RPN}^{1,\omega} \ge (1 - c_2^{-1}) E$ for all but finitely many $N \in \mathds N$ in this case, integration by parts, and the Fubini--Tonelli theorem while proceeding similarly as above, one obtains
            \begin{align*}
                & \lim\limits_{\varepsilon \searrow 0} \limsup\limits_{N \to \infty} \int\limits_{\Omega_{N,\RPN}^{j,\zeta_1,\zeta_2}} \Big[ \int\limits_{([(1 + (\zeta_1 + \zeta_2)/2) \nu \pi / \ln(L_N)]^2,\varepsilon]}\mathcal{B}(E - \mu_{N,\RPN}^{\omega}) \, \, \mathcal N_{N,\RPN}^{\omega}(\mathrm{d} E) \Big] \, \mathds P(\mathrm{d} \omega) \\
                & \quad \le \dfrac{1}{\beta(1 - c_2^{-1})}  \lim\limits_{\varepsilon \searrow 0} \limsup\limits_{N \to \infty} \int\limits_{\Omega_{N,\RPN}^{j,\zeta_1,\zeta_2}} \Big[ \int\limits_{([(1 + (\zeta_1 + \zeta_2)/2) \nu \pi / \ln(L_N)]^2,\varepsilon]} E^{-1}  \, \mathcal N_{N,\RPN}^{\omega} (\mathrm{d} E) \Big] \, \mathds P(\mathrm{d} \omega) \\
                & \quad \le \dfrac{1}{\beta(1 - c_2^{-1})}  \lim\limits_{\varepsilon \searrow 0} \limsup\limits_{N \to \infty} \Big[ \varepsilon^{-1} \E \big[ \mathcal N_{N,\RPN}^{\mathrm{I},\omega} ( 2\varepsilon) \big] \\
                & \qquad \qquad \qquad \qquad \qquad \qquad \qquad + \int\limits_{[(1 + (\zeta_1 + \zeta_2)/2) \nu \pi / \ln(L_N)]^2}^{\varepsilon} \E \big[ \mathcal N_{N,\RPN}^{\mathrm{I},\omega} (E) \big]  E^{-2} \, \mathrm{d} E \Big] \\
                & \quad \le \dfrac{1}{\beta(1 - c_2^{-1})}  \lim\limits_{\varepsilon \searrow 0} \limsup\limits_{N \to \infty} \Big[ \varepsilon^{-1} \E \big[ \mathcal N_{N,\RP_1}^{\mathrm{I},\omega} ( 2\varepsilon) \big] \\
                & \qquad \qquad \qquad \qquad \qquad \qquad \qquad + \int\limits_{[(1 + (\zeta_1 + \zeta_2)/2) \nu \pi / \ln(L_N)]^2}^{\varepsilon} \E \big[ \mathcal N_{N,\RP_1}^{\mathrm{I},\omega} (E) \big]  E^{-2} \, \mathrm{d} E \Big] \\
                & \quad = 0 \ .
            \end{align*}

            Lastly,
            \begin{align*}
                & \lim\limits_{\varepsilon \searrow 0} \limsup\limits_{N \to \infty} \int\limits_{\Omega \backslash \Omega_{N,\RPN}^{j,\zeta_1,\zeta_2}} \mathds 1_{\{E_{N,\RPN}^{j-1,\omega} < \varepsilon\}}(\omega) \Big[ \int\limits_{(E_{N, \RPN}^{j-1,\omega},\varepsilon]} \mathcal{B}(E - \mu_{N,\RPN}^{\omega}) \, \mathcal N_{N,\RPN}^{\omega}(\mathrm{d} E) \Big] \, \mathds P(\mathrm{d} \omega) \\
                & \quad \le \lim\limits_{\varepsilon \searrow 0} \limsup\limits_{N \to \infty} \int\limits_{\Omega \backslash \Omega_{N,\RPN}^{j,\zeta_1,\zeta_2}} \Big[ \int\limits_{(0,\varepsilon]} \mathcal{B}(E - \mu_{N,\RPN}^{\omega}) \, \mathcal N_{N,\RPN}^{\omega}(\mathrm{d} E) \Big] \, \mathds P(\mathrm{d} \omega) \\
                & \quad \le \rho_{0,V_1} \limsup\limits_{N \to \infty} \mathds P(\Omega \backslash \Omega_{N,\RPN}^{j,\zeta_1,\zeta_2}) \ .
            \end{align*}
            We show the last step in more details. Let a sequence $(\varepsilon_i)_{i \in \mathds N} \subset (0,\infty)$ with $\lim_{i \to \infty} \varepsilon_i = 0$ be arbitrarily given.
            With Lemma~\ref{Lemma 5.1} we conclude the following: For all $\eta > 0$ there exists an $\widetilde I \in \mathds N$ such that for all $i \ge \widetilde I$ we $\mathds P$-almost surely have 
                $$\limsup\limits_{N \to \infty} \int\limits_{(0,\varepsilon_i]} \mathcal{B}(E - \mu_{N, \RP_{N}}^{\omega}) \, \mathcal N_{N,\RPN}^{\omega}(\mathrm{d} E) \le \rho_{0,V_1} + \eta \ .$$
            Consequently, for all $\eta > 0$ there exists an $\widetilde I \in \mathds N$ such that for all $i \ge \widetilde I$ the sequence of random variables
                $$\big( \max\big\{ 0, \int\limits_{(0,\varepsilon_i]} \mathcal{B}(E - \mu_{N, \RP_{N}}^{\omega}) \, \mathcal N_{N,\RPN}^{\omega}(\mathrm{d} E) - \rho_{0,V_1} - \eta \big\} \big)_{N \in \mathds N}$$
            converges $\mathds P$-almost surely to zero in the limit $N \to \infty$.
            Therefore, for all $\eta >0$ and for all but finitely many $i \in \mathds N$, $\lim_{N \to \infty} \mathds P(\widehat \Omega_N^{i, \eta}) = 1$ where
            \begin{align*}
                \widehat \Omega_N^{i,\eta} := \Big\{\omega \in \widetilde \Omega: \int\limits_{(0,\varepsilon_i]} \mathcal{B}(E - \mu_{N, \RPN}^{\omega}) \, \mathcal N_{N,\RPN}^{\omega}(\mathrm{d} E) \le \rho_{0,V_1} + \eta \Big\} \ .
            \end{align*}
            We conclude that for any $\eta > 0$,
            \begin{align*}
                &\lim\limits_{i \to \infty} \limsup\limits_{N \to \infty} \int\limits_{\Omega \backslash \Omega_{N,\RP_{N}}^{j,\zeta_1,\zeta_2}} \Big[ \int\limits_{(0,\varepsilon_i]} \mathcal{B}(E - \mu_{N, \RP_{N}}^{\omega}) \, \mathcal N_{N,\RPN}^{\omega}(\mathrm{d} E) \Big] \, \mathds P(\mathrm{d} \omega) \\
                & \qquad \le \lim\limits_{i \to \infty} \limsup\limits_{N \to \infty} \int\limits_{ \widehat \Omega_N^{i,\eta} \backslash \Omega_{N,\RP_{N}}^{j,\zeta_1,\zeta_2}} \Big[ \int\limits_{(0,\varepsilon_i]} \mathcal{B}(E - \mu_{N, \RP_{N}}^{\omega}) \, \mathcal N_{N,\RPN}^{\omega}(\mathrm{d} E) \Big] \, \mathds P(\mathrm{d} \omega) \\
                & \qquad \qquad + \, \rho \lim\limits_{i \to \infty} \limsup\limits_{N \to \infty} \mathds P(\Omega \backslash \widehat \Omega_N^{i,\eta}) \\
                & \qquad \le \Big[ \rho_{0,V_1} + \eta \Big] \lim\limits_{i \to \infty}  \limsup\limits_{N \to \infty} \mathds P(\widehat \Omega_N^{i,\eta} \backslash \Omega_{N,\RP_{N}}^{j,\zeta_1,\zeta_2}) \\
                & \qquad \le \rho_{0,\RP_1} \limsup\limits_{N \to \infty} \mathds P(\Omega \backslash \Omega_{N,\RP_{N}}^{j,\zeta_1,\zeta_2}) + \eta \ .
            \end{align*}
        \end{proof}
        
        \begin{prop} \label{Proposition 5.3}
            We assume that $\rho \ge \rho_{c,\RP_1}$ and $0 \le \liminf_{N \to \infty} \mathds P(\Omega_{N,\RPN}^{j,\zeta_1,\zeta_2}) \le 1$ for some $0 < \zeta_2<\zeta_1 < 1$ and $2 \le j \in \mathds N$. Then for any $\varepsilon > 0$,
            \begin{align}
                \begin{split}
                    & \liminf\limits_{N \to \infty} \mathds P \left( \dfrac{n_{N,\RPN}^{1,\omega}}{L_N}  \ge \dfrac{1}{j-1} \Big( \liminf_{N \to \infty} \mathds P(\Omega_{N,\RPN}^{j,\zeta_1,\zeta_2}) \rho_{0,\RP_1} - \varepsilon \Big) \right) \\
                    & \qquad \ge \, 1 - \dfrac{1 - \liminf_{N \to \infty} \mathds P(\Omega_{N,\RPN}^{j,\zeta_1,\zeta_2})}{\varepsilon} \rho_{0,\RP_1} \ .
                \end{split}
            \end{align}
        \end{prop}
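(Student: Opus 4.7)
The plan is to introduce the auxiliary random variable $X_N := \int_{(0, E_{N,\RPN}^{j-1,\omega}]} \mathcal{B}(E - \mu_{N,\RPN}^{\omega}) \, \mathcal N_{N,\RPN}^{\omega}(\mathrm{d} E)$. Since the density of states $\mathcal N_{N,\RPN}^{\omega}$ is (up to the $L_N^{-1}$ normalization) the counting measure of the eigenvalues of $H_{N,\RPN}^{\omega}$ and all eigenvalues are positive, we have $X_N = L_N^{-1} \sum_{i=1}^{j-1} n_{N,\RPN}^{i,\omega}$. Because occupation numbers are non-increasing in $i$, this yields $n_{N,\RPN}^{1,\omega}/L_N \geq X_N/(j-1)$. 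Setting $p := \liminf_{N \to \infty} \mathds P(\Omega_{N,\RPN}^{j,\zeta_1,\zeta_2})$, it therefore suffices to prove
\[
\liminf_{N \to \infty} \mathds P\bigl(X_N \geq p\, \rho_{0,\RP_1} - \varepsilon\bigr) \geq 1 - \frac{(1-p)\,\rho_{0,\RP_1}}{\varepsilon}.
\]

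I would then combine two asymmetric bounds on $X_N$. Proposition~\ref{Proposition 5.2} gives $\liminf_N \E X_N \geq p\, \rho_{0,\RP_1}$, while the second part of Lemma~\ref{Lemma 5.1} gives $\limsup_N X_N \leq \rho_{0,\RP_1}$ $\mathds P$-almost surely. Since $\sum_{i \in \mathds N} n_{N,\RPN}^{i,\omega} = \rho L_N$, one has the deterministic bound $X_N \leq (j-1)\rho$, so the reverse Fatou lemma can be applied to the nonnegative random variables $(X_N - \rho_{0,\RP_1})^+$, dominated by the integrable constant $(j-1)\rho$, yielding $\limsup_N \E (X_N - \rho_{0,\RP_1})^+ \leq \E (\limsup_N X_N - \rho_{0,\RP_1})^+ = 0$. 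Decomposing $\rho_{0,\RP_1} - X_N = (\rho_{0,\RP_1} - X_N)^+ - (X_N - \rho_{0,\RP_1})^+$ and taking expectations, this gives $\limsup_N \E (\rho_{0,\RP_1} - X_N)^+ \leq \rho_{0,\RP_1} - p\,\rho_{0,\RP_1} + 0 = (1-p)\,\rho_{0,\RP_1}$.

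Finally, I would apply Markov's inequality to the nonnegative random variable $(\rho_{0,\RP_1} - X_N)^+$. Noting that $\{X_N < p\, \rho_{0,\RP_1} - \varepsilon\} \subset \{(\rho_{0,\RP_1} - X_N)^+ > (1-p)\rho_{0,\RP_1} + \varepsilon\}$ yields
\[
\mathds P\bigl(X_N < p\,\rho_{0,\RP_1} - \varepsilon\bigr) \leq \frac{\E (\rho_{0,\RP_1} - X_N)^+}{(1-p)\rho_{0,\RP_1} + \varepsilon}.
\]
Passing to the complementary event and using the bound from the previous paragraph gives the desired conclusion, since $1 - (1-p)\rho_{0,\RP_1}/[(1-p)\rho_{0,\RP_1} + \varepsilon] \geq 1 - (1-p)\rho_{0,\RP_1}/\varepsilon$.

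The main subtlety is merging an asymmetric pair of hypotheses, namely a $\liminf$ on $\E X_N$ together with an almost-sure $\limsup$ upper bound on $X_N$, into a high-probability \emph{lower} bound on $X_N$ itself. Reverse Fatou, available thanks to the deterministic bound $(j-1)\rho$, is what promotes the almost-sure upper bound to a control of $\E (\rho_{0,\RP_1} - X_N)^+$, after which a one-sided Markov estimate closes the argument.
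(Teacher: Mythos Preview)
Your proof is correct and uses the same ingredients as the paper's: the identification $X_N = L_N^{-1}\sum_{i=1}^{j-1} n_{N,\RPN}^{i,\omega}$, the monotonicity $n_{N,\RPN}^{1,\omega}/L_N \geq X_N/(j-1)$, Proposition~\ref{Proposition 5.2} for $\liminf_N \E X_N \geq p\,\rho_{0,\RP_1}$, Lemma~\ref{Lemma 5.1} for the almost-sure upper bound $\limsup_N X_N \leq \rho_{0,\RP_1}$, and the deterministic bound $X_N \leq \rho$.

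The only difference is in how the two pieces are combined. The paper works with $Z_N := X_N - p\,\rho_{0,\RP_1}$, converts the almost-sure upper bound into the statement $\mathds P\big(Z_N > (c+\hat\varepsilon)\rho_{0,\RP_1}\big) \to 0$ for each $\hat\varepsilon > 0$ (where $c=1-p$), splits $\E Z_N$ into the three regions $\{Z_N < -\varepsilon\}$, $\{-\varepsilon \leq Z_N \leq (c+\hat\varepsilon)\rho_{0,\RP_1}\}$, $\{Z_N > (c+\hat\varepsilon)\rho_{0,\RP_1}\}$, and lets $\hat\varepsilon \downarrow 0$ at the end. You instead promote the almost-sure upper bound to $L^1$ via reverse Fatou, obtaining $\limsup_N \E(X_N-\rho_{0,\RP_1})^+ = 0$, then use the decomposition $(\rho_{0,\RP_1}-X_N) = (\rho_{0,\RP_1}-X_N)^+ - (X_N-\rho_{0,\RP_1})^+$ and Markov's inequality. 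Your packaging avoids the auxiliary parameter $\hat\varepsilon$ and is marginally cleaner; in substance the two arguments are the same Chebyshev-type estimate.
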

        \begin{proof}
            For convenience, we define the constant
            \begin{align*}
                c := 1 - \liminf_{N \to \infty} \mathds P(\Omega_{N,\RPN}^{j,\zeta_1,\zeta_2})
            \end{align*}
            and the random variable
            \begin{align*}
                Z_N: \Omega \to \mathds R , \quad \omega \mapsto Z_N^{\omega} := \int\limits_{(0,E_{N,\RPN}^{j-1,\omega}]} \mathcal{B}(E - \mu_{N,\RPN}^{\omega}) \, \mathcal N_{N,\RPN}^{\omega} (\mathrm{d} E) - (1 - c) \rho_{0,V_1}
            \end{align*}
            for each $N \in \mathds N$.
            Due to Proposition~\ref{Proposition 5.2}, $\liminf_{N \to \infty} \E[Z_N^{\omega}] \ge 0$.
            Also, with Lemma~\ref{Lemma 5.1} we conclude $\lim_{N \to \infty} \mathds P (Z_N^{\omega} \le (c+\hat \varepsilon) \rho_{0,V_1}) = 1$ for any $\hat \varepsilon > 0$. In addition,
            we have $Z_N^{\omega} \le \rho$ for $\mathds P$-almost all $\omega \in \Omega$ and all $N \in \mathds N$. Thus, 
            \begin{align*}
                - \hat \varepsilon & \le \E \left[ Z_N^{\omega} \right] 
                = \left( \int\limits_{Z_N^{\omega}<- \varepsilon} + \int\limits_{- \varepsilon \le Z_N^{\omega} \le  (c + \hat\varepsilon)\rho_{0,V_1}} + \int\limits_{Z_N^{\omega} > (c+\hat\varepsilon) \rho_{0,V_1}} \right) Z_N^{\omega}  \, \mathds P(\mathrm{d} \omega) \\
                & \le  - \varepsilon \mathds P(Z_N^{\omega} < - \varepsilon) + (c + \hat \varepsilon) \rho_{0,V_1} + \rho \mathds P \big( Z_N^{\omega} > (c+\hat \varepsilon) \rho_{0,V_1} \big)
            \end{align*}
            for all $\varepsilon, \hat \varepsilon > 0$ and all but finitely many $N \in \mathds N$. Consequently,
            \begin{align*}
                & \limsup\limits_{N \to \infty} \mathds P \Big( \int\limits_{(0,E_{N,\RPN}^{j-1,\omega}]} \mathcal{B}(E - \mu_{N,\RPN}^{\omega})  \, \mathcal N_{N,\RPN}^{\omega}(\mathrm{d} E) < (1-c) \rho_{0,V_1} - \varepsilon \Big) \le \dfrac{(c + \hat \varepsilon) \rho_{0,V_1} + \hat \varepsilon}{\varepsilon} \ .
            \end{align*}
            Since $\hat \varepsilon > 0$ is arbitrary,
            \begin{align*}
                \limsup\limits_{N \to \infty} \mathds P \Big( \int\limits_{(0,E_{N,\RPN}^{j-1,\omega}]} \mathcal{B}(E - \mu_{N,\RPN}^{\omega})  \, \mathcal N_{N,\RPN}^{\omega}(\mathrm{d} E) < (1-c) \rho_{0,V_1} - \varepsilon \Big) \le \dfrac{c }{\varepsilon} \rho_{0,V_1}
            \end{align*}
            for all $\varepsilon > 0$.
            The statement now follows by taking into account that $n_{N,\RPN}^{i,\omega} \le n_{N,\RPN}^{1,\omega}$ $\mathds P$-almost surely for all $N \in \mathds N$ and all $i \in \mathds N$. 
        \end{proof}

        For the next theorem, recall from Section~\ref{SecPrelim} that $\CSU = \CSUleft + \CSUright$ where $\CSUleft, \CSUright > 0$ are such that $[-\CSUleft, \CSUright]$ is the compact support of the single-site potential $u$ of $\RP$. 

        \begin{theorem} \label{Theorem 5.4}
            Let $\RP$ be a Poisson random potential of fixed strength. Then for all $\varepsilon > 0$ there exists a $2 \le j \in \mathds N$ such that for all $0 < \zeta_2<\zeta_1<1$ we have
            \begin{align}
                \liminf\limits_{N \to \infty} \mathds P(\Omega_{N,\RP}^{j,\zeta_1,\zeta_2}) \ge 1 - \varepsilon \ .
            \end{align}
            Furthermore, if there are constants $0 < a \le \CSUright$, $0 < b \le \CSUleft$ such that
            \begin{align} \label{assumption theorem 5.4 second case}
                \max \Big\{2(a+b),  \dfrac{16(8 \pi+1)^2}{\min\big\{ \int_{\CSUright-a}^{\CSUright} u(x) \, \mathrm{d} x, \int_{-\CSUleft}^{-\CSUleft + b} u(x) \, \mathrm{d} x\big\}}\Big\} < c \ ,
            \end{align}
            for an $c > 0$, then
            \begin{align}
                \liminf\limits_{N \to \infty} \mathds P(\Omega_{N,\RP}^{2,\zeta_1,\zeta_2}) \ge \e^{- \nu c}
            \end{align}
            for all $0 < \zeta_2 < \zeta_1 < 1$.
        \end{theorem}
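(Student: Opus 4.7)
The plan is to combine the quantitative upper bound on $E_{N,V}^{1,\omega}$ from Proposition~\ref{Proposition 4.1} with the Neumann-type lower bound on higher eigenvalues from Propositions~\ref{Proposition 4.3} and~\ref{Proposition 4.4} (sharpened via Lemma~\ref{Lemma 4.2}), and then to control the resulting event probabilistically through the interval-length gap results Proposition~\ref{Proposition 3.4} and Proposition~\ref{Proposition 3.7}. The upper-bound clause $E_{N,V}^{1,\omega} \le [(1+\zeta_2)\nu\pi/\ln(L_N)]^2$ in the definition of $\Omega_{N,V}^{j,\zeta_1,\zeta_2}$ is the easy part: Proposition~\ref{Proposition 4.1} yields $E_{N,V}^{1,\omega} \le \pi^2/(\llargest-\CSU)^2$ with probability tending to one, and Lemma~\ref{Lemma 3.2} pushes $\llargest$ above $(1-\delta(\zeta_2))\nu^{-1}\ln(L_N)$ with probability tending to one for any $0<\zeta_2<1$.

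For the gap clause, set $L := \llargest - \CSU$, fix $j_0 \in \mathds N$ (equal to $2$ in part~(2); to be chosen below in part~(1)), and put $L' := \llargestN{j_0} - \CSU$. On the event $\{L - L' > c\}$, combining Lemma~\ref{Lemma 4.2} with Proposition~\ref{Proposition 4.3} (when $j = 2$) or Proposition~\ref{Proposition 4.4} (with $j := 2 j_0$ so that $\lceil j/2\rceil = j_0$), and observing that for large $N$ the alternative terms $\tfrac{9}{4}$ or $\tfrac{15}{9}$ times $(\nu\pi/\ln L_N)^2$ in the minima exceed the Lemma~\ref{Lemma 4.2} bound and are therefore non-binding, one obtains
\begin{equation*}
E_{N,V}^{j,\omega} - E_{N,V}^{1,\omega} \ge \frac{\pi^2}{(L'+a+b)^2} - \frac{\pi^2}{L^2} - \frac{(4\pi)^2(8\pi+1)^2}{\widetilde\WWS_N^{a,b}\,L'^2\,\llargestN{j_0}}.
\end{equation*}
Using $L' \le L - c$ together with $c > 2(a+b)$, the first difference is at least $\pi^2 c/L^3$ to leading order, while the third term is bounded by $16\pi^2(8\pi+1)^2/(\widetilde\WWS_N^{a,b}\,L^3)$ to the same order; the hypothesis $c > 16(8\pi+1)^2/\widetilde\WWS_N^{a,b}$ then forces the net gap to be at least $C/(\ln L_N)^3$ for some constant $C > 0$, which is asymptotically much larger than $N^{-1+\zeta_1}$ for every $\zeta_1 < 1$.

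To conclude, part~(2) follows because Proposition~\ref{Proposition 3.4} gives $\liminf_N \mathds P(\llargest - \llargestzwei > c) \ge e^{-\nu c}$, and the other events used above have probability tending to one; intersecting them preserves the bound $e^{-\nu c}$. For part~(1), given $\varepsilon > 0$, I would fix any admissible pair $(a,b)$ (for instance $(\CSUright/2,\CSUleft/2)$), choose $\hat c$ satisfying both inequalities of part~(2), apply Proposition~\ref{Proposition 3.7} to obtain an index $j_0 \in \mathds N$ with $\mathds P(\llargest - \llargestN{j_0} > \hat c) \ge 1 - \varepsilon$ for all but finitely many $N$, and then take $j := 2 j_0$. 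The main technical subtlety will be the careful asymptotic expansion of $(L'+a+b)^{-2} - L^{-2}$ and the accompanying verification that the negative Neumann-correction term in Lemma~\ref{Lemma 4.2} is strictly dominated by the positive contribution $\pi^2 c/L^3$, uniformly for all sufficiently large $N$; the two inequalities in the hypothesis on $c$ are precisely what is needed for this balance to close.
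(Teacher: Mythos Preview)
Your proposal is correct and follows essentially the same route as the paper: Proposition~\ref{Proposition 4.1} for the upper bound on $E_{N,V}^{1,\omega}$, Lemma~\ref{Lemma 4.2} combined with Proposition~\ref{Proposition 4.3} or~\ref{Proposition 4.4} for the lower bound on the higher eigenvalue, and Propositions~\ref{Proposition 3.4} and~\ref{Proposition 3.7} together with Lemma~\ref{Lemma 3.2} to control the interval-length events. The paper carries out the step you flag as the ``main technical subtlety'' by an explicit case split on whether $\llargestN{\lceil j/2\rceil}\le (1-\eta)\,\llargestN{1}$ or not (with $\eta$ chosen just below $1-16(8\pi+1)^2/(\widetilde\WWS^{a,b}c)$): in the first case the positive part of the gap is of order $1/(\llargestN{1})^2$ and dominates trivially, while in the second case both competing terms are of order $1/(\llargestN{1})^3$ and the paper tracks the constants so that precisely the hypothesis $c>16(8\pi+1)^2/\widetilde\WWS^{a,b}$ (together with $c>2(a+b)$) makes the difference positive. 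Your leading-order heuristic captures exactly this second case; the case split is what turns it into a uniform statement. One cosmetic difference: for part~(1) the paper simply takes $a=\CSUright$, $b=\CSUleft$ (so $\widetilde\WWS^{a,b}=\WWS$) rather than your $(\CSUright/2,\CSUleft/2)$, but either choice works since in part~(1) one is free to enlarge~$\hat c$.
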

        \begin{proof}
            We start by proving the first part. Let an arbitrary $\varepsilon > 0$ be given. Recall that $\WWS = \min\{ \int_{0}^{\CSUright} u(x) \ud x, \int_{-\CSUleft}^{0} u(x) \ud x\}>0$ is the strength of $V$. We choose a constant $\hat c > 2 \CSU + 16 (8 \pi+1)^2/\WWS$.
            According to Proposition~\ref{Proposition 3.7}, there exists a $j \in \mathds N$ such that
            \begin{align*}
                \liminf\limits_{N \to \infty} \mathds P \left( \llargest - \llargestN{\lceil j/2 \rceil} > \hat c \right) \ge 1 - \varepsilon \ .
            \end{align*}
            Moreover, we have
            \begin{align*}
                \lim\limits_{N \to \infty} \mathds P \Big( E_{N,\RP}^{1,\omega} \le \dfrac{\pi^2}{\big( \llargest - \mathcal C_u \big)^2} \Big) = 1
            \end{align*}
            and
            \begin{align*}
                \lim\limits_{N \to \infty} \mathds P \Bigg( E_{N,\RP}^{j,\omega} \ge \min\Big\{ \dfrac{\pi^2}{(\llargestN{\lceil j/2 \rceil})^2} - \dfrac{(4 \pi)^2(8\pi+1)^2}{\WWS} \dfrac{1}{(\llargestN{\lceil j/2 \rceil} - \mathcal C_u)^2 \llargestN{\lceil j/2 \rceil}} , \dfrac{15}{9} \dfrac{(\nu \pi)^2}{(\ln(L_N))^2} \Big\} \Bigg) = 1 \ ,
            \end{align*}
            see Proposition~\ref{Proposition 4.1}, Lemma~\ref{Lemma 4.2} (with $a = \CSUright$, $b = \CSUleft$), and Proposition~\ref{Proposition 4.4}.
            We assume that $\llargest - \llargestN{\lceil j/2 \rceil} > \hat c$ and choose an $0 < \eta < 1 - 16(8\pi+1)^2/[\WWS(\hat c - 2 \CSU)]$. We conclude that for all sufficiently large $\llargestN{\lceil j/2 \rceil}$ we have
            \begin{align*}
                & \dfrac{\pi^2}{(\llargestN{\lceil j/2 \rceil})^2} - \dfrac{(4 \pi)^2 (8\pi+1)^2}{\WWS} \dfrac{1}{(\llargestN{\lceil j/2 \rceil} - \CSU)^2 \llargestN{\lceil j/2 \rceil}} - \dfrac{\pi^2}{(\llargestN{1} - \CSU)^2} \\
                & \quad \ge \, \dfrac{(1-\eta)\pi^2}{(\llargestN{\lceil j/2 \rceil})^2} - \dfrac{\pi^2}{(\llargestN{1} - \CSU)^2} \\
                & \quad \ge \, \dfrac{(1-\eta)^{-1}\pi^2}{(\llargestN{1})^2} - \dfrac{\pi^2}{(\llargestN{1} - \CSU)^2} \\
                & \quad \ge \, \dfrac{(const.)}{(\llargestN{1})^2}
            \end{align*}
            if $\llargestN{\lceil j/2 \rceil} \le (1-\eta) \llargestN{1}$ and
            \begin{align*}
                & \dfrac{\pi^2}{(\llargestN{\lceil j/2 \rceil})^2} - \dfrac{(4 \pi)^2 (8\pi+1)^2}{\WWS} \dfrac{1}{(\llargestN{\lceil j/2 \rceil} - \CSU)^2 \llargestN{\lceil j/2 \rceil}} - \dfrac{\pi^2}{(\llargestN{1} - \CSU)^2} \\
                & \quad \ge \, \pi^2 \dfrac{(\llargestN{1} - \llargestN{\lceil j/2 \rceil})(\llargestN{1} + \llargestN{\lceil j/2 \rceil}) - 2 \CSU \llargestN{1} + \CSU^2}{(\llargestN{\lceil j/2 \rceil})^2 (\llargestN{1} - \CSU)^2} \\
                & \qquad \qquad - \dfrac{(4 \pi)^2 (8\pi+1)^2}{\WWS} \dfrac{1}{(\llargestN{\lceil j/2 \rceil} - \CSU)^2 \llargestN{\lceil j/2 \rceil}} \\
                & \quad \ge \, \dfrac{\pi^2 (\hat c - 2 \CSU)\llargestN{1}}{(\llargestN{\lceil j/2 \rceil})^2 (\llargestN{1} - \CSU)^2} - \dfrac{(4 \pi)^2 (8\pi+1)^2}{\WWS} \dfrac{1}{(\llargestN{\lceil j/2 \rceil} - \CSU)^2 \llargestN{\lceil j/2 \rceil}} \\
                & \quad \ge \, \dfrac{\pi^2(\hat c - 2 \CSU)}{(\llargestN{\lceil j/2 \rceil})^2 \llargestN{1}} - \dfrac{(4 \pi)^2 (8\pi+1)^2}{\WWS} \dfrac{1}{(\llargestN{\lceil j/2 \rceil} - \CSU)^2 \llargestN{\lceil j/2 \rceil}} \\
                & \quad \ge \, \dfrac{\pi^2(1-\eta)(\hat c - 2 \CSU)}{(\llargestN{\lceil j/2 \rceil})^2 \llargestN{\lceil j/2 \rceil}} - \dfrac{(4 \pi)^2 (8\pi+1)^2}{\WWS} \dfrac{1}{(\llargestN{\lceil j/2 \rceil} - \CSU)^2 \llargestN{\lceil j/2 \rceil}} \\
                & \quad \ge \, \dfrac{(const.)}{(\llargestN{1})^3}
            \end{align*}
            if $\llargestN{\lceil j/2 \rceil} > (1-\eta) \llargestN{1}$.
    
            Secondly, taking into account Proposition~\ref{Proposition 4.1}, Lemma~\ref{Lemma 4.2}, and Proposition~\ref{Proposition 4.3},
            we similarly conclude that if condition \eqref{assumption theorem 5.4 second case} is fulfilled and if $\llargest - \llargestzwei > c$, then we have, with $\WWS^{a,b} := \min\big\{ \int_{\CSUright-a}^{\CSUright} u(x) \, \mathrm{d} x, \int_{-\CSUleft}^{-\CSUleft + b} u(x) \, \mathrm{d} x\big\}$ and any $0 < \hat \eta < 1 - 16 (8\pi+1)^2/(\WWS^{a,b} c)$, for all sufficiently large $\llargestN{2}$
            \begin{align*}
                & \dfrac{\pi^2}{(\llargestN{2} - (\CSU - a - b) )^2} - \dfrac{(4 \pi)^2 (8\pi+1)^2}{\WWS^{a,b}} \dfrac{1}{(\llargestN{2} - \CSU)^2 \llargestN{2}} - \dfrac{\pi^2}{(\llargestN{1} - \CSU)^2} \ge \dfrac{(const.)}{(\llargestN{1})^2}
            \end{align*}
            if $\llargestN{2} \le (1-\hat \eta) \llargestN{1}$, and otherwise
            \begin{align*}
                & \dfrac{\pi^2}{(\llargestN{2} - (\CSU - a - b))^2} - \dfrac{(4 \pi)^2 (8\pi+1)^2}{\WWS^{a,b}} \dfrac{1}{(\llargestN{2} - \CSU)^2 \llargestN{2}} - \dfrac{\pi^2}{(\llargestN{1} - \CSU)^2} \\
                & \quad \ge \, \pi^2 \dfrac{(\llargestN{1} - \llargestN{2})(\llargestN{1} + \llargestN{2}) - 2 \CSU (\llargestN{1} - \llargestN{2}) - 2 (a+b) \llargestN{2}}{(\llargestN{2} - (\CSU - a - b))^2 (\llargestN{1} - \CSU)^2} \\
                & \qquad \qquad - \dfrac{(4 \pi)^2 (8\pi+1)^2}{\WWS^{a,b}} \dfrac{1}{(\llargestN{2} - \CSU)^2 \llargestN{2}} \\
                & \quad \ge \, \dfrac{\pi^2 c \llargestN{1}}{(\llargestN{2} - (\CSU - a - b))^2 (\llargestN{1} - \CSU)^2} - \dfrac{(4 \pi)^2 (8\pi+1)^2}{\WWS^{a,b}} \dfrac{1}{(\llargestN{2} - \CSU)^2 \llargestN{2}} \\
                & \quad \ge \, \dfrac{\pi^2 c (1-\hat \eta)}{(\llargestN{2} - (\CSU - a - b))^2 \llargestN{2}} - \dfrac{(4 \pi)^2 (8\pi+1)^2}{\WWS^{a,b}} \dfrac{1}{(\llargestN{2} - \CSU)^2 \llargestN{2}} \\
                & \quad \ge \, \dfrac{(const.)}{(\llargestN{1})^3} \ .
            \end{align*}

            Lastly, the claims of this theorem now follow since
            \begin{align*}
                \lim\limits_{N \to \infty} \mathds P \left( (1-\zeta) \nu^{-1} \ln N \le \llargest \le (1+\zeta) \nu^{-1} \ln N \right) = 1
            \end{align*}
            for any $\zeta > 0$,
            \begin{align*}
                \liminf\limits_{N \to \infty} \mathds P \left( \llargest - \llargestzwei > c \right) \ge \e^{-\nu c} \ ,
            \end{align*}
            see Lemma~\ref{Lemma 3.2} and Proposition~\ref{Proposition 3.4}, and since $\lim_{N \to \infty} \mathds P(\llargestN{k} > \tilde c) = 1$ for any $k \in \mathds N$ and any $\tilde c > 0$.
        \end{proof}

        \begin{theorem} \label{Theorem 5.5}
            Let $(V_N)_{N \in \mathds N}$ be a Poisson random potential with a strength that converges to infinity. Then for all $0 < \zeta_2 < \zeta_1 < 1$ one has
            \begin{align}
                \lim\limits_{N \to \infty} \mathds P(\Omega_{N,\RPN}^{2,\zeta_1,\zeta_2}) = 1 \ .
            \end{align}
        \end{theorem}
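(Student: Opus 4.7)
The plan is to verify the two defining conditions of $\Omega_{N,\RPN}^{2,\zeta_1,\zeta_2}$ separately, each with probability tending to one, and then intersect. The first is the upper bound $E_{N,\RPN}^{1,\omega} \le [(1+\zeta_2)\nu\pi/\ln(L_N)]^2$, and the second is the spectral gap $E_{N,\RPN}^{2,\omega} - E_{N,\RPN}^{1,\omega} \ge N^{-1+\zeta_1}$.

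For the upper bound on $E_{N,\RPN}^{1,\omega}$, I would combine Proposition~\ref{Proposition 4.1} with Lemma~\ref{Lemma 3.2}. Fix any $\eta \in (0,\zeta_2/(1+\zeta_2))$; the event $\{\llargest \ge (1-\eta)\nu^{-1}\ln(L_N)\}$ has probability tending to one, and on it Proposition~\ref{Proposition 4.1} gives $E_{N,\RPN}^{1,\omega} \le \pi^2/(\llargest - \CSU)^2 \le [(1+\zeta_2)\nu\pi/\ln(L_N)]^2$ for all sufficiently large $N$, since the $\CSU$ correction is asymptotically negligible compared with $\ln(L_N)$. This step does not use the hypothesis $\WWS_N \to \infty$.

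The nontrivial part is the spectral gap, where the hypothesis $\WWS_N \to \infty$ enters decisively. Heuristically, as the strength grows the Neumann-eigenvalue lower bound of Lemma~\ref{Lemma 4.2} approaches the Luttinger--Sy value $\pi^2/(\llargestN{j})^2$, so the gap is driven by the length gap $\llargest - \llargestzwei$. I would first select $a_N, b_N > 0$ with $a_N + b_N \to 0$ yet $\widetilde\WWS_N^{a_N,b_N} \to \infty$. Such a choice exists: the maps $a \mapsto \int_{\CSUright - a}^{\CSUright} u(x)\,\mathrm{d} x$ and $b \mapsto \int_{-\CSUleft}^{-\CSUleft + b} u(x)\,\mathrm{d} x$ are continuous, vanish at the origin, and take positive values for $a, b > 0$, so one may, for instance, pick $a_N, b_N$ with both integrals equal to $1/\sqrt{\WWS_N}$, giving $\widetilde\WWS_N^{a_N,b_N} = \sqrt{\WWS_N}$. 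Define $d_N := C_1 (\ln(L_N))^3 N^{-1+\zeta_1} + C_2/\widetilde\WWS_N^{a_N,b_N}$ with constants $C_1, C_2 > 0$ absorbing numerical factors from the estimate below; both summands vanish (the first because $\zeta_1 < 1$, the second by construction), so $c_N := (a_N + b_N) + d_N \to 0$. By Proposition~\ref{Proposition 3.5}, $\mathds P(\llargest - \llargestzwei > c_N) \to 1$; intersect with the event from Lemma~\ref{Lemma 3.2} that $\llargest, \llargestzwei \in [(1-\eta),(1+\eta)]\nu^{-1}\ln(L_N)$, together with the high-probability events from Propositions~\ref{Proposition 4.1} and~\ref{Proposition 4.3} and Lemma~\ref{Lemma 4.2} (applied with $j=2$, $(a,b)=(a_N,b_N)$). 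On the intersection,
\begin{align*}
E_{N,\RPN}^{2,\omega} - E_{N,\RPN}^{1,\omega} \ge \frac{\pi^2}{(\llargestzwei - \CSU + a_N + b_N)^2} - \frac{(4\pi)^2(8\pi+1)^2}{\widetilde\WWS_N^{a_N,b_N}(\llargestzwei - \CSU)^2 \llargestzwei} - \frac{\pi^2}{(\llargest - \CSU)^2},
\end{align*}
and since $\llargest - \llargestzwei > (a_N + b_N) + d_N$ with both lengths of order $\nu^{-1}\ln(L_N)$, a Taylor expansion shows that the first and third terms differ by at least $2\pi^2\nu^3 d_N(1+o(1))/(\ln(L_N))^3$, which by design of $d_N$ both absorbs the correction term and still exceeds $N^{-1+\zeta_1}$. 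For the alternative $(9/4)(\nu\pi/\ln(L_N))^2$ inside the minimum in Proposition~\ref{Proposition 4.3}, a direct comparison with $E_{N,\RPN}^{1,\omega}$ (which is at most $[(1+\eta')\nu\pi/\ln(L_N)]^2$ for arbitrarily small $\eta'$ on the good event) yields a gap of order $(\ln(L_N))^{-2}$, vastly larger than $N^{-1+\zeta_1}$.

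The main obstacle is the delicate simultaneous choice of $a_N, b_N, d_N$: shrinking $a_N + b_N \to 0$ is required to invoke Proposition~\ref{Proposition 3.5} (so that the length-gap event has probability tending to one), while at the same time $\widetilde\WWS_N^{a_N,b_N} \to \infty$ is required so that the Lemma~\ref{Lemma 4.2} correction is suppressed below the leading $d_N/(\ln(L_N))^3$ contribution. The assumption $\WWS_N \to \infty$ is precisely what makes this balance possible and distinguishes the result from the fixed-strength Theorem~\ref{Theorem 5.4}, where only the probability $\e^{-\nu c}$ can be attained.
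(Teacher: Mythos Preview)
Your proposal is correct and follows essentially the same route as the paper's proof: choose $a_N,b_N\to 0$ slowly enough that $\widetilde\WWS_N^{a_N,b_N}\to\infty$ (the paper simply takes $a_N=b_N$), then replace the fixed $c$ in the second part of Theorem~\ref{Theorem 5.4} by a vanishing sequence $c_N$ and invoke Proposition~\ref{Proposition 3.5} in place of Proposition~\ref{Proposition 3.4}. The paper's proof is terser---it merely points back to the computation in Theorem~\ref{Theorem 5.4}, which already yields a gap of order $(const.)/(\llargest)^3\gg N^{-1+\zeta_1}$---so your explicit $d_N$ with the $(\ln L_N)^3 N^{-1+\zeta_1}$ term is a slight over-engineering, but harmless.
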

        \begin{proof}
            Due to our assumptions, there exists a sequence $(a_N)_{N \in \mathds N} \subset (0,\infty)$ that converges to zero but slowly enough such that
            \begin{align*}
                \lim\limits_{N \to \infty} \WWS_N \min\Big\{ \int_{\CSUright- a_N}^{\CSUright} u(x) \ud x, \int_{-\CSUleft}^{-\CSUleft + a_N} u(x) \ud x \Big\} = \infty \ .
            \end{align*}
            Using Lemma~\ref{Lemma 4.2}, we proceed as in the proof of the second part of Theorem~\ref{Theorem 5.4}
            but instead of $c$ with a sequence $(c_N)_{N \in \mathds N}$ that converges to zero sufficiently slowly and for which
            \begin{align*}
                c_N > \max\Big\{ 4 a_N, \dfrac{16 (8\pi+1)^2}{\widetilde S_N^{a_N,a_N}} \Big\}
            \end{align*}
            for all $N \in \mathds N$ holds as well as by using Proposition~\ref{Proposition 3.5} instead of Proposition~\ref{Proposition 3.4}.
        \end{proof}

        We finally prove the occurrence of type-I g-BEC. Recall that in this work we assume a single-site potential $u$ that is a nonnegative, compactly supported, and bounded measurable function. In Corollaries~\ref{Corollary 1} and~\ref{Corollary 2}, we assume a Poisson random potential of fixed strength. Here, we prove that if the particle density is larger than the critical density, then type-I g-BEC occurs with probability almost one. In addition, we also show that if the particle density is larger than the critical one and if the strength of the random potential is sufficiently large in a certain sense, then a type-I g-BEC in which only the ground state is macroscopically occupied occurs with a probability arbitrarily close to one. Note that in the former case no assumption regarding the strength of the random potential is made. In the subsequent Corollary~\ref{Corollary 3} we show that a type-I g-BEC occurs in probability, and consequently also in the $r$th mean, $r \ge 1$, if the strength of the Poisson random potential converges to infinity in the thermodynamic limit, in the sense of \eqref{def RP converges to infty zwei} but otherwise arbitrarily slowly, and if the particle density is sufficiently large. At this point, we would also like to remind the reader of Remark~\ref{Remark LS with finite strength}.
        
        \begin{cor}[Type-I g-BEC with probability almost one] \label{Corollary 1}
         Let $V$ be a Poisson random potential of fixed strength and $\rho > \rho_{c,V}$. Then for all $\varepsilon>0$ there exists a $2 \le j \in \mathds N$ such that for all $\eta > 0$,
                \begin{align}
                    \liminf\limits_{N \to \infty} \mathds P \left(\dfrac{n_{N,\RP}^{1,\omega}}{N} \ge \dfrac{1 - \varepsilon}{j-1} \dfrac{\rho_{0,V}}{\rho},\dfrac{n_{N,\RP}^{j,\omega}}{N} < \eta \right) \ge 1 - \varepsilon \ .
                \end{align}
        \end{cor}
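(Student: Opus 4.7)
The plan is to combine Theorem~\ref{Theorem 5.4} with Proposition~\ref{Proposition 5.3} and to extract the smallness of $n_{N,V}^{j,\omega}/N$ directly from the spectral gap already built into the event $\Omega_{N,V}^{j,\zeta_1,\zeta_2}$. Since $\rho > \rho_{c,V}$ we have $\rho_{0,V} = \rho - \rho_{c,V} > 0$, which makes the constants below meaningful. Fix $\varepsilon \in (0,1)$, choose any $0 < \zeta_2 < \zeta_1 < 1$, set $\tilde\varepsilon := \varepsilon \rho_{0,V}/2$, and pick $\delta > 0$ small enough that both $\delta \le \varepsilon/2$ and $\delta \rho_{0,V}/\tilde\varepsilon \le \varepsilon/2$ hold; for instance $\delta := \varepsilon^2/4$ works.

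First I would apply the first part of Theorem~\ref{Theorem 5.4} at level $\delta$ to obtain some $j \ge 2$ with $\liminf_{N\to\infty}\mathds P(\Omega_{N,V}^{j,\zeta_1,\zeta_2}) \ge 1 - \delta$. On $\Omega_{N,V}^{j,\zeta_1,\zeta_2}$, the inequality $\mu_{N,V}^\omega < E_{N,V}^{1,\omega}$ combined with the gap $E_{N,V}^{j,\omega} - E_{N,V}^{1,\omega} \ge N^{-1+\zeta_1}$ and the elementary bound $\e^x - 1 \ge x$ for $x \ge 0$ gives
\[
    \frac{n_{N,V}^{j,\omega}}{N} = \frac{1}{N\big(\e^{\beta(E_{N,V}^{j,\omega} - \mu_{N,V}^\omega)} - 1\big)} \le \frac{1}{\beta N \cdot N^{-1+\zeta_1}} = \frac{1}{\beta N^{\zeta_1}} \ ,
\]
which tends to zero as $N \to \infty$. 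Consequently, for every fixed $\eta > 0$ one has $\Omega_{N,V}^{j,\zeta_1,\zeta_2} \subset \{n_{N,V}^{j,\omega}/N < \eta\}$ for all sufficiently large $N$, and hence $\liminf_{N\to\infty}\mathds P(n_{N,V}^{j,\omega}/N < \eta) \ge 1 - \delta$.

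Next I would apply Proposition~\ref{Proposition 5.3} to the same $j$ with its internal parameter equal to $\tilde\varepsilon$, and convert the $1/L_N$ normalization into $1/N$ via $n_{N,V}^{1,\omega}/L_N = \rho \, n_{N,V}^{1,\omega}/N$. Using $\liminf \mathds P(\Omega_{N,V}^{j,\zeta_1,\zeta_2}) \ge 1 - \delta$ this yields
\[
    \liminf_{N\to\infty}\mathds P\Big(\frac{n_{N,V}^{1,\omega}}{N} \ge \frac{(1-\delta)\rho_{0,V} - \tilde\varepsilon}{(j-1)\rho}\Big) \ge 1 - \frac{\delta \rho_{0,V}}{\tilde\varepsilon} \ge 1 - \varepsilon/2 \ ,
\]
and the choice of $\delta$ and $\tilde\varepsilon$ ensures $(1-\delta)\rho_{0,V} - \tilde\varepsilon \ge (1-\varepsilon)\rho_{0,V}$. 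Applying the elementary union bound $\mathds P(A \cap B) \ge \mathds P(A) + \mathds P(B) - 1$ to the two events then produces $\liminf_{N\to\infty}\mathds P(\ldots) \ge (1 - \varepsilon/2) + (1 - \delta) - 1 \ge 1 - \varepsilon$, which is exactly the claim.

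The main obstacle is purely book-keeping: $\delta$ has to be squeezed between the additive $\varepsilon/2$ slack left in the union bound and the multiplicative error $\delta \rho_{0,V}/\tilde\varepsilon$ appearing in Proposition~\ref{Proposition 5.3}, which forces $\delta$ to be of order $\varepsilon^2$. This is harmless precisely because $\rho > \rho_{c,V}$ makes $\rho_{0,V}$ a fixed positive constant; all the conceptual work has already been done in Theorem~\ref{Theorem 5.4} and Proposition~\ref{Proposition 5.3}.
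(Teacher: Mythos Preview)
Your proof is correct and follows essentially the same route as the paper: invoke the first part of Theorem~\ref{Theorem 5.4} to secure the event $\Omega_{N,V}^{j,\zeta_1,\zeta_2}$ with high probability, feed that into Proposition~\ref{Proposition 5.3} for the lower bound on $n_{N,V}^{1,\omega}/N$, read off $n_{N,V}^{j,\omega}/N < \eta$ directly from the spectral gap on $\Omega_{N,V}^{j,\zeta_1,\zeta_2}$, and combine via $\mathds P(A\cap B)\ge \mathds P(A)+\mathds P(B)-1$. The only difference is cosmetic: the paper applies Theorem~\ref{Theorem 5.4} at level $\varepsilon$ and Proposition~\ref{Proposition 5.3} with parameter $\sqrt{\varepsilon}\,\rho_{0,V}$, landing on the bound $1-\sqrt{\varepsilon}-\varepsilon$ (and then implicitly relabels), whereas you pick $\delta=\varepsilon^2/4$ and $\tilde\varepsilon=\varepsilon\rho_{0,V}/2$ so that the final constant comes out as $1-\varepsilon$ without relabeling.
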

        \begin{proof}
            Let $0 < \zeta_2<\zeta_1<0$ be arbitrarily given. Using the first part of Theorem~\ref{Theorem 5.4} we conclude that for all $\varepsilon > 0$ there exists a $2 \le j \in \mathds N$ such that
            \begin{align*}
                \liminf\limits_{N \to \infty} \mathds P \big( \Omega_{N,\RP}^{j,\zeta_1,\zeta_2} \big) & \ge 1 - \varepsilon \ .
            \end{align*}
            By Theorem~\ref{Proposition 5.3} we thus have
            \begin{align*}
                \liminf\limits_{N \to \infty} \mathds P \left( \dfrac{n_{N,\RP}^{1,\omega}}{L_N}  \ge \dfrac{1 - \sqrt{\varepsilon} - \varepsilon}{j-1} \rho_{0,V} \right) \ge 1 - \sqrt{\varepsilon} \ .
            \end{align*}
            In addition, for any $\eta > 0$, for all but finitely many $N \in \mathds N$, and for all $\omega \in \Omega_{N,\RP}^{j,\zeta_1,\zeta_2}$,
            \begin{align*}
                \dfrac{n_{N,\RPN}^{j,\omega}}{N} & = \dfrac{1}{N} \left( \mathrm{e}^{\beta (E_{N,\RP}^{j,\omega} - \mu_{N,\RP}^{\omega})} - 1 \right)^{-1} \le \dfrac{1}{N} \left( \mathrm{e}^{\beta (E_{N,\RP}^{j,\omega} - E_{N,\RP}^{1,\omega})} - 1 \right)^{-1} \\
                & \le \dfrac{1}{N} \Big( \beta (E_{N,\RP}^{j,\omega} - E_{N,\RP}^{1,\omega}) \Big)^{-1} 
                < \eta \ .
            \end{align*}
            Consequently,
            \begin{align}
                \liminf\limits_{N \to \infty} \mathds P \left( \dfrac{n_{N,\RP}^{j,\omega}}{N}  < \eta \right) \ge 1 - \varepsilon
            \end{align}
            and
            \begin{align}
                \liminf\limits_{N \to \infty} \mathds P \left( \dfrac{n_{N,\RP}^{1,\omega}}{L_N}  \ge \dfrac{ 1 - \sqrt{\varepsilon} - \varepsilon}{j-1} \rho_{0,V}, \dfrac{n_{N,\RP}^{j,\omega}}{N}  < \eta \right) \ge 1 - \sqrt{\varepsilon} - \varepsilon
            \end{align}
            for any $\eta > 0$.
            \end{proof}

        \begin{cor}\label{Corollary 2}
            Let $V$ be a Poisson random potential of fixed strength and with a single-site potential $u$ that has the compact support $[-\CSUleft, \CSUright]$. Let $\rho > \rho_{c,V}$. Then we have:
                If there are $0 < a \le \CSUright$, $0 < b \le \CSUleft$ such that
                \begin{align}
                    \max \Bigg\{ 2(a+b),  \dfrac{16(8\pi+1)^2}{\min\big\{ \int_{\CSUright-a}^{\CSUright} u(x) \, \mathrm{d} x, \int_{-\CSUleft}^{-\CSUleft + b} u(x) \, \mathrm{d} x\big\}} \Bigg\} < \dfrac{1}{\nu} \ln[(1- \varepsilon)^{-1}] \ .
                \end{align}
                for an $0 < \varepsilon < 1$, then for all $\eta > 0$
                \begin{align}
                    \liminf\limits_{N \to \infty} \mathds P \left(\dfrac{n_{N,\RP}^{1,\omega}}{N} \ge ( 1 - \sqrt{\varepsilon} - \varepsilon ) \dfrac{\rho_{0,V}}{\rho},\dfrac{n_{N,\RP}^{2,\omega}}{N} < \eta \right) \ge 1 - \sqrt{\varepsilon} - \varepsilon \ .
                \end{align}
        \end{cor}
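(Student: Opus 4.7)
The plan is to follow the proof scheme of Corollary~\ref{Corollary 1}, but to gain quantitatively sharper control on the probability by invoking the \emph{second} part of Theorem~\ref{Theorem 5.4} (rather than the first) at $j=2$. Concretely, the hypothesis on $a$, $b$, and $u$ assumed in Corollary~\ref{Corollary 2} is precisely the hypothesis of the second part of Theorem~\ref{Theorem 5.4} with the choice $c := \nu^{-1}\ln[(1-\varepsilon)^{-1}]$, so that $\e^{-\nu c} = 1-\varepsilon$. Theorem~\ref{Theorem 5.4} then yields, for every $0<\zeta_2<\zeta_1<1$, the bound $\liminf_{N\to\infty}\mathds P(\Omega_{N,\RP}^{2,\zeta_1,\zeta_2}) \ge 1-\varepsilon$.

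With this probability bound in hand, I would feed it into Proposition~\ref{Proposition 5.3} taken at $j=2$, so that the prefactor $1/(j-1)$ appearing there reduces to $1$. Choosing the free parameter ``$\varepsilon$'' of Proposition~\ref{Proposition 5.3} to be $\sqrt{\varepsilon}\,\rho_{0,V}$ turns its two conclusions into
\[
\liminf_{N\to\infty}\mathds P\Big(\tfrac{n_{N,\RP}^{1,\omega}}{L_N}\ge (1-\varepsilon-\sqrt{\varepsilon})\,\rho_{0,V}\Big) \ge 1-\sqrt{\varepsilon},
\]
and dividing through by $\rho=N/L_N$ converts this into the density bound on $n_{N,\RP}^{1,\omega}/N$ stated in the corollary.

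It remains to suppress $n_{N,\RP}^{2,\omega}/N$. Exactly as in the proof of Corollary~\ref{Corollary 1}, on the event $\Omega_{N,\RP}^{2,\zeta_1,\zeta_2}$ the gap $E_{N,\RP}^{2,\omega}-E_{N,\RP}^{1,\omega}\ge N^{-1+\zeta_1}$ together with $\mu_{N,\RP}^{\omega}<E_{N,\RP}^{1,\omega}$ and the elementary estimate $(\e^{x}-1)^{-1}\le x^{-1}$ for $x>0$ yields $n_{N,\RP}^{2,\omega}/N \le \beta^{-1}N^{-\zeta_1}$, which eventually falls below any preassigned $\eta>0$. Intersecting this event with the density event from Proposition~\ref{Proposition 5.3} and applying $\mathds P(A\cap B)\ge \mathds P(A)+\mathds P(B)-1$ with $\mathds P(A)\ge 1-\sqrt{\varepsilon}$ and $\mathds P(B)\ge 1-\varepsilon$ delivers the claimed joint lower bound $1-\sqrt{\varepsilon}-\varepsilon$. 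Since all the hard spectral and probabilistic input already resides in Theorem~\ref{Theorem 5.4} and Proposition~\ref{Proposition 5.3}, the only step requiring care is the bookkeeping of the small parameters so that the final constants match the statement exactly; I do not expect a genuine obstacle here.
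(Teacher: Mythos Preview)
Your proposal is correct and matches the paper's approach exactly: the paper's proof of Corollary~\ref{Corollary 2} simply says to proceed as in Corollary~\ref{Corollary 1} but with the second part of Theorem~\ref{Theorem 5.4} in place of the first, which is precisely what you do (with $c=\nu^{-1}\ln[(1-\varepsilon)^{-1}]$ so that $\e^{-\nu c}=1-\varepsilon$, then Proposition~\ref{Proposition 5.3} at $j=2$ with free parameter $\sqrt{\varepsilon}\,\rho_{0,V}$, followed by the same $n^{2}$-bound and intersection argument). Your bookkeeping of the constants is correct and in fact more explicit than the paper's own proof.
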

        \begin{proof}
         To prove this corollary, one proceeds very similarly as in the proof of Corollary~\ref{Corollary 1} but uses the second rather than the first part of Theorem~\ref{Theorem 5.4}.

        \end{proof}
	
        \begin{cor}[Type-I g-BEC in probability]\label{Corollary 3}
            Let $(V_N)_{N \in \mathds N}$ be a Poisson random potential with a strength that converges to infinity. Furthermore, let $\rho > \rho_{c,\RP_1}$. Then a type-I g-BEC where only the one-particle ground state is macroscopically occupied occurs in probability,
            \begin{equation*}
                \lim\limits_{N \to \infty} \mathds P \left(\left| \dfrac{n_{N,\RPN}^{1,\omega}}{N} - \dfrac{\rho_{0,\RP_1}}{\rho} \right| < \eta,\dfrac{n_{N,\RPN}^{2,\omega}}{N} < \eta' \right) =1
            \end{equation*}
            for all $\eta,\eta' > 0$.
        \end{cor}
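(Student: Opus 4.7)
The plan is to combine three ingredients: Theorem~\ref{Theorem 5.5} (energy gap), Proposition~\ref{Proposition 5.3} (lower bound on the ground-state occupation), and Lemma~\ref{Lemma 5.1} (matching upper bound). Fix any $0 < \zeta_2 < \zeta_1 < 1$ and write $\Omega_N$ for the event $\Omega_{N,\RPN}^{2,\zeta_1,\zeta_2}$. By Theorem~\ref{Theorem 5.5}, $\mathds P(\Omega_N) \to 1$. On $\Omega_N$, using $\mu_{N,\RPN}^{\omega} < E_{N,\RPN}^{1,\omega}$ together with the gap $E_{N,\RPN}^{2,\omega} - E_{N,\RPN}^{1,\omega} \ge N^{-1+\zeta_1}$ and the elementary estimate $(\e^{x}-1)^{-1} \le x^{-1}$, one finds
$$\frac{n_{N,\RPN}^{2,\omega}}{N} \le \frac{1}{N}\bigl(\e^{\beta N^{-1+\zeta_1}} - 1\bigr)^{-1} \le \frac{N^{-\zeta_1}}{\beta},$$
so $\mathds P(n_{N,\RPN}^{2,\omega}/N < \eta') \to 1$ for every $\eta' > 0$.

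Since $\liminf_N \mathds P(\Omega_N) = 1$, Proposition~\ref{Proposition 5.3} applied with $j=2$ immediately yields, for every $\varepsilon > 0$,
$$\lim_N \mathds P\Bigl(\frac{n_{N,\RPN}^{1,\omega}}{L_N} \ge \rho_{0,\RP_1} - \varepsilon \Bigr) = 1,$$
which supplies the desired lower bound. For the matching upper bound I invoke Lemma~\ref{Lemma 5.1}: almost surely,
$$\limsup_N \int_{(0,\varepsilon]} \mathcal B(E - \mu_{N,\RPN}^{\omega}) \, \mathcal N_{N,\RPN}^{\omega}(\mathrm{d} E) \le \rho - \int_{(\varepsilon,\infty)} \mathcal B(E) \, \mathcal N_{\infty,\RP_1}(\mathrm{d} E) + \frac{6}{\beta\varepsilon}\mathcal N_{\infty,\RP_1}^{\mathrm{I}}(4\varepsilon).$$
Combining Proposition~\ref{Proposition 4.1} with the almost sure limit $\lim_N \llargest = \infty$ gives $E_{N,\RPN}^{1,\omega} \to 0$ almost surely, so for any fixed $\varepsilon > 0$ eventually $E_{N,\RPN}^{1,\omega} \le \varepsilon$ and therefore
$$\frac{n_{N,\RPN}^{1,\omega}}{L_N} \le \int_{(0,\varepsilon]} \mathcal B(E - \mu_{N,\RPN}^{\omega}) \, \mathcal N_{N,\RPN}^{\omega}(\mathrm{d} E).$$
Sending $\varepsilon \searrow 0$ and using the Lifshitz tails property~\eqref{Lifshitz tail} to verify that the right-hand side tends to $\rho - \rho_{c,\RP_1} = \rho_{0,\RP_1}$, I conclude $\limsup_N n_{N,\RPN}^{1,\omega}/L_N \le \rho_{0,\RP_1}$ almost surely, whence $\mathds P(n_{N,\RPN}^{1,\omega}/L_N \le \rho_{0,\RP_1} + \varepsilon) \to 1$.

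Given $\eta, \eta' > 0$, I select $\varepsilon < \rho\eta$ and intersect the three high-probability events above via a union bound, producing the desired convergence in probability of $n_{N,\RPN}^{1,\omega}/N$ to $\rho_{0,\RP_1}/\rho$ together with the vanishing of $n_{N,\RPN}^{2,\omega}/N$. The main subtlety is the upper-bound step: one must verify that Lemma~\ref{Lemma 5.1} still applies in the setting of a strength that converges to infinity, and that the almost sure convergence $E_{N,\RPN}^{1,\omega} \to 0$ persists there, both of which rest on the fact that the trial-state construction in Proposition~\ref{Proposition 4.1} is supported in an atom-free subinterval of $\llargest$ and is therefore insensitive to how large the potential strength $\WWS_N$ becomes.
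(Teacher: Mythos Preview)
Your proof is correct and follows essentially the same route as the paper: both combine Theorem~\ref{Theorem 5.5}, Proposition~\ref{Proposition 5.3} (with $j=2$), and Lemma~\ref{Lemma 5.1} to pin down $n_{N,\RPN}^{1,\omega}/L_N$ near $\rho_{0,\RP_1}$, and use the energy gap on $\Omega_{N,\RPN}^{2,\zeta_1,\zeta_2}$ to force $n_{N,\RPN}^{2,\omega}/N \to 0$. The only cosmetic difference is that for the upper bound you invoke the first inequality of Lemma~\ref{Lemma 5.1} and then let $\varepsilon \searrow 0$, whereas the second inequality of that lemma (with $j=2$) already gives $\limsup_N n_{N,\RPN}^{1,\omega}/L_N \le \rho_{0,\RP_1}$ in one stroke, since $n_{N,\RPN}^{1,\omega}/L_N = \int_{(0,E_{N,\RPN}^{1,\omega}]} \mathcal B(E-\mu_{N,\RPN}^{\omega})\,\mathcal N_{N,\RPN}^{\omega}(\mathrm{d} E)$.
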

        \begin{proof}
            Let $0 < \zeta_2<\zeta_1<0$ be arbitrarily given. With Lemma~\ref{Lemma 5.1} and Theorems~\ref{Proposition 5.3} and~\ref{Theorem 5.5}, we conclude
            \begin{align*}
                \lim\limits_{N \to \infty} \mathds P \left(\left| \dfrac{n_{N,\RPN}^{1,\omega}}{L_N} - \rho_{0,\RP_1} \right| < \eta \right) = 1
            \end{align*}
            for all $\eta > 0$.
            Consequently,
            \begin{align*}
                & \lim\limits_{N \to \infty} \mathds P \left( \left| \dfrac{n_{N,\RPN}^{1,\omega}}{L_N} - \rho_{0,\RP_1} \right| < \eta , \dfrac{n_{N,\RPN}^{2,\omega}}{N} < \eta' \right) \\
                & \quad \ge \, \lim\limits_{N \to \infty} \mathds P \left( \left| \dfrac{n_{N,\RPN}^{1,\omega}}{L_N} - \rho_{0,\RP_1} \right| < \eta \right) + \lim\limits_{N \to \infty} \mathds P \left( \dfrac{n_{N,\RPN}^{2,\omega}}{N} < \eta' \right) - 1 = 1
            \end{align*}
            for all $\eta, \eta' > 0$.
        \end{proof}
        
        Under the same assumptions as in Corollary~\ref{Corollary 3}, a type-I g-BEC also occurs in the $r$th mean, $r \ge 1$, where only the ground state is macroscopically occupied, that is, one has
        \begin{align}
            \lim\limits_{N \to \infty} \E \, \left| \dfrac{n_{N,\RPN}^{1,\omega}}{N} - \dfrac{\rho_{0,\RP_1}}{\rho} \right|^r = 0 \ \text{ and } \ \lim\limits_{N \to \infty} \E \left| \dfrac{n_{N,\RPN}^{j,\omega}}{N} \right|^r = 0 
        \end{align}
        for all $2 \le j \in \mathds N$, see also \cite[Corollary 2.11]{KPSConditionTypeOneBEC2020}. This follows with Corollary~\ref{Corollary 3} and standard results from measure theory. In particular, note that $|n_{N,\RPN}^{1,\omega}/{N}- \rho_{0,\RP_1}/\rho| \le 1$ as well as $|n_{N,\RPN}^{j,\omega}/N|\le 1$ for all $2 \le j \in \mathds N$, for $\mathds P$-almost all $\omega \in \Omega$, and for all $N \in \mathds N$.
        
        Lastly, we show that in all of the cases of Corollaries~\ref{Corollary 1} - \ref{Corollary 3} the one-particle ground state is $\mathds P$-almost surely macroscopically occupied, in the sense of Definition~\ref{Definition makroskopische Besetzung}.
	
        \begin{cor}[Almost sure macroscopic occupation of the ground state]\label{Corollary 5.9}
            Let $(\RP_N)_{N \in \mathds N}$ be a Poisson random potential of fixed strength or with a strength that converges to infinity, and $\rho > \rho_{c,\RP_1}$. Then
            \begin{align}
                \mathds P \Big( \limsup\limits_{N \to \infty} \dfrac{n_{N,\RPN}^{1,\omega}}{N}  > 0 \Big) = 1 \ .
            \end{align}
        \end{cor}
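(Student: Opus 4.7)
The plan is to derive the $\mathds P$-almost sure positivity of $\limsup_{N \to \infty} n_{N,\RPN}^{1,\omega}/N$ from the probabilistic lower bounds already established in Corollary~\ref{Corollary 1} and Corollary~\ref{Corollary 3}, via the reverse Fatou inequality for events: for any sequence $(A_N)_{N \in \mathds N}$ of measurable subsets of $\Omega$, $\mathds P(\limsup_{N \to \infty} A_N) \ge \limsup_{N \to \infty} \mathds P(A_N)$. The natural choice is to apply this to the events $A_N^{(c)} := \{\omega \in \widetilde \Omega : n_{N,\RPN}^{1,\omega}/N \ge c\}$ for suitable positive constants $c$.

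For the fixed-strength case, I would fix an arbitrary $0 < \varepsilon < 1$ and apply Corollary~\ref{Corollary 1} to obtain a $j \in \mathds N$ (depending on $\varepsilon$) such that the constant $c_\varepsilon := (1-\varepsilon) \rho_{0,V}/[(j-1) \rho]$ is strictly positive and $\liminf_{N \to \infty} \mathds P(A_N^{(c_\varepsilon)}) \ge 1 - \varepsilon$; this follows simply by dropping the additional condition $n_{N,V}^{j,\omega}/N < \eta$ from the joint probability stated in that corollary. The reverse Fatou inequality then yields
\begin{align*}
    \mathds P\Big( \limsup_{N \to \infty} \frac{n_{N,V}^{1,\omega}}{N} > 0 \Big) \ge \mathds P\Big( \limsup_{N \to \infty} A_N^{(c_\varepsilon)} \Big) \ge \limsup_{N \to \infty} \mathds P(A_N^{(c_\varepsilon)}) \ge 1 - \varepsilon.
\end{align*}
Since $\varepsilon > 0$ is arbitrary, the left-hand side equals one.

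For the strength-to-infinity case, Corollary~\ref{Corollary 3} already guarantees that $n_{N,\RPN}^{1,\omega}/N$ converges in probability to the strictly positive constant $\rho_{0,V_1}/\rho$. Invoking the standard fact that convergence in probability implies $\mathds P$-almost sure convergence along some subsequence $(N_k)_{k \in \mathds N}$, we obtain $\lim_{k \to \infty} n_{N_k,\RP_{N_k}}^{1,\omega}/N_k = \rho_{0,V_1}/\rho$ for $\mathds P$-almost every $\omega$, and hence $\limsup_{N \to \infty} n_{N,\RPN}^{1,\omega}/N \ge \rho_{0,V_1}/\rho > 0$ $\mathds P$-almost surely.

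I do not anticipate a substantial obstacle; both cases are short consequences of results already proved in this section, with the reverse Fatou inequality being the only nontrivial measure-theoretic ingredient. One conceptual point worth noting is that in the fixed-strength case this argument does not upgrade to an almost sure positive $\liminf$, because the explicit constant $c_\varepsilon$ produced by Corollary~\ref{Corollary 1} deteriorates as $\varepsilon \searrow 0$ (the associated $j$ may blow up), so no uniform positive lower bound on $n_{N,V}^{1,\omega}/N$ is provided; the statement of Corollary~\ref{Corollary 5.9} only claims a positive $\limsup$, however, and for this Fatou suffices.
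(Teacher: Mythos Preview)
Your proof is correct and follows essentially the same approach as the paper: both arguments rest on the reverse Fatou inequality $\mathds P(\limsup_N A_N) \ge \limsup_N \mathds P(A_N)$ applied to the events $\{n_{N,\RPN}^{1,\omega}/N \ge c\}$, combined with the probabilistic lower bounds already established. The paper phrases the argument by contradiction and treats both cases uniformly via Proposition~\ref{Proposition 5.3} together with Theorems~\ref{Theorem 5.4} and~\ref{Theorem 5.5}, whereas you argue directly, cite the downstream Corollaries~\ref{Corollary 1} and~\ref{Corollary 3}, and for the diverging-strength case replace reverse Fatou with a subsequence argument from convergence in probability; these are stylistic differences only.
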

        \begin{proof}
            We follow the proof idea of \cite[Theorem 3.5]{KPS182}. Firstly, if there is an $\varepsilon > 0$ such that $\mathds P(\lim_{N \to \infty} n_{N,\RPN}^{1,\omega} / N = 0) > \sqrt{\varepsilon}$, then there exists an $\varepsilon > 0$ such that for any $2 \le j \in \mathds N$,
            \begin{align*}
            \begin{split}
                & \limsup\limits_{N \to \infty} \mathds{P} \Big( \dfrac{n_{N,\RPN}^{1,\omega}}{N} \ge \dfrac{1}{j-1} (1 - \sqrt{\varepsilon} - \varepsilon) \rho_{0,V_1} \Big) \\
                & \quad \le \mathds P \Big( \limsup\limits_{N \to \infty} \dfrac{n_{N,\RPN}^{1,\omega}}{N} \ge \dfrac{1}{j-1} (1 - \sqrt{\varepsilon} - \varepsilon) \rho_{0,V_1} \Big) \\
                & \quad \le \mathds P \Big( \limsup\limits_{N \to \infty} \dfrac{n_{N,\RPN}^{1,\omega}}{N} > 0 \Big)\\
                & \quad < 1 - \sqrt{\varepsilon} \ .
            \end{split}
            \end{align*}
            Secondly, by Proposition~\ref{Proposition 5.3} as well as Theorem~\ref{Theorem 5.4} and Theorem~\ref{Theorem 5.5}, respectively, the following holds: For all $\varepsilon > 0$ there is a $2 \le j \in \mathds N$ such that
            \begin{align*}
                & \limsup\limits_{N \to \infty} \mathds P \Big( \dfrac{n_{N,\RPN}^{1,\omega}}{N} \ge \dfrac{1}{j-1} (1 - \sqrt{\varepsilon} - \varepsilon) \rho_{0,V_1} \Big) \ge 1 - \sqrt{\varepsilon} \ .
            \end{align*}
        \end{proof}

    \appendix
    %
    \section{Generalized Bose--Einstein condensation}\label{secAppendix}

        In this appendix, we prove Theorem~\ref{Theorem g-BEC}, that is, the $\mathds P$-almost sure occurrence of g-BEC in the case of a Poisson random potential with a strength that converges to infinity, see Theorem~\ref{Theorem A.7}. For this, we firstly state several lemmata (Lemmata~\ref{Lemma A.1} - \ref{Lemma A.6}). For the proof of Lemma~\ref{Lemma A.1}, we refer to \cite[Corollary 3.5]{lenoble2004bose} in combination with \cite[Lemma A.5]{KPS182}.
        We prove Lemmata~\ref{Lemma A.2} - \ref{Lemma A.6}, since these statements are more general than the ones that can be found in \cite{lenoble2004bose} and \cite{KPS182}. Also, note that we use these Lemmata in our proofs in Section~\ref{secMainResults}.

        Recall Remark~\ref{Remark typical omega} and the fact that the chemical potential $\mu_{N,\RPN}^{\omega} \in (-\infty, E_{N,\RPN}^{1,\omega})$ is for $\mathds P$-almost all $\omega \in \Omega$ and for all $N \in \mathds N$ uniquely determined by the equation \eqref{Gleichung Bedingung mu aequivalent}.

        \begin{lemma} \label{Lemma A.1}
            Let $V$ be a Poisson random potential of fixed strength. If $\mu < 0$, then $\mathds P$-almost surely
            \begin{align}
                \lim\limits_{N \to \infty} \int\limits_{(0,\infty)} \mathcal B(E - \mu)  \, \mathcal N_{N,V}^{\omega}(\mathrm{d} E) = \int\limits_{(0,\infty)} \mathcal B(E - \mu)  \, \mathcal N_{\infty,V}(\mathrm{d} E) \ .
            \end{align}
        \end{lemma}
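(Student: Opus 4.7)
The plan is to split the integral at a cutoff $M>0$ into a bulk piece on $(0,M]$ and a tail piece on $(M,\infty)$, handle each separately, and then let $M\to\infty$. Since $\mu<0$, the integrand $E\mapsto \mathcal B(E-\mu)=(\e^{\beta(E-\mu)}-1)^{-1}$ is continuous and bounded on $[0,\infty)$ with supremum $\mathcal B(-\mu)<\infty$ and decays exponentially at infinity, so no singularity at $E=0$ has to be dealt with.

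For the bulk piece, I would take a continuous cutoff $\phi_M:\mathds R\to[0,1]$ with compact support in $[-1,M+1]$ that equals $1$ on $[0,M]$. Then $\phi_M\cdot\mathcal B(\cdot-\mu)\in C_c(\mathds R)$, and the $\mathds P$-almost sure vague convergence of $\mathcal N_{N,V}^{\omega}$ to $\mathcal N_{\infty,V}$ gives, $\mathds P$-almost surely,
$$\lim_{N\to\infty}\int\phi_M(E)\,\mathcal B(E-\mu)\,\mathcal N_{N,V}^{\omega}(\ud E)=\int\phi_M(E)\,\mathcal B(E-\mu)\,\mathcal N_{\infty,V}(\ud E).$$

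For the tail piece, the crucial input is the uniform a priori bound $\mathcal N_{N,V}^{\mathrm I,\omega}(E)\le \pi^{-1}E^{1/2}$, which holds $\mathds P$-almost surely for every $N\in\mathds N$ and every $E\ge 0$ because $V\ge 0$ implies eigenvalue domination by the free Dirichlet Laplacian on $\Lambda_N$; this is the bound invoked in the proof of Proposition~\ref{Proposition 5.2}. Integrating by parts for Lebesgue--Stieltjes integrals (cf.~\cite[Theorem~21.67]{hewitt1965real}) and noting that $\mathcal B(K-\mu)\,\mathcal N_{N,V}^{\mathrm I,\omega}(K)\to 0$ as $K\to\infty$ by exponential decay against polynomial growth, one obtains
$$\int_{(M,\infty)}\mathcal B(E-\mu)\,\mathcal N_{N,V}^{\omega}(\ud E)\le \int_M^\infty \pi^{-1}E^{1/2}\bigl(-\mathcal B'(E-\mu)\bigr)\ud E,$$
whose right-hand side is independent of $N$ and tends to $0$ as $M\to\infty$. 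The analogous tail estimate for $\mathcal N_{\infty,V}$ follows either by the same integration by parts or, more simply, from $\mathcal B(E-\mu)\le \mathcal B(E)$ together with the finiteness of $\rho_{c,V}=\int\mathcal B(E)\,\mathcal N_{\infty,V}(\ud E)$.

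Combining the two ingredients: given $\varepsilon>0$, choose $M$ so that both tail contributions are below $\varepsilon$, apply the bulk convergence for this fixed $M$, and let $\varepsilon\to 0$. The main obstacle is the uniform polynomial bound on $\mathcal N_{N,V}^{\mathrm I,\omega}$ used to control the tail; once it is secured, the remaining steps are standard cutoff-plus-vague-convergence manipulations.
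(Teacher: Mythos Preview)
Your argument is correct. The paper does not actually prove this lemma itself but defers to \cite[Corollary~3.5]{lenoble2004bose} together with \cite[Lemma~A.5]{KPS182}. Your direct cutoff-plus-tail argument is the natural self-contained route and uses precisely the ingredients the paper already has on hand and employs in the neighbouring Lemmas~\ref{Lemma A.2} and~\ref{Lemma A.3}: the $\mathds P$-almost sure vague convergence $\mathcal N_{N,V}^{\omega}\to\mathcal N_{\infty,V}$ for the bulk, the uniform Weyl-type bound $\mathcal N_{N,V}^{\mathrm I,\omega}(E)\le\pi^{-1}E^{1/2}$ together with Lebesgue--Stieltjes integration by parts for the tail, and (for the tail of the limiting measure) the finiteness of $\rho_{c,V}$ via $\mathcal B(E-\mu)\le\mathcal B(E)$. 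One minor remark: you do not strictly need a separate tail estimate for $\mathcal N_{\infty,V}$, since once the uniform tail bound for $\mathcal N_{N,V}^{\omega}$ is in place, monotone convergence of $\int\phi_M\,\mathcal B(\cdot-\mu)\,\ud\mathcal N_{\infty,V}$ to the full integral as $M\to\infty$ suffices to close the sandwich.
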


        For the proofs of the next two lemmata, we define 
        \begin{align} \label{definition cutofffunction eins}
            \cutofffunctioneins{\mathcal E_1}{\mathcal E_2} : \mathds R \to [0,1], \ E \mapsto \cutofffunctioneins{\mathcal E_1}{\mathcal E_2}(E):=
            \begin{cases}
                0 \quad & \text{ if } E \le \mathcal E_1/2 \\
                \dfrac{E - \mathcal E_1/2}{\mathcal E_1/2} & \text{ if } \mathcal E_1/2 < E <\mathcal E_1 \\
                1 \quad & \text{ if } \mathcal E_1 \le E \le \mathcal E_2 \\
                1 - (E - \mathcal E_2) \quad & \text{ if } \mathcal E_2 < E < \mathcal E_2 + 1\\
                0 \quad & \text{ if } E \ge \mathcal E_2 + 1
            \end{cases}
        \end{align}
        where $\mathcal E_1, \mathcal E_2 > 0$ are two constants with $0<\mathcal E_1<\mathcal E_2$. The reason for the introduction of this continuous cutoff function is the $\mathds P$-almost sure convergence of the density of states $(\mathcal N_{N,V}^{\omega})_{N \in \mathds N}$ in the vague rather than weak sense.

        \begin{lemma} \label{Lemma A.2}
            Let $V$ be a Poisson random potential of fixed strength. Furthermore, let $\varepsilon>0$ and $\mathcal E_2 > \varepsilon$ be given, and $\cutofffunctioneins{\varepsilon}{\mathcal E_2}$ as in \eqref{definition cutofffunction eins}. Then for $\mathds P$-almost all $\omega \in \Omega$ we have: If $(\mugeneral_N^{\omega})_{N \in \mathds N} \subset \mathds R$ is a sequence with $\lim_{N \to \infty} \mugeneral_N^{\omega} = 0$, then
            \begin{align} \label{lim int NN int N infty}
                & \lim\limits_{N \to \infty} \int\limits_{\mathds R} \cutofffunctioneins{\varepsilon}{\mathcal E_2}(E) \, \mathcal B(E - \mugeneral_N^{\omega})  \, \mathcal N_{N,V}^{\omega} (\mathrm{d} E) = \int\limits_{\mathds R} \cutofffunctioneins{\varepsilon}{\mathcal E_2}(E) \, \mathcal B(E)  \, \mathcal N_{\infty,V} (\mathrm{d} E) \ .
            \end{align}
        \end{lemma}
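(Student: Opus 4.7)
The plan is to combine the $\mathds P$-almost sure vague convergence $\mathcal N_{N,V}^{\omega} \to \mathcal N_{\infty,V}$ with uniform convergence of the $\mugeneral_N^{\omega}$-shifted integrand on the compact set where the cutoff is nonzero. Throughout, fix an $\omega \in \widetilde\Omega$ for which this vague convergence holds and $\mugeneral_N^{\omega} \to 0$. Since $\cutofffunctioneins{\varepsilon}{\mathcal E_2}$ vanishes outside $[\varepsilon/2,\mathcal E_2+1]$, every integral in sight reduces to one over this compact set. For all sufficiently large $N$ we have $|\mugeneral_N^{\omega}| \le \varepsilon/4$, so $E - \mugeneral_N^{\omega} \in [\varepsilon/4,\mathcal E_2+5/4]$ whenever $E$ lies in the support of $\cutofffunctioneins{\varepsilon}{\mathcal E_2}$. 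Because $\mathcal B$ is uniformly continuous on this set, this gives
\begin{equation*}
\lim_{N \to \infty}\sup_{E \in [\varepsilon/2,\mathcal E_2+1]} \bigl|\mathcal B(E - \mugeneral_N^{\omega}) - \mathcal B(E)\bigr| = 0.
\end{equation*}

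Next, I would split the difference of the two sides of \eqref{lim int NN int N infty} as $A_N + B_N$, where
\begin{align*}
A_N & := \int_{\mathds R} \cutofffunctioneins{\varepsilon}{\mathcal E_2}(E)\bigl[\mathcal B(E - \mugeneral_N^{\omega}) - \mathcal B(E)\bigr]\,\mathcal N_{N,V}^{\omega}(\mathrm d E), \\
B_N & := \int_{\mathds R} \cutofffunctioneins{\varepsilon}{\mathcal E_2}(E)\mathcal B(E)\,\mathcal N_{N,V}^{\omega}(\mathrm d E) - \int_{\mathds R} \cutofffunctioneins{\varepsilon}{\mathcal E_2}(E)\mathcal B(E)\,\mathcal N_{\infty,V}(\mathrm d E).
\end{align*}
The product $E \mapsto \cutofffunctioneins{\varepsilon}{\mathcal E_2}(E)\mathcal B(E)$ is continuous on $\mathds R$ (extended by zero; the cutoff neutralizes the singularity of $\mathcal B$ at $0$) and compactly supported, so the vague convergence of $\mathcal N_{N,V}^{\omega}$ immediately yields $B_N \to 0$. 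For $A_N$, the crude bound
\begin{equation*}
|A_N| \le \sup_{E \in [\varepsilon/2,\mathcal E_2+1]} \bigl|\mathcal B(E - \mugeneral_N^{\omega}) - \mathcal B(E)\bigr| \cdot \int_{\mathds R} \cutofffunctioneins{\varepsilon}{\mathcal E_2}(E)\,\mathcal N_{N,V}^{\omega}(\mathrm d E)
\end{equation*}
reduces the task to noting that the first factor vanishes by the display above, while the second factor converges, again by vague convergence applied to the continuous compactly supported function $\cutofffunctioneins{\varepsilon}{\mathcal E_2}$, to a finite limit and is therefore bounded in $N$.

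The main subtlety — and really the only non-routine point — is ensuring that the random shift by $\mugeneral_N^{\omega}$ never drives the argument of $\mathcal B$ across $0$ where the singularity lives. This is handled by the specific shape of $\cutofffunctioneins{\varepsilon}{\mathcal E_2}$, which keeps $E$ bounded below by $\varepsilon/2$; combined with $\mugeneral_N^{\omega} \to 0$, the shifted argument $E - \mugeneral_N^{\omega}$ is eventually bounded below by $\varepsilon/4 > 0$, uniformly in $E$ on the support of the cutoff. Once this has been arranged, the decomposition cleanly separates a ``uniform convergence of the integrand'' piece from a ``vague convergence of the measure'' piece, and both are routine.
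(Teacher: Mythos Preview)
Your proof is correct and follows essentially the same decomposition as the paper: split into the ``shift'' term $A_N$ and the ``vague convergence'' term $B_N$, handle $B_N$ by vague convergence applied to the continuous compactly supported function $\cutofffunctioneins{\varepsilon}{\mathcal E_2}\,\mathcal B$, and handle $A_N$ by exploiting that the cutoff keeps $E$ bounded away from the singularity of $\mathcal B$. The only difference is cosmetic: for $A_N$ the paper uses the algebraic identity $\mathcal B(E-\mugeneral_N^{\omega})-\mathcal B(E)=(\e^{\beta E}-\e^{\beta(E-\mugeneral_N^{\omega})})\mathcal B(E)\mathcal B(E-\mugeneral_N^{\omega})$ to extract the explicit factor $|1-\e^{-\beta\mugeneral_N^{\omega}}|$, whereas you invoke uniform continuity of $\mathcal B$ on a compact subset of $(0,\infty)$; both reach the same conclusion.
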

        \begin{proof}
            For $\mathds P$-almost all $\omega \in \Omega$ and all $N \in \mathds N$ we have
            \begin{align*}
                & \left|\, \int\limits_{\mathds R} \cutofffunctioneins{\varepsilon}{\mathcal E_2}(E) \, \mathcal B(E - \mugeneral_N^{\omega})  \, \mathcal N_{N,V}^{\omega} (\mathrm{d} E) - \int\limits_{\mathds R} \cutofffunctioneins{\varepsilon}{\mathcal E_2}(E) \, \mathcal B(E)  \, \mathcal N_{\infty,V} (\mathrm{d} E) \, \right| \\
                &\quad \le \, \left| \, \int\limits_{\mathds R} \cutofffunctioneins{\varepsilon}{\mathcal E_2}(E) \big[ \mathcal B(E - \mugeneral_N^{\omega}) - \mathcal B(E) \big] \, \mathcal N_{N,V}^{\omega}(\mathrm{d} E) \, \right| \\
                & \qquad + \, \left|\, \int\limits_{\mathds R} \cutofffunctioneins{\varepsilon}{\mathcal E_2}(E) \mathcal B(E) \, \mathcal N_{N,\RP}^{\omega}(\mathrm{d} E) - \int\limits_{\mathds R} \cutofffunctioneins{\varepsilon}{\mathcal E_2}(E) \mathcal B(E) \, \mathcal N_{\infty,V}(\mathrm{d} E) \, \right| \ .
            \end{align*}
            The last term $\mathds P$-almost surely converges to zero in the limit $N \to \infty$, because $\mathcal N_{N,V}^{\omega}$ $\mathds P$-almost surely convergences in the vague sense to $\mathcal N_{\infty,V}$ and $\cutofffunctioneins{\varepsilon}{\mathcal E_2} \, \mathcal B$ is a continuous, compactly supported function. In addition, using that
            $$\mathcal B(E - \mugeneral_N^{\omega}) - \mathcal B(E) = (\e^{\beta E} - \e^{\beta(E - \mugeneral_N^{\omega})}) \mathcal B(E ) \mathcal B(E - \mugeneral_N^{\omega} )$$
            and
                $$\limsup\limits_{N \to \infty} \int\limits_{\mathds R} \cutofffunctioneins{\varepsilon}{\mathcal E_2}(E) \mathcal B(E - \mugeneral_N^{\omega} ) \, \mathcal N_{N,V}^{\omega}(\mathrm{d} E) \le \mathcal B(\varepsilon/4) \mathcal N_{\infty,V}^{\mathrm{I}}(2 \mathcal E_2 + 2)$$
            for $\mathds P$-almost all $\omega \in \Omega$, see also \eqref{upper bound NNI}, we conclude
            \begin{align*}
                & \lim\limits_{N \to \infty} \left| \, \int\limits_{\mathds R} \cutofffunctioneins{\varepsilon}{\mathcal E_2}(E) \big[ \mathcal B(E - \mugeneral_N^{\omega}) - \mathcal B(E) \big]  \, \mathcal N_{N,V}^{\omega}(\mathrm{d} E) \, \right| \\
                & \quad \le \lim\limits_{N \to \infty} \e^{\beta (\mathcal E_2 + 1) } \mathcal B(\varepsilon/2) \mathcal B(\varepsilon/4) \mathcal N_{\infty,V}^{\mathrm{I}}(2 \mathcal E_2 + 2) \, \big| 1 - \e^{-\beta \mugeneral_N^{\omega}} \big|\ ,
            \end{align*}
            which $\mathds P$-almost surely converges to zero as well.
        \end{proof}

        \begin{lemma} \label{Lemma A.3}
            Suppose $V$ is a Poisson random potential of fixed strength. Let an $\varepsilon > 0$ and an $\omega \in \widetilde \Omega$ be arbitrarily given, and let $(\mugeneral_N^{\omega})_{N \in \mathds N} \subset \mathds R$ be a sequence that converges to zero. Then
            \begin{align}
                \limsup\limits_{N \to \infty} \int\limits\limits_{(\varepsilon,\infty)} \mathcal B(E - \mugeneral_N^{\omega})  \, \mathcal N_{N,V}^{\omega} (\mathrm{d} E) & \le \int\limits\limits_{(\varepsilon,\infty)} \mathcal B(E) \, \mathcal N_{\infty,V}(\mathrm{d} E) + \dfrac{2}{\beta \varepsilon} \mathcal N_{\infty,V}^{\mathrm{I}}(\varepsilon) \label{Lemma 2 Gen BEC HG 1} \\
                \shortintertext{and}
                \liminf\limits_{N \to \infty} \int\limits\limits_{(\varepsilon,\infty)} \mathcal B(E - \mugeneral_N^{\omega})  \, \mathcal N_{N,V}^{\omega} (\mathrm{d} E) & \ge \int\limits\limits_{(\varepsilon,\infty)} \mathcal B(E) \, \mathcal N_{\infty,V}(\mathrm{d} E) - \dfrac{4}{\beta \varepsilon} \mathcal N_{\infty,V}^{\mathrm{I}}(2\varepsilon) \ . \label{Lemma 2 Gen BEC HG 2}
            \end{align}
        \end{lemma}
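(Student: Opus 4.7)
The plan is to approximate the indicator $\mathds 1_{(\varepsilon,\infty)}$ by the continuous cutoff $\cutofffunctioneins{\varepsilon}{\mathcal E_2}$ defined in~\eqref{definition cutofffunction eins}, so that Lemma~\ref{Lemma A.2} can be used to replace $\mathcal N_{N,V}^{\omega}$ with $\mathcal N_{\infty,V}$, and then bound the residual contributions on the two transition windows $(\varepsilon/2,\varepsilon)$ and $(\mathcal E_2,\infty)$ by elementary estimates on $\mathcal B$ and $\mathcal N_{N,V}^{\mathrm I,\omega}$. The key elementary inputs are $\mathcal B(E)\le 1/(\beta E)$ for $E>0$ (from $\mathrm e^{x}-1\ge x$) together with the comparison $\limsup_N \mathcal N_{N,V}^{\mathrm I,\omega}(\tilde\varepsilon) \le \mathcal N_{\infty,V}^{\mathrm I}(2\tilde\varepsilon)$ obtained from the vague convergence of $\mathcal N_{N,V}^\omega$ via a continuous dominating function, exactly as in~\eqref{upper bound NNI}.

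For the upper bound~\eqref{Lemma 2 Gen BEC HG 1}, I start from the pointwise inequality $\mathds 1_{(\varepsilon,\infty)}(E) \le \cutofffunctioneins{\varepsilon}{\mathcal E_2}(E) + \mathds 1_{(\mathcal E_2,\infty)}(E)$. Integrating $\mathcal B(E-\mugeneral_N^\omega)$ against $\mathcal N_{N,V}^\omega$, taking $\limsup_{N\to\infty}$, and applying Lemma~\ref{Lemma A.2} to the cutoff piece yields $\int_{\mathds R} \cutofffunctioneins{\varepsilon}{\mathcal E_2}(E) \mathcal B(E) \, \mathcal N_{\infty,V}(\mathrm d E)$. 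I split this at $E=\varepsilon$: on $(\varepsilon/2,\varepsilon]$ I bound $\mathcal B(E)\le \mathcal B(\varepsilon/2) \le 2/(\beta\varepsilon)$ to produce $\tfrac{2}{\beta\varepsilon}\mathcal N_{\infty,V}^{\mathrm I}(\varepsilon)$, and on $[\varepsilon,\infty)$ I drop the cutoff to recover $\int_{(\varepsilon,\infty)} \mathcal B \, \mathrm d\mathcal N_{\infty,V}$. The tail term $\int_{(\mathcal E_2,\infty)} \mathcal B(E-\mugeneral_N^\omega) \, \mathrm d\mathcal N_{N,V}^\omega$ I control by fixing an auxiliary $\mu<0$ and observing that for $N$ large (so $|\mugeneral_N^\omega|$ is small) and $E$ large the exponential form of $\mathcal B$ gives $\mathcal B(E-\mugeneral_N^\omega) \le C\,\mathcal B(E-\mu)$ for a constant $C=C(\mu,\beta)$; Lemma~\ref{Lemma A.1} applied to the test function $\mathcal B(\cdot-\mu)$ then forces this tail to vanish as $\mathcal E_2\to\infty$.

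For the lower bound~\eqref{Lemma 2 Gen BEC HG 2}, I use the reverse inequality $\cutofffunctioneins{\varepsilon}{\mathcal E_2}(E) \le \mathds 1_{[\varepsilon,\infty)}(E) + \mathds 1_{(\varepsilon/2,\varepsilon)}(E)$, which after integration gives
\begin{align*}
\int\limits_{(\varepsilon,\infty)}\!\! \mathcal B(E - \mugeneral_N^\omega) \, \mathcal N_{N,V}^\omega(\mathrm d E) \ge \int\limits_{\mathds R}\!\! \cutofffunctioneins{\varepsilon}{\mathcal E_2}(E) \mathcal B(E - \mugeneral_N^\omega) \, \mathcal N_{N,V}^\omega(\mathrm d E) - \int\limits_{(\varepsilon/2,\varepsilon)}\!\! \mathcal B(E - \mugeneral_N^\omega) \, \mathcal N_{N,V}^\omega(\mathrm d E).
\end{align*}
Taking $\liminf_{N\to\infty}$ and invoking Lemma~\ref{Lemma A.2} again, the first term on the right converges to $\int \cutofffunctioneins{\varepsilon}{\mathcal E_2} \mathcal B \, \mathrm d\mathcal N_{\infty,V} \ge \int_{[\varepsilon,\mathcal E_2]} \mathcal B \, \mathrm d\mathcal N_{\infty,V}$, which tends to $\int_{(\varepsilon,\infty)} \mathcal B \, \mathrm d\mathcal N_{\infty,V}$ as $\mathcal E_2\to\infty$. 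For the correction, since $\mugeneral_N^\omega \to 0$, for $N$ large I have $E-\mugeneral_N^\omega \ge \varepsilon/4$ on $(\varepsilon/2,\varepsilon)$ so that $\mathcal B(E-\mugeneral_N^\omega) \le \mathcal B(\varepsilon/4) \le 4/(\beta\varepsilon)$; combined with $\limsup_N \mathcal N_{N,V}^{\mathrm I,\omega}(\varepsilon) \le \mathcal N_{\infty,V}^{\mathrm I}(2\varepsilon)$ (same continuous-dominating-function argument as in~\eqref{upper bound NNI}), this produces exactly the error $\tfrac{4}{\beta\varepsilon}\mathcal N_{\infty,V}^{\mathrm I}(2\varepsilon)$ claimed.

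The main subtlety lies in the tail control as $\mathcal E_2\to\infty$ in the upper bound: because $\mugeneral_N^\omega$ is only assumed to tend to zero rather than to stay in a fixed compact subset of $(-\infty,0)$, Lemma~\ref{Lemma A.1} cannot be invoked directly for the moving test function $\mathcal B(\cdot-\mugeneral_N^\omega)$. The remedy, as outlined, is the pointwise exponential domination by $C\,\mathcal B(\cdot-\mu)$ for an arbitrary fixed $\mu<0$, with the ratio behaving like $\mathrm e^{\beta(\mugeneral_N^\omega-\mu)}$ and hence uniformly bounded once $|\mugeneral_N^\omega|$ is small; everything else in the argument is a routine splitting and application of Lemma~\ref{Lemma A.2}.
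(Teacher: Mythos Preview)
Your decomposition through the cutoff $\cutofffunctioneins{\varepsilon}{\mathcal E_2}$ and the application of Lemma~\ref{Lemma A.2} match the paper's proof exactly, and your lower bound argument is identical to the paper's. The only divergence is in the tail control for the upper bound. The paper handles $\int_{(\mathcal E_2,\infty)}\mathcal B(E-\mugeneral_N^\omega)\,\mathcal N_{N,V}^\omega(\mathrm dE)$ directly: it uses monotonicity to replace $\mugeneral_N^\omega$ by $\varepsilon/2$, integrates by parts, and invokes the uniform a~priori bound $\mathcal N_{N,V}^{\mathrm I,\omega}(E)\le\pi^{-1}E^{1/2}$ (valid for all $N$) to obtain the $N$-independent majorant $\beta\pi^{-1}\int_{\mathcal E_2}^\infty E^{1/2}\bigl[\mathcal B(E-\varepsilon/2)\bigr]^2\mathrm e^{\beta(E-\varepsilon/2)}\,\mathrm dE$, which visibly tends to zero as $\mathcal E_2\to\infty$.

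Your route via the pointwise domination $\mathcal B(E-\mugeneral_N^\omega)\le C\,\mathcal B(E-\mu)$ with a fixed $\mu<0$ is also correct, but the final sentence ``Lemma~\ref{Lemma A.1} \ldots\ forces this tail to vanish'' is not quite self-contained: Lemma~\ref{Lemma A.1} gives convergence only of the \emph{full} integral $\int_{(0,\infty)}$, and to deduce $\lim_{\mathcal E_2\to\infty}\limsup_N\int_{(\mathcal E_2,\infty)}\mathcal B(E-\mu)\,\mathcal N_{N,V}^\omega(\mathrm dE)=0$ you must pair it with a lower bound of the form
\[
\liminf_{N\to\infty}\int_{(0,\mathcal E_2]}\mathcal B(E-\mu)\,\mathcal N_{N,V}^\omega(\mathrm dE)\;\ge\;\int_{(0,\mathcal E_2)}\mathcal B(E-\mu)\,\mathcal N_{\infty,V}(\mathrm dE),
\]
obtained from vague convergence via a continuous minorant supported in $(0,\mathcal E_2]$; the right-hand side then increases to the full limiting integral as $\mathcal E_2\to\infty$. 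That is an easy extra line, after which your argument is complete; the paper's use of the uniform IDS bound $\pi^{-1}E^{1/2}$ simply bypasses this detour.
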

        \begin{proof}
  
            We begin by showing \eqref{Lemma 2 Gen BEC HG 1}. Let $E_2 > \varepsilon$ be arbitrarily given. Then
            \begin{align*}
                & \limsup\limits_{N \to \infty} \int\limits\limits_{(\varepsilon,\infty)} \mathcal B(E - \mugeneral_N^{\omega})  \, \mathcal N_{N,V}^{\omega} (\mathrm{d} E) \\
                & \quad \le \limsup\limits_{N \to \infty} \int\limits_{(\varepsilon,E_2]} \mathcal B(E - \mugeneral_N^{\omega}) \, \mathcal N_{N,V}^{\omega} (\mathrm{d} E) + \limsup\limits_{N \to \infty} \int\limits_{(E_2,\infty)} \mathcal B(E - \mugeneral_N^{\omega})  \, \mathcal N_{N,V}^{\omega} (\mathrm{d} E) \ .
            \end{align*}
            Using the fact that the function $\mathcal B$ is monotonically decreasing, integration by parts, and the inequality $\mathcal N_{N,\RP}^{\mathrm{I},\omega}(E) \le \pi^{-1} E^{1/2}$ for $\mathds P$-almost all $\omega \in \Omega$, for all $E \ge 0$, and all $N \in \mathds N$ (see, e.g., \cite[Remark 3.2.2]{pechmanndiss} for more details regarding the last inequality), we conclude
            \begin{align*}
                & \int\limits_{(E_2,\infty)} \mathcal B(E - \mugeneral_N^{\omega})  \, \mathcal N_{N,V}^{\omega} (\mathrm{d} E) \\
                & \quad \le \lim_{E_3 \to \infty} \Big[ \mathcal B(E_3 - \varepsilon/2) \,\mathcal N_{N,V}^{\mathrm{I},\omega} (2E_3) + \beta \int_{E_2}^{E_3} \mathcal N_{N,V}^{\mathrm{I}, \omega} ( E) \big[ \mathcal B(E- \varepsilon/2) \big]^{2} \e^{\beta(E - \varepsilon/2)} \, \mathrm{d} E \Big] \\
                & \quad \le \beta\pi^{-1} \int_{E_2}^{\infty} E^{1/2} \big[ \mathcal B(E- \varepsilon/2) \big]^{2} \e^{\beta(E - \varepsilon/2)} \, \mathrm{d} E
            \end{align*}
            for all but finitely many $N \in \mathds N$, which converges to zero in the limit $E_2 \to \infty$.
            Thus, with $\cutofffunctioneins{\varepsilon}{E_2}$ as in \eqref{definition cutofffunction eins}, with Lemma~\ref{Lemma A.2}, and since $\mathcal B(\varepsilon/2) \le 2/(\beta \varepsilon)$,
            \begin{align*}
                & \limsup\limits_{N \to \infty} \int\limits_{(\varepsilon,\infty)} \mathcal B(E - \mugeneral_N^{\omega})  \, \mathcal N_{N,V}^{\omega} (\mathrm{d} E) \le \lim\limits_{E_2 \to \infty} \limsup\limits_{N \to \infty} \int\limits_{(\varepsilon,E_2]} \mathcal B(E - \mugeneral_N^{\omega}) \, \mathcal N_{N,V}^{\omega} (\mathrm{d} E) \\
                & \quad \le \lim\limits_{E_2 \to \infty} \int\limits_{\mathds R} \cutofffunctioneins{\varepsilon}{E_2}(E) \, \mathcal B(E) \, \mathcal N_{\infty,V}(\mathrm{d} E) \le \int\limits\limits_{(\varepsilon,\infty)} \mathcal B(E) \, \mathcal N_{\infty,V}(\mathrm{d} E) + \dfrac{2}{\beta \varepsilon} \mathcal N_{\infty,V}^{\mathrm{I}}(\varepsilon) \ .
            \end{align*}

            Next, we have
            \begin{align*}
                \liminf\limits_{N \to \infty} \int\limits\limits_{(\varepsilon,\infty)} \mathcal B(E - \mugeneral_N^{\omega})  \, \mathcal N_{N,V}^{\omega} (\mathrm{d} E) & \ge \liminf\limits_{N \to \infty} \int\limits_{\mathds R} \cutofffunctioneins{\varepsilon}{E_2}(E) \, \mathcal B(E - \mugeneral_N^{\omega}) \, \mathcal N_{N,V}^{\omega} (\mathrm{d} E) \\
                & \quad - \, \limsup\limits_{N \to \infty} \int\limits_{[\varepsilon/2,\varepsilon]} \cutofffunctioneins{\varepsilon}{E_2}(E) \, \mathcal B(E - \mugeneral_N^{\omega})  \, \mathcal N_{N,V}^{\omega} (\mathrm{d} E)
            \end{align*}
            for all $E_2 > \varepsilon$. Inequality \eqref{Lemma 2 Gen BEC HG 2} now follows after using Lemma~\ref{Lemma A.2}, inequality \eqref{upper bound NNI}, and the fact that $\mathcal B(E - \mugeneral_N^{\omega}) \le \mathcal B(\varepsilon/2 - \varepsilon/4)$ for all $E \ge \varepsilon/2$ and all but finitely many $N \in \mathds N$.
        \end{proof}

        \begin{lemma} \label{Lemma A.4}
            Let $V$ be a Poisson random potential of fixed strength. Moreover, let $\omega \in \widetilde \Omega$ and a sequence $(\mugeneral_N^{\omega})_{N \in \mathds N} \subset \mathds R$ be given. If $(\mugeneral_N^{\omega})_{N \in \mathds N}$ converges to zero, then we have
            \begin{align}
                \begin{split}
                    \lim\limits_{\varepsilon \searrow 0} \limsup\limits_{N \to \infty} \int\limits\limits_{(\varepsilon,\infty)} \mathcal B(E - \mugeneral_N^{\omega})  \, \mathcal N_{N,V}^{\omega}(\mathrm{d} E) & = \lim\limits_{\varepsilon \searrow 0} \liminf\limits_{N \to \infty} \int\limits\limits_{(\varepsilon,\infty)} \mathcal B(E - \mugeneral_N^{\omega})  \, \mathcal N_{N,V}^{\omega}(\mathrm{d} E) \\
                    & = \rho_{c,\RP} \ .
                \end{split}
            \end{align}
        \end{lemma}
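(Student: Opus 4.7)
The plan is to apply Lemma~\ref{Lemma A.3} as the workhorse and then to let $\varepsilon \searrow 0$ in the two resulting bounds. Concretely, Lemma~\ref{Lemma A.3} yields, for every $\varepsilon > 0$,
\begin{align*}
    \limsup\limits_{N \to \infty} \int\limits_{(\varepsilon,\infty)} \mathcal B(E - \mugeneral_N^{\omega})  \, \mathcal N_{N,V}^{\omega} (\mathrm{d} E) & \le \int\limits_{(\varepsilon,\infty)} \mathcal B(E) \, \mathcal N_{\infty,V}(\mathrm{d} E) + \dfrac{2}{\beta \varepsilon} \mathcal N_{\infty,V}^{\mathrm{I}}(\varepsilon), \\
    \liminf\limits_{N \to \infty} \int\limits_{(\varepsilon,\infty)} \mathcal B(E - \mugeneral_N^{\omega})  \, \mathcal N_{N,V}^{\omega} (\mathrm{d} E) & \ge \int\limits_{(\varepsilon,\infty)} \mathcal B(E) \, \mathcal N_{\infty,V}(\mathrm{d} E) - \dfrac{4}{\beta \varepsilon} \mathcal N_{\infty,V}^{\mathrm{I}}(2\varepsilon),
\end{align*}
so it remains to prove that the right-hand sides of both inequalities share the common limit $\rho_{c,V}$ as $\varepsilon \searrow 0$.

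For the principal terms, I would note that $\mathcal B(E)\,\mathds 1_{(\varepsilon,\infty)}(E)$ is nonnegative and increases monotonically to $\mathcal B(E)\,\mathds 1_{(0,\infty)}(E)$ as $\varepsilon \searrow 0$, whence the monotone convergence theorem together with definition~\eqref{Definition critical density} gives
\begin{align*}
    \lim\limits_{\varepsilon \searrow 0} \int\limits_{(\varepsilon,\infty)} \mathcal B(E) \, \mathcal N_{\infty,V}(\mathrm{d} E) = \int\limits_{\mathds R} \mathcal B(E) \, \mathcal N_{\infty,V}(\mathrm{d} E) = \rho_{c,V}.
\end{align*}
For the two error terms, I would invoke the Lifshitz tails property~\eqref{Lifshitz tail}, which says that $\mathcal N_{\infty,V}^{\mathrm{I}}(E) = \exp(-\nu \pi E^{-1/2}[1 + o(1)])$ as $E \searrow 0$. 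Since the exponential factor decays faster than any polynomial in $\varepsilon$, we obtain
\begin{align*}
    \lim\limits_{\varepsilon \searrow 0} \dfrac{1}{\varepsilon}\, \mathcal N_{\infty,V}^{\mathrm{I}}(\varepsilon) = 0 \quad \text{and} \quad \lim\limits_{\varepsilon \searrow 0} \dfrac{1}{\varepsilon}\, \mathcal N_{\infty,V}^{\mathrm{I}}(2\varepsilon) = 0.
\end{align*}

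Putting these ingredients together, both the $\limsup$ and the $\liminf$ are sandwiched between quantities that converge to $\rho_{c,V}$, yielding the claimed common limit. There is no substantive obstacle in this argument; the only subtlety is simply to invoke Lemma~\ref{Lemma A.3} with the same fixed $\omega \in \widetilde{\Omega}$ and the same sequence $(\hat\mu_N^{\omega})_{N \in \mathds N}$ before passing to the small-$\varepsilon$ limit, and to observe that the Lifshitz-tail decay is fast enough to kill the prefactor $1/\varepsilon$.
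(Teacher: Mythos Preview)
Your proof is correct and follows essentially the same route as the paper: apply Lemma~\ref{Lemma A.3}, let $\varepsilon \searrow 0$, use definition~\eqref{Definition critical density} for the principal term, and use the Lifshitz tails~\eqref{Lifshitz tail} to kill the error terms. The paper's version is terser (it just writes the chain of equalities and cites Lemma~\ref{Lemma A.3} and the finiteness of $\rho_{c,V}$), but your more explicit treatment of the monotone convergence and the $\varepsilon^{-1}\mathcal N_{\infty,V}^{\mathrm I}(\varepsilon)\to 0$ step is the same argument spelled out in full.
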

        \begin{proof}
            Since $\rho_{c,\RP} < \infty$, we conclude with Definition~\eqref{Definition critical density} and Lemma~\ref{Lemma A.3} that
            \begin{align*}
                \rho_{c,\RP} & = \lim\limits_{\varepsilon \searrow 0} \int\limits\limits_{(\varepsilon,\infty)} \mathcal B(E)  \, \mathcal N_{\infty,V}(\mathrm{d} E) = \lim\limits_{\varepsilon \searrow 0} \limsup\limits_{N \to \infty} \int\limits\limits_{(\varepsilon,\infty)} \mathcal B(E - \mugeneral_N^{\omega})  \, \mathcal N_{N,V}^{\omega}(\mathrm{d} E) \\
                & = \lim\limits_{\varepsilon \searrow 0} \liminf\limits_{N \to \infty} \int\limits\limits_{(\varepsilon,\infty)} \mathcal B(E - \mugeneral_N^{\omega})  \, \mathcal N_{N,V}^{\omega}(\mathrm{d} E) \ .
            \end{align*}
        \end{proof}

        \begin{lemma} \label{Lemma A.5}
            Let $(V_N)_{N \in \mathds N}$ be a Poisson random potential of fixed strength or with a strength that converges to infinity, and $\omega \in \widetilde \Omega$. Then the sequence $( \mu_{N,\RPN}^{\omega} )_{N \in \mathds N}$ has at least one accumulation point, and all accumulation points of $( \mu_{N,\RPN}^{\omega} )_{N \in \mathds N}$ are smaller than or equal to zero.
        \end{lemma}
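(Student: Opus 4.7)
The plan is to prove that $(\mu_{N,V_N}^\omega)_{N\in\mathds N}$ is bounded in $\mathds R$, so that Bolzano--Weierstrass supplies accumulation points, and then to read off from the uniform upper bound that each such point lies in $(-\infty,0]$.

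For the upper bound I would use the definitional inequality $\mu_{N,V_N}^\omega < E_{N,V_N}^{1,\omega}$ together with the deterministic Rayleigh--Ritz estimate from the proof of Proposition~\ref{Proposition 4.1}. That estimate uses a sine test function supported strictly inside the largest gap, where $V_N$ vanishes, and therefore produces $E_{N,V_N}^{1,\omega} \le \pi^2/(\llargest - \CSU)^2$ for every $\omega \in \widetilde\Omega$ with $\llargest > \CSU$, irrespective of the strength $\WWS_N$. Combining this with the $\mathds P$-almost sure divergence $\llargest \to \infty$ (which follows from Lemma~\ref{Lemma 3.2} via Borel--Cantelli, as noted in the proof of Lemma~\ref{Lemma 5.1}) gives $E_{N,V_N}^{1,\omega} \to 0$, and hence $\limsup_{N\to\infty} \mu_{N,V_N}^\omega \le 0$.

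For the lower bound I would argue that $\mu_{N,V_N}^\omega$ cannot tend to $-\infty$ along any subsequence, since the defining equation~\eqref{Gleichung Bedingung mu aequivalent} would then be violated. Assuming $-\beta\mu_{N,V_N}^\omega \ge \ln 2$, the elementary inequality $(\mathrm e^x - 1)^{-1} \le 2\mathrm e^{-x}$ for $x \ge \ln 2$ combined with~\eqref{Gleichung Bedingung mu aequivalent} gives
\begin{equation*}
\rho = \int_{(0,\infty)} \bigl(\mathrm e^{\beta(E - \mu_{N,V_N}^\omega)} - 1\bigr)^{-1}\, \mathcal N_{N,V_N}^\omega(\mathrm d E) \le 2\,\mathrm e^{\beta\mu_{N,V_N}^\omega} \int_{(0,\infty)} \mathrm e^{-\beta E}\, \mathcal N_{N,V_N}^\omega(\mathrm d E).
\end{equation*}
Because $V_N \ge 0$ implies the Weyl-type bound $\mathcal N_{N,V_N}^{\mathrm I,\omega}(E) \le \pi^{-1} E^{1/2}$ (already invoked in the proof of Proposition~\ref{Proposition 5.2}), integration by parts shows that the remaining integral is bounded above by the $N$- and $\omega$-independent constant $C(\beta) := \pi^{-1} \int_0^\infty \beta E^{1/2} \mathrm e^{-\beta E}\,\mathrm d E$, whence $\mu_{N,V_N}^\omega \ge \beta^{-1} \ln\bigl(\rho/(2C(\beta))\bigr)$.

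Putting the two halves together, $(\mu_{N,V_N}^\omega)$ is bounded, so Bolzano--Weierstrass yields at least one accumulation point, and the upper estimate places every such point in $(-\infty,0]$. The argument applies identically in the two regimes of Remark~\ref{remark RP}, since neither bound depends on $\WWS_N$. I do not anticipate any real obstacle; the one point needing mild care is to confirm that the $\mathds P$-almost sure event $\{\llargest \to \infty\}$ is among the properties already absorbed into $\widetilde\Omega$ (cf.\ Remark~\ref{Remark typical omega}), which holds by the standard Borel--Cantelli argument.
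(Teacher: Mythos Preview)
Your proposal is correct and follows essentially the same approach as the paper: bound $\mu_{N,V_N}^\omega$ above by $E_{N,V_N}^{1,\omega}\to 0$ via Proposition~\ref{Proposition 4.1} and the almost sure divergence of $\llargest$, and bound it below by controlling the Laplace transform $\int e^{-\beta E}\,\mathcal N_{N,V_N}^\omega(\mathrm d E)$ through the Weyl inequality $\mathcal N_{N,V_N}^{\mathrm I,\omega}(E)\le\pi^{-1}E^{1/2}$. The only difference is cosmetic: the paper factors out $(e^{-\beta\mu}-e^{-\beta E^1})^{-1}$ from the Bose sum to obtain $\mu_{N,V_N}^\omega\ge\beta^{-1}\ln\bigl(\rho/(\Phi_N^\omega+\rho e^{-\beta E^1})\bigr)$, whereas you use the pointwise bound $(e^x-1)^{-1}\le 2e^{-x}$ for $x\ge\ln 2$; both yield an $N$-independent lower bound and the argument is otherwise identical.
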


        \begin{proof}
            We follow \cite[Theorem 4.1]{lenoble2004bose} in large parts. For all $N \in \mathds N$ and $\omega \in \widetilde \Omega$, we define $\Phi_N^{\omega} := |\Lambda_N|^{-1} \sum_{j \in \mathds N} \e^{-\beta E_{N,\RPN}^{j, \omega}}$. 
            Let $N \in \mathds N$ and $\omega \in \widetilde \Omega$ be arbitrarily given. Using integration by parts and the fact that $\mathcal N_{N,\RPN}^{\mathrm{I},\omega}(E) \le \pi^{-1} E^{1/2}$ for $\mathds P$-almost all $\omega \in \Omega$, for all $E \ge 0$, and all $N \in \mathds N$,
            we obtain
            \begin{align*}
                0 < \Phi_N^{\omega} & = \int\limits_{(0,\infty)} \e^{-\beta E} \,\mathcal N_{N,\RPN}^{\omega}(\mathrm{d} E)
                \le \beta \int_{0}^{\infty} \pi^{-1} E^{1/2} \e^{-\beta E} \, \mathrm{d} E < \infty \ .
            \end{align*}
            We conclude
            \begin{align*}
                \rho & = \dfrac{1}{|\Lambda_N|} \sum\limits_{j \in \mathds N} \Big( \e^{\beta ( E_{N,\RPN}^{j,\omega} - \mu_{N,\RPN}^{\omega})} -1 \Big)^{-1} = \dfrac{1}{|\Lambda_N|} \sum\limits_{j \in \mathds N} \dfrac{\e^{-\beta E_{N,\RPN}^{j,\omega}}}{\e^{-\beta \mu_{N,\RPN}^{\omega}} - \e^{-\beta E_{N,\RPN}^{j,\omega}} } \\
                & \le \dfrac{1}{\e^{-\beta \mu_{N,\RPN}^{\omega}} - \e^{-\beta E_{N,\RPN}^{1,\omega}}} \dfrac{1}{|\Lambda_N|} \sum\limits_{j \in \mathds N} \e^{- \beta E_{N,\RPN}^{j,\omega}} = \dfrac{\e^{\beta \mu_{N,\RPN}^{\omega}}}{1 - \e^{-\beta (E_{N,\RPN}^{1,\omega} - \mu_{N,\RPN}^{\omega})}} \Phi_N^{\omega} 
            \end{align*}
            and
            \begin{align*}
                \beta^{-1} \ln \Bigg( \dfrac{\rho}{\Phi_N^{\omega} + \rho \e^{-\beta E_{N,\RPN}^{1,\omega}}} \Bigg) & \le \mu_{N,\RPN}^{\omega} \ .
            \end{align*}
            Therefore, we $\mathds P$-almost surely obtain
            \begin{align*}
                \beta^{-1} \ln \Bigg( \dfrac{\rho}{\beta \int_{0}^{\infty} \pi^{-1} E^{1/2} \e^{-\beta E} \, \mathrm{d} E + \rho } \Bigg) \le \liminf\limits_{N \to \infty} \mu_{N,\RPN}^{\omega}\le \limsup\limits_{N \to \infty} \mu_{N,\RPN}^{\omega} \le 0 \ .
            \end{align*}
           The last step follows from the facts that the ground-state energy $E_{N,\RPN}^{1,\omega}$ $\mathds P$-almost surely converges to zero, see Proposition~\ref{Proposition 4.1} and an appropriate version of Lemma~\ref{Lemma 3.2} (in combination with the Borel-Cantelli lemma), and that $\mu_{N,\RPN}^{\omega}< E_{N,\RPN}^{1,\omega}$ for $\mathds P$-almost all $\omega \in \Omega$ and for all $N \in \mathds N$. Thus, the sequence $( \mu_{N,\RPN}^{\omega} )_{N \in \mathds N}$ $\mathds P$-almost surely is bounded and accordingly has at least one accumulation point. In addition, this shows that $\mathds P$-almost surely every accumulation point of $( \mu_{N,\RPN}^{\omega} )_{N \in \mathds N}$ is equal to or smaller than zero.
        \end{proof}

        \begin{lemma} \label{Lemma A.6}
            Let $(V_N)_{N \in \mathds N}$ be a Poisson random potential of fixed strength or with a strength that converges to infinity. If $\rho \ge \rho_{c,\RP_1}$, then $\lim_{N \to \infty} \mu_{N,\RPN}^{\omega} = 0$ $\mathds P$-almost surely.
        \end{lemma}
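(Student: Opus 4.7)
The plan is as follows. By Lemma~\ref{Lemma A.5}, for $\mathds P$-almost all $\omega$ the sequence $(\mu_{N,\RPN}^{\omega})_{N \in \mathds N}$ is bounded, and every one of its accumulation points lies in $(-\infty, 0]$. It therefore suffices to rule out strictly negative accumulation points. I would argue by contradiction: suppose on an event of positive probability there exist some $\mu^* < 0$ and a subsequence $(N_k)_{k \in \mathds N}$ with $\mu_{N_k, \RP_{N_k}}^{\omega} \to \mu^*$.

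The first key step is a monotonicity comparison. Because $u \ge 0$ and $\WWS_N$ is nondecreasing (and $V_N = V_1$ in the fixed-strength case), one has $V_N(\omega, \cdot) \ge V_1(\omega, \cdot)$ pointwise on $\Lambda_N$, so $H_{N,\RPN}^{\omega} \ge H_{N,V_1}^{\omega}$ in the quadratic form sense on $\mathrm H_0^1(\Lambda_N)$. The min-max principle then yields $E_{N,\RPN}^{j,\omega} \ge E_{N,V_1}^{j,\omega}$ for every $j \in \mathds N$, and since $\mathcal B$ is decreasing on $(0,\infty)$, the constraint~\eqref{Gleichung Bedingung mu aequivalent} gives
\begin{equation*}
\rho = \int\limits_{(0,\infty)} \mathcal B(E - \mu_{N,\RPN}^{\omega}) \, \mathcal N_{N,\RPN}^{\omega}(\mathrm d E) \le \int\limits_{(0,\infty)} \mathcal B(E - \mu_{N,\RPN}^{\omega}) \, \mathcal N_{N,V_1}^{\omega}(\mathrm d E).
\end{equation*}

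The second step is to pass to the limit along $(N_k)$ on the right-hand side. For all but finitely many $k$, one has $\mu_{N_k, \RP_{N_k}}^{\omega} \le \mu^*/2 < 0$, so $E \mapsto \mathcal B(E - \mu_{N_k, \RP_{N_k}}^{\omega})$ is continuous and uniformly bounded on $[0,\infty)$ by $\mathcal B(-\mu^*/2)$, and converges uniformly on compact sets to $\mathcal B(\cdot - \mu^*)$. Using the cutoff $\cutofffunctioneins{\varepsilon}{E_2}$ defined in~\eqref{definition cutofffunction eins}, the vague convergence $\mathcal N_{N,V_1}^{\omega} \to \mathcal N_{\infty,V_1}$ delivers the desired limit for the cutoff integrals. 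The contribution from $(0,\varepsilon]$ is bounded by $\mathcal B(-\mu^*/2) \, \mathcal N_{\infty,V_1}^{\mathrm I}(2\varepsilon)$ via~\eqref{upper bound NNI}, which vanishes as $\varepsilon \searrow 0$ by the Lifshitz tails property~\eqref{Lifshitz tail}. The contribution from $[E_2+1,\infty)$ is controlled using $\mathcal N_{N,V_1}^{\mathrm I,\omega}(E) \le \pi^{-1} E^{1/2}$ together with the exponential decay of $\mathcal B(E - \mu^*/2)$, via the same integration-by-parts argument that underlies Lemma~\ref{Lemma A.3}. Sending $\varepsilon \searrow 0$ and $E_2 \nearrow \infty$ yields
\begin{equation*}
\rho \le \int\limits_{(0,\infty)} \mathcal B(E - \mu^*) \, \mathcal N_{\infty,V_1}(\mathrm d E).
\end{equation*}

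Finally, since $\mu^* < 0$, $\mathcal B(E - \mu^*) < \mathcal B(E)$ strictly for every $E > 0$, and the Lifshitz tails property~\eqref{Lifshitz tail} ensures $\mathcal N_{\infty,V_1}((0,\delta)) > 0$ for every $\delta > 0$, so
\begin{equation*}
\int\limits_{(0,\infty)} \mathcal B(E - \mu^*) \, \mathcal N_{\infty,V_1}(\mathrm d E) < \int\limits_{(0,\infty)} \mathcal B(E) \, \mathcal N_{\infty,V_1}(\mathrm d E) = \rho_{c,V_1},
\end{equation*}
contradicting the hypothesis $\rho \ge \rho_{c,V_1}$. I expect the main obstacle to be the uniform-in-$k$ tail control needed to exchange the limit and the integral when both the integrand and the measure depend on $k$; however, since $\mu^*$ is bounded away from zero this reduces to estimates already established in Lemmata~\ref{Lemma A.3}--\ref{Lemma A.4}. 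The essential new input is the monotonicity $V_N \ge V_1$, which is what makes the varying-strength case reduce cleanly to the limiting density of states of $V_1$.
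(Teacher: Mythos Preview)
Your proposal is correct and follows essentially the same route as the paper: argue by contradiction from a strictly negative accumulation point of $(\mu_{N,\RPN}^{\omega})_N$, use the monotonicity $V_N \ge V_1$ to pass from $\mathcal N_{N,\RPN}^{\omega}$ to $\mathcal N_{N,V_1}^{\omega}$, take the limit to land on $\mathcal N_{\infty,V_1}$, and conclude $\rho < \rho_{c,V_1}$. The only notable difference is that the paper freezes the chemical potential at $\mu_{\infty}^{\omega}/2$ for large $m$ and then invokes Lemma~\ref{Lemma A.1} directly, which spares you the cutoff-and-tail argument; your hands-on limit passage with $\cutofffunctioneins{\varepsilon}{E_2}$ works just as well but is a little more laborious.
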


        \begin{proof}
            Let an $\omega \in \widetilde \Omega$ be arbitrarily given. According to Lemma~\ref{Lemma A.5}, $\limsup_{N \to \infty} \mu_{N,\RPN}^{\omega} \le 0$. Suppose $\liminf_{N \to \infty} \mu_{N,\RPN}^{\omega} < 0$. Then there is a subsequence $(N_m)_{m \in \mathds N}$ of $(N)_{N \in \mathds N}$ and a constant $\mu^{\omega}_{\infty} < 0$ such that $\lim_{m \to \infty} \mu_{N_m,\RP_{N_m}}^{\omega} = \mu^{\omega}_{\infty}$. Because we have $\rho = \int_{(0,\infty)} \mathcal B(E - \mu_{N, \RPN}^{\omega})  \, \mathcal N_{N, \RPN}^{\omega}(\mathrm{d} E)$ for all $N \in \mathds N$, by integration by parts, and by Lemma~\ref{Lemma A.1},
            \begin{align*}
                \rho & = \lim\limits_{m \to \infty} \lim\limits_{a \to 0} \lim\limits_{b \to \infty} \int\limits_{[a,b]} \mathcal B(E - \mu_{N_m, \RP_{N_m}}^{\omega})  \, \mathcal N_{N_m, \RP_{N_m}}^{\omega}(\mathrm{d} E)\\
                & \le \lim\limits_{m \to \infty} \lim\limits_{a \to 0} \lim\limits_{b \to \infty} \Big[ \int\limits_{[a,b]} \mathcal N_{N_m, \RP_{N_m}}^{\mathrm{I}, \omega}(E-) \, ( - \mathcal B'(E - \mu_{N_m, \RP_{N_m}}^{\omega})\, \mathrm{d} E \\
                & \qquad \qquad \qquad \qquad + \, \mathcal N_{N_m, \RP_{N_m}}^{\mathrm{I}, \omega}(b+) \mathcal B(b - \mu_{N_m, \RP_{N_m}}^{\omega})  \Big]\\
                & \le \lim\limits_{m \to \infty} \lim\limits_{a \to 0} \lim\limits_{b \to \infty} \Big[ \int\limits_{[a,b]} \mathcal N_{N_m, \RP_1}^{\mathrm{I}, \omega}(E-) \, ( - \mathcal B'(E - \mu_{N_m, \RP_{N_m}}^{\omega})\, \mathrm{d} E +  \mathcal N_{N_m, \RP_1}^{\mathrm{I}, \omega}(2b) \mathcal B(b/2) \Big]\\
                & \le \lim\limits_{m \to \infty} \lim\limits_{a \to 0} \lim\limits_{b \to \infty} \Big[ \int\limits_{[a,b]} \mathcal B(E - \mu_{N_m, \RP_{N_m}}^{\omega})  \, \mathcal N_{N_m, \RP_1}^{\omega}(\mathrm{d} E) + \mathcal N_{N_m, \RP_1}^{\mathrm{I}, \omega}(a-) \mathcal B(a - \mu_{N_m, \RP_{N_m}}^{\omega}) \Big]\\
                & \le \lim\limits_{m \to \infty} \int\limits_{(0,\infty)} \mathcal B(E - \mu_{\infty}^{\omega}/2)  \, \mathcal N_{N_m, \RP_1}^{\omega}(\mathrm{d} E) = \int\limits_{(0,\infty)} \mathcal B(E - \mu_{\infty}^{\omega}/2) \, \mathcal N_{\infty,\RP_1}(\mathrm{d} E) \\
                & < \int\limits_{(0,\infty)} \mathcal B(E) \, \mathcal N_{\infty,\RP_1}(\mathrm{d} E) = \rho_{c,\RP_1} \ .
            \end{align*}
        \end{proof}

        We finally prove the $\mathds P$-almost sure occurrence of g-BEC.
 
        \begin{theorem} \label{Theorem A.7}
            Let $(V_N)_{N \in \mathds N}$ be a Poisson random potential of fixed strength or with a strength that converges to infinity. Then we have
            \begin{align}
                \mathds P \Bigg( \lim\limits_{\varepsilon \searrow 0} \liminf\limits_{N \to \infty} \dfrac{1}{N} \sum\limits_{j \in \mathds N : E_{N,\RPN}^{j,\omega} \le \varepsilon} n_{N,\RPN}^{j,\omega} \ge \dfrac{\rho - \rho_{c,\RP_1}}{\rho} \Bigg) = 1 \ .
            \end{align}
            In particular, g-BEC $\mathds P$-almost surely occurs if $\rho>\rho_{c,\RP_1}$.
        \end{theorem}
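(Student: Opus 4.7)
The plan is to turn the desired lower bound on the low-energy occupation mass into an upper bound on the high-energy tail, by exploiting the grand-canonical constraint $\rho = L_N^{-1} \sum_{j \in \mathds N} n_{N,\RPN}^{j,\omega}$ of \eqref{Gleichung Bedingung mu aequivalent}. Splitting the sum at $\varepsilon$ gives, for every $\omega \in \widetilde\Omega$ and every $N \in \mathds N$,
\begin{align*}
\frac{1}{N} \sum_{j: E_{N,\RPN}^{j,\omega} \le \varepsilon} n_{N,\RPN}^{j,\omega} = \frac{1}{\rho}\bigg(\rho - \int\limits_{(\varepsilon,\infty)} \mathcal B(E - \mu_{N,\RPN}^{\omega})\, \mathcal N_{N,\RPN}^{\omega}(\mathrm d E)\bigg),
\end{align*}
so it suffices to prove that $\mathds P$-almost surely
\begin{align*}
\lim_{\varepsilon \searrow 0}\,\limsup_{N \to \infty} \int\limits_{(\varepsilon,\infty)} \mathcal B(E - \mu_{N,\RPN}^{\omega})\, \mathcal N_{N,\RPN}^{\omega}(\mathrm d E) \le \rho_{c,V_1}.
\end{align*}
The case $\rho < \rho_{c,V_1}$ is trivial because then $(\rho - \rho_{c,V_1})/\rho < 0$ while the left-hand side of the claim is nonnegative, so I restrict to $\rho \ge \rho_{c,V_1}$; in this regime Lemma~\ref{Lemma A.6} provides $\mu_{N,\RPN}^{\omega} \to 0$ $\mathds P$-almost surely.

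If $V$ is of fixed strength, Lemma~\ref{Lemma A.4} applied to the sequence $(\mu_{N,V}^{\omega})_{N \in \mathds N}$ directly yields the displayed inequality with equality; this closes the first part.

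For $(V_N)_{N \in \mathds N}$ with $\WWS_N \nearrow \infty$, I would reduce to the $V_1$ bound via the pointwise domination $V_N(\omega,\cdot) \ge V_1(\omega,\cdot)$, which follows from $u \ge 0$ and $\WWS_N \ge \WWS_1$. The min--max principle then yields $E_{N,V_N}^{j,\omega} \ge E_{N,V_1}^{j,\omega}$ for all $j \in \mathds N$ and $\omega \in \widetilde \Omega$, hence $\mathcal N_{N,V_N}^{\mathrm I,\omega}(E) \le \mathcal N_{N,V_1}^{\mathrm I,\omega}(E)$ for all $E$. Using integration by parts (as in the proof of Lemma~\ref{Lemma A.3}) together with the monotonicity $-\mathcal B'(\cdot - \mu) > 0$, I replace $\mathcal N_{N,V_N}^{\omega}$ by $\mathcal N_{N,V_1}^{\omega}$ under the integral; the boundary term at infinity is controlled uniformly in $N$ via $\mathcal N_{N,V_1}^{\mathrm I,\omega}(E) \le \pi^{-1} E^{1/2}$ and the convergence $\mu_{N,V_N}^{\omega} \to 0$. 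Applying Lemma~\ref{Lemma A.3} now (for $V_1$, with $\mugeneral_N^{\omega} := \mu_{N,V_N}^{\omega}$) bounds the integral by $\int_{(\varepsilon,\infty)} \mathcal B(E)\, \mathcal N_{\infty,V_1}(\mathrm d E) + 2(\beta\varepsilon)^{-1} \mathcal N_{\infty,V_1}^{\mathrm I}(\varepsilon)$. Letting $\varepsilon \searrow 0$, the Lifshitz tail asymptotics~\eqref{Lifshitz tail} kill the remainder and identify the first integral as $\rho_{c,V_1}$ by definition~\eqref{Definition critical density}, finishing the proof.

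The main obstacle I anticipate is the bookkeeping around the integration-by-parts step in the variable-strength case: one must verify that the boundary terms at $\varepsilon$ and at $+\infty$ are harmless uniformly in $N$, and that the application of Lemma~\ref{Lemma A.3} is legitimate despite $\mu_{N,V_N}^{\omega}$ being defined through $V_N$ rather than through $V_1$. The key observation that resolves this is that after the domination step $\mathcal N_{N,V_N}^{\mathrm I,\omega} \le \mathcal N_{N,V_1}^{\mathrm I,\omega}$, the only dependence on $V_N$ remaining is through the sequence $(\mu_{N,V_N}^{\omega})$, and Lemma~\ref{Lemma A.6} ensures this sequence converges $\mathds P$-almost surely to zero, which is the only hypothesis on $(\mugeneral_N^{\omega})$ required by Lemmata~\ref{Lemma A.3} and~\ref{Lemma A.4}.
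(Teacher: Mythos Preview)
Your proposal is correct and follows essentially the same route as the paper: reduce via the grand-canonical constraint to bounding the high-energy tail, use Lemma~\ref{Lemma A.6} to get $\mu_{N,\RPN}^{\omega}\to 0$, dominate $\mathcal N_{N,\RPN}^{\mathrm I,\omega}\le \mathcal N_{N,\RP_1}^{\mathrm I,\omega}$ via integration by parts, and then invoke Lemma~\ref{Lemma A.3}/\ref{Lemma A.4} together with the Lifshitz tail~\eqref{Lifshitz tail} to identify the limit as $\rho_{c,\RP_1}$. The only cosmetic difference is that the paper treats both the fixed-strength and variable-strength cases simultaneously (since $\RPN=\RP_1$ in the former), and it records the swap boundary term at $\varepsilon$ explicitly as $2\,\mathcal B(\varepsilon/2)\limsup_{N}\mathcal N_{N,\RP_1}^{\mathrm I,\omega}(2\varepsilon)$ before killing it with~\eqref{upper bound NNI} and~\eqref{Lifshitz tail}; your ``bookkeeping'' concern is exactly this term, and your resolution is the correct one.
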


        \begin{proof}
           Suppose $\rho \ge \rho_{c,V_1}$. Using Lemma~\ref{Lemma A.4} and Lemma~\ref{Lemma A.6} and proceeding similarly as in the proof of Proposition~\ref{Proposition 5.2}, we conclude
            \begin{align*}
                & \lim\limits_{\varepsilon \searrow 0} \limsup\limits_{N \to \infty} \int\limits\limits_{(\varepsilon,\infty)} \mathcal{B}(E - \mu_{N,\RPN}^{\omega}) \, \, \mathcal N_{N,\RPN}^{\omega}(\mathrm{d} E) \\
                & \quad \le \lim\limits_{\varepsilon \searrow 0} \Big[ \limsup\limits_{N \to \infty} \lim\limits_{M \to \infty} \int\limits_{[\varepsilon,M]} \mathcal N_{N,\RPN}^{\mathrm{I},\omega}(E-) \, ( - \mathcal{B'}(E - \mu_{N,\RPN}^{\omega})) \ud E  \\
                & \qquad \qquad + \, \limsup\limits_{N \to \infty} \lim\limits_{M \to \infty} \mathcal N_{N,\RPN}^{\mathrm{I},\omega}(M+) \, \mathcal{B}(M - \mu_{N,\RPN}^{\omega})\Big] \\
                & \quad \le \lim\limits_{\varepsilon \searrow 0} \Big[ \limsup\limits_{N \to \infty} \lim\limits_{M \to \infty} \int\limits_{[\varepsilon,M]} \mathcal N_{N,\RP_1}^{\mathrm{I},\omega}(E-) \, ( - \mathcal{B'}(E - \mu_{N,\RPN}^{\omega})) \ud E  \\
                & \qquad \qquad + \, \limsup\limits_{N \to \infty} \lim\limits_{M \to \infty} \mathcal N_{N,\RP_1}^{\mathrm{I},\omega}(2 M) \, \mathcal{B}(M - \mu_{N,\RPN}^{\omega})\Big] \\
                & \quad \le \lim\limits_{\varepsilon \searrow 0} \Big[ \limsup\limits_{N \to \infty} \lim\limits_{M \to \infty} \int\limits_{[\varepsilon,M]} \mathcal{B}(E - \mu_{N,\RPN}^{\omega}) \, \, \mathcal N_{N,\RP_1}^{\omega}(\mathrm{d} E) + \limsup\limits_{N \to \infty} \mathcal N_{N,\RP_1}^{\mathrm{I},\omega}(\varepsilon-) \, \mathcal{B}(\varepsilon - \mu_{N,\RPN}^{\omega})\Big] \\
                & \quad \le \lim\limits_{\varepsilon \searrow 0} \Big[ \limsup\limits_{N \to \infty} \int\limits\limits_{(\varepsilon,\infty)} \mathcal{B}(E - \mu_{N,\RPN}^{\omega}) \, \, \mathcal N_{N,\RP_1}^{\omega}(\mathrm{d} E) + 2 \mathcal{B}(\varepsilon/2) \limsup\limits_{N \to \infty} \mathcal N_{N,\RP_1}^{\mathrm{I},\omega}(2 \varepsilon) \Big]\\
                & \quad = \rho_{c,\RP_1}
            \end{align*}
            $\mathds P$-almost surely. Therefore, we $\mathds P$-almost surely have
            \begin{align*}
                & \lim\limits_{\varepsilon \searrow 0} \liminf\limits_{N \to \infty} \int\limits\limits_{(0,\varepsilon]} \mathcal{B}(E - \mu_{N,\RPN}^{\omega}) \, \, \mathcal N_{N,\RPN}^{\omega}(\mathrm{d} E) \\
                & \quad \ge \rho - \lim\limits_{\varepsilon \searrow 0} \limsup\limits_{N \to \infty} \int\limits\limits_{(\varepsilon,\infty)} \mathcal{B}(E - \mu_{N,\RPN}^{\omega}) \, \, \mathcal N_{N,\RPN}^{\omega}(\mathrm{d} E) \\\ 
                & \quad \ge \rho - \rho_{c,\RP_1} \ .
            \end{align*}
        \end{proof}

    {\small
	\bibliographystyle{amsalpha}
	\bibliography{mybibfile}} 

\providecommand{\bysame}{\leavevmode\hbox to3em{\hrulefill}\thinspace}
\providecommand{\MR}{\relax\ifhmode\unskip\space\fi MR }
\providecommand{\MRhref}[2]{%
  \href{http://www.ams.org/mathscinet-getitem?mr=#1}{#2}
}
\providecommand{\href}[2]{#2}
\begin{thebibliography}{KPS19b}

\bibitem[Ber83]{van1983condensation}
{M.~van den} Berg, \emph{{On condensation in the free-boson gas and the
  spectrum of the Laplacian}}, J. Stat. Phys. \textbf{31} (1983), 623--637.

\bibitem[BL82]{van1982generalized}
{M.~van den} Berg and J.~T. Lewis, \emph{{On generalized condensation in the
  free boson gas}}, Physica A: Statistical Mechanics and its Applications
  \textbf{110} (1982), 550--564.

\bibitem[BLL86]{van1986general2}
{M.~van den} Berg, J.~T. Lewis, and M.~Lunn, \emph{{On the general theory of
  Bose--Einstein condensation and the state of the free boson gas}}, Helvetica
  Physica Acta \textbf{59} (1986), 1289--1310.

\bibitem[BLP86]{van1986general}
{M.~van den} Berg, J.~T. Lewis, and J.~V. Pul{\'e}, \emph{{A general theory of
  Bose--Einstein condensation}}, Helvetica Physica Acta \textbf{59} (1986),
  1271--1288.

\bibitem[FP92]{pastur1992spectra}
A.~Figotin and L.~A. Pastur, \emph{{Spectra of random and almost-periodic
  operators}}, Springer, 1992.

\bibitem[GHK05]{germinet2005localization}
F.~Germinet, P.~D. Hislop, and A.~Klein, \emph{{On localization for the
  Schr{\"o}dinger operator with a Poisson random potential}}, Comptes Rendus
  Mathematique \textbf{341} (2005), 525--528.

\bibitem[GHK07]{klein2007localization}
\bysame, \emph{{Localization for Schr{\"o}dinger operators with Poisson random
  potential}}, J. Eur. Math. Soc. \textbf{9} (2007), 577--607.

\bibitem[HS65]{hewitt1965real}
E.~Hewitt and K.~Stromberg, \emph{{Real and abstract analysis: a modern
  treatment of the theory of functions of a real variable}}, Springer, 1965.

\bibitem[JPZ09]{jaeck2009nature}
T.~Jaeck, J.~V. Pul{\'e}, and V.~A. Zagrebnov, \emph{{On the nature of
  Bose--Einstein condensation in disordered systems}}, J. Stat. Phys.
  \textbf{137} (2009), 19--55.

\bibitem[JPZ10]{jaeck2010nature}
\bysame, \emph{{On the nature of Bose--Einstein condensation enhanced by
  localization}}, J. Math. Phys. \textbf{51} (2010), 103302.

\bibitem[Kin93]{kingman1993poisson}
J.~F.~C. Kingman, \emph{{Poisson processes}}, Clarendon Press, 1993.

\bibitem[KL73]{kac1973bose}
M.~Kac and J.~M. Luttinger, \emph{{Bose--Einstein condensation in the presence
  of impurities}}, J. Math. Phys. \textbf{14} (1973), 1626--1628.

\bibitem[KL74]{kac1974bose}
\bysame, \emph{{Bose--Einstein condensation in the presence of impurities.
  II}}, J. Math. Phys. \textbf{15} (1974), 183--186.

\bibitem[KPS19a]{KPS18}
J.~Kerner, M.~Pechmann, and W.~Spitzer, \emph{{Bose--Einstein condensation in
  the Luttinger--Sy model with contact interaction}}, Ann. Henri Poincar{\'e}
  \textbf{20} (2019), 2101--2134.

\bibitem[KPS19b]{KPS182}
\bysame, \emph{{On Bose--Einstein condensation in the Luttinger--Sy model with
  finite interaction strength}}, J. Stat. Phys. \textbf{174} (2019),
  1346--1371.

\bibitem[KPS20]{KPSConditionTypeOneBEC2020}
\bysame, \emph{{On a condition for type-I Bose--Einstein condensation in random
  potentials in $d$ dimensions}}, J. Math. Pures Appl. \textbf{143} (2020),
  287--310.

\bibitem[KV14]{VeniaminovKlopp14}
F.~Klopp and N.~A. Veniaminov, \emph{{Interacting electrons in a random medium:
  a simple one-dimensional model}}, arxiv:1408.5839 (2014).

\bibitem[LMW03]{leschke2003survey}
H.~Leschke, P.~M{\"u}ller, and S.~Warzel, \emph{{A survey of rigorous results
  on random Schr{\"o}dinger operators for amorphous solids}}, Markov Processes
  and Related Fields \textbf{9} (2003), 729--760.

\bibitem[LP18]{last2018lectures}
G.~Last and M.~Penrose, \emph{Lectures on the poisson process}, vol.~7,
  Cambridge University Press, 2018.

\bibitem[LPZ04]{lenoble2004bose}
O.~Lenoble, L.~A. Pastur, and V.~A. Zagrebnov, \emph{{Bose--Einstein
  condensation in random potentials}}, Comptes Rendus Physique \textbf{5}
  (2004), 129--142.

\bibitem[LS73a]{luttinger1973bose}
J.~M. Luttinger and H.~K. Sy, \emph{{Bose--Einstein condensation in a
  one-dimensional model with random impurities}}, Phys. Rev.~A \textbf{7}
  (1973), 712--720.

\bibitem[LS73b]{luttinger1973low}
\bysame, \emph{{Low-lying energy spectrum of a one-dimensional disordered
  system}}, Phys. Rev.~A \textbf{7} (1973), 701--712.

\bibitem[LW79]{LanWil79}
L.~J. Landau and I.~F. Wilde, \emph{{On the {B}ose--{E}instein condensation of
  an ideal gas}}, Commun. Math. Phys. \textbf{70} (1979), 43--51.

\bibitem[LZ07]{LenobleZagrebnovLuttingerSy}
O.~Lenoble and V.~A. Zagrebnov, \emph{{Bose--{E}instein condensation in the
  {L}uttinger--{S}y model}}, Mark.~Proc.~Rel.~Fields \textbf{13} (2007),
  441--468.

\bibitem[Pec19]{pechmanndiss}
M.~Pechmann, \emph{{Bose--Einstein condensation in random potentials}}, PhD
  thesis, Fern\-Universität in Hagen, 2019.

\bibitem[RS78]{reed1978analysisoperators}
M.~Reed and B.~Simon, \emph{{Analysis of operators}}, {Methods of modern
  mathematical physics}, vol.~4, Academic Press, 1978.

\bibitem[Sto95]{stolz1995localization}
G.~Stolz, \emph{{Localization for random Schr{\"o}dinger operators with Poisson
  potential}}, Annales de l'IHP Physique th{\'e}orique \textbf{63} (1995),
  297--314.

\bibitem[SW16]{seiringer2016decay}
R.~Seiringer and S.~Warzel, \emph{{Decay of correlations and absence of
  superfluidity in the disordered Tonks--Girardeau gas}}, New J. Phys.
  \textbf{18} (2016), 035002.

\bibitem[SYZ12]{SeiYngZag12}
R.~Seiringer, J.~Yngvason, and V.~A. Zagrebnov, \emph{{Disordered
  {B}ose--{E}instein condensates with interaction in one dimension}}, J. Stat.
  Mech.: Theory and Experiment \textbf{2012} (2012), P11007.

\bibitem[Szn98]{sznitman1998brownian}
A.-S. Sznitman, \emph{{Brownian motion, obstacles and random media}}, Springer,
  1998.

\end{thebibliography}

\end{document}